\documentclass[runningheads]{llncs}
\usepackage{tikz}
\usepackage{listings}
\usepackage{xcolor}
\usepackage{xspace}
\usepackage{multirow}
\usepackage[textsize=tiny]{todonotes}
\usepackage{amsmath}
\usepackage{stmaryrd}
\usepackage{url}
\usepackage{paralist}
\usepackage[hidelinks]{hyperref}
\hypersetup{bookmarksdepth=2}
\usepackage[capitalize]{cleveref}
\usepackage{colortbl}
\usepackage{fp}
\usepackage{enumitem}
\usepackage{placeins}
\usepackage{float}
\usepackage{stfloats}
\usepackage[frozencache,cachedir=minted-cache]{minted}
\usepackage{amssymb}
\usepackage{pgfplots}
\usepackage{xfp}
\usepackage[numbers,sort&compress]{natbib}
\usepackage{booktabs}
\usepackage{paralist}
\usepackage{lineno}
\usepackage{wrapfig}
\usepackage{mathtools}
\usepackage{orcidlink}
\usepackage{tabularx}
\usepackage{longtable}

\AddToHook{cmd/appendix/before}{%
  \crefalias{section}{appendix}%
  \crefalias{subsection}{appendix}%
}
\definecolor{codehighlight}{RGB}{220, 255, 220}
\setminted{breaklines, frame=none, framesep=3mm,  %
           mathescape=true, escapeinside=||, highlightcolor=codehighlight,fontsize=\scriptsize} %
\definecolor{operator}{RGB}{0, 102, 187}

\crefname{figure}{Figure}{Figures}
\crefname{table}{Table}{Tables}
\crefname{example}{Example}{Examples}
\crefname{section}{Section}{Sections}
\crefname{definition}{Definition}{Definitions}

\usetikzlibrary{positioning, arrows.meta, matrix, calc}

\definecolor{codegray}{rgb}{0.5,0.5,0.5}
\definecolor{backcolour}{rgb}{0.95,0.95,0.92}

\lstdefinestyle{mystyle}{
    numberstyle=\tiny\color{codegray},
    basicstyle=\ttfamily\scriptsize,
    breakatwhitespace=false,
    keepspaces=true,
    showspaces=false,
    showstringspaces=false,
    numbers=left,
    numbersep=5pt,
    lineskip=0pt,
    aboveskip=0pt,
    belowskip=0pt,
    escapeinside={<@}{@>}
}

\tikzset{   every node/.style={minimum height=5mm, text height=1.5ex, text depth=.25ex},
            level distance=12mm}

\newcommand*{\extended}{}%
\newcommand{\ignore}[1]{}
\newcommand{\Digests}{{\mathcal{A}}}
\newcommand{\NDigests}[1]{{\mathcal{A}_{#1}}}
\newcommand{\D}{\mathcal{D}}

\newcommand{\sem}[1]{\llbracket #1 \rrbracket}
\newcommand{\semT}[1]{\llbracket #1 \rrbracket}

\newcommand{\Let}{\textbf{let}}

\newcommand{\In}{\textbf{in}}
\newcommand{\aSem}[1]{\left\llbracket #1 \right\rrbracket^\sharp}

\newcommand{\aSemDigest}[1]{\left\llbracket #1 \right\rrbracket^\sharp_{\Digests}}
\newcommand{\aSemNDigest}[2]{\left\llbracket #2 \right\rrbracket^\sharp_{\Digests_{#1}}}

\newcommand{\pthreadOnce}{\texttt{pthread\_once}\xspace}

\newcommand{\creationLockset}{creationLockset\xspace}
\newcommand{\descendantLockset}{descendantLockset\xspace}

\newcommand{\option}{\xspace\textbf{ opt}}
\newcommand{\oSome}{\textbf{Some }}
\newcommand{\oNone}{\textbf{None}\xspace}

\newcommand{\act}{\textsf{act}}
\newcommand{\create}{\textsf{create}}

\newcommand{\join}{\textsf{join}}
\newcommand{\lock}{\textsf{lock}}
\newcommand{\initMT}{\textsf{initMT}}
\newcommand{\unlock}{\textsf{unlock}}
\newcommand{\return}{\textsf{return}}

\newcommand{\initActRaw}{initMT}
\newcommand{\initAct}{\textsf{\initActRaw}}
\newcommand{\Pos}{\textsf{Pos}}
\newcommand{\Neg}{\textsf{Neg}}

\newcommand{\Nodes}{\mathcal{N} }
\newcommand{\Edges}{\mathcal{E} }
\newcommand{\Actions}{\mathcal{A}ct}
\newcommand{\Traces}{{\mathcal{T}}}

\newcommand{\TIDs}{\textsf{TID}}
\newcommand{\pTID}{\pi_{\textsf{tid}}}
\newcommand{\Barriers}{\mathcal{B}}

\newcommand{\Mutexes}{\mathcal{S}}
\newcommand{\Onces}{\mathcal{O}}

\newcommand{\onceStart}{\textsf{startO}}
\newcommand{\onceEnd}{\textsf{endO}}
\newcommand{\onceInit}{\textsf{initO}}
\newcommand{\onceRan}{\textsf{ran}}

\newcommand{\unique}{\textsf{unique}}

\newcommand{\Bool}{\textsf{Bool}}
\newcommand{\maintid}{\textsf{main}}

\newcommand{\false}{\textsf{false}}

\newcommand{\maycreate}{\textsf{may\_create}}
\newcommand{\mustanc}{\textsf{must\_anc}}
\newcommand{\semDigests}[1]{\aSem{#1}_\Digests}

\newcommand{\newAbs}{\new^\sharp}
\newcommand{\newDigests}{\new^\sharp_\Digests}
\newcommand{\newNDigests}[1]{\new^\sharp_{\Digests_{#1}}}

\newcommand{\init}{\textsf{init}}
\newcommand{\initAbs}{\init^\sharp}
\newcommand{\initDigests}{\init^\sharp_\Digests}
\newcommand{\initNDigests}[1]{\init^\sharp_{\Digests_{#1}}}

\newcommand{\new}{\textsf{new}}

\newcommand{\pthreads}{\textsc{Pthreads}\xspace}
\newcommand{\Goblint}{\textsc{Goblint}\xspace}
\newcommand{\GoblintAnon}{\textsc{Goblint}\xspace}
\newcommand{\SVCOMP}{\textsc{Sv-Comp}\xspace}

\newcommand{\ltDrawingDefs}{
    \tikzset{
        every node/.style={node distance=40pt and 30pt},
        programpoint/.style={circle,draw,font=\small},
        programpointsink/.style={programpoint,line width=0.5mm},
        programpointwide/.style={programpoint,node distance=40pt and 55pt},
        programpointnarrow/.style={programpoint,node distance=40pt and 15pt},
        programpointwidesink/.style={programpointwide,line width=0.5mm},
        edgelabel/.style={midway,font=\small,above=0.5mm}
    }
    \newcommand{\programo}[3]{
        \draw[->](##1)--(##3)node[edgelabel]{$##2$};
    }
    \newcommand{\programon}[4][]{
        \node[programpoint##1,right= of ##2](##4){};
        \draw[->](##2)--(##4)node[edgelabel]{$##3$};
    }
    \newcommand{\programond}[4][]{
        \node[programpoint##1,right= of ##2,dotted](##4){};
        \draw[->,dotted](##2)--(##4)node[edgelabel]{$##3$};
    }
    \newcommand{\createo}[3][]{
        \draw[-latex,blue](##2)--(##3)node[edgelabel,##1]{$\to_c$};
    }
    \newcommand{\joino}[3][]{
        \draw[-latex,brown](##2)--(##3)node[edgelabel,##1]{$\to_j$};
    }
    \newcommand{\mutexo}[5][]{
        \draw[-latex,red,##5](##2) to node[edgelabel,##1]{$\to_{##4}$} (##3);
    }
    \newcommand{\signalo}[5][]{
        \draw[-latex,green!50!black,##5](##2) to node[edgelabel,##1]{$\to_{##4}$} (##3);
    }
}

\ifdefined\extended
    \newcommand{\appref}[2]{\cref{#2}}
\else
    \newcommand{\appref}[2]{Appendix~#1 of the extended version~\cite{extended-version}}
\fi

\ifdefined\extended
    \newcommand{\apprefInline}[2]{\cref{#2}}
\else
    \newcommand{\apprefInline}[2]{the extended version~\cite[Appendix~#1]{extended-version}}
\fi

\ifdefined\extended
    \newcommand{\apprefInlineShort}[2]{\cref{#2}}
\else
    \newcommand{\apprefInlineShort}[2]{extended version~\cite[Appendix~#1]{extended-version}}
\fi

\begin{document}

\title{Beyond Locks and Thread \emph{ID}s:\\ Static Data Race Detection Off The Beaten Path}
\titlerunning{Static Data Race Detection Off The Beaten Path}

\ifdefined\extended
\subtitle{(Extended Version)}
\fi

\author{
Daniel Bund\inst{1}\orcidlink{0000-0002-8675-5070}\and
Julian Erhard\inst{2}\orcidlink{0000-0002-1729-3925}\and
Michael Petter\inst{1}\orcidlink{0009-0009-7868-651X}\and
Michael Schwarz\inst{3}\orcidlink{0000-0002-9828-0308}
}

\authorrunning{D. Bund, J. Erhard, M. Petter, and M. Schwarz}

\institute{
    Technical University of Munich, Garching, Germany\\
    \email{\{daniel.bund,michael.petter\}@tum.de} \and
     CISPA Helmholtz-Zentrum, Garching, Germany\\
     \email{julian.erhard@cispa.de} \and
     National University of Singapore, Singapore\\
     \email{m.schwarz@nus.edu.sg}
}

\maketitle              %

\begin{abstract}
	Maintaining an abstraction of the execution history of threads can improve the precision of data race detection
	in static analysis.
	Here, we extend the \emph{digest} framework to handle concurrency constructs and synchronization mechanisms
	that have been ignored in static race detection.
	We introduce mechanisms for the commonly used thread barriers, as well as \pthreadOnce, which allows to ensure
	that an action is executed only once.
	We also instantiate the framework with an abstraction of locksets held by ancestor threads.
    We propose a suite of litmus tests to evaluate analyses for these features and compare our
	implementation to state-of-the-art tools, finding that they lack support.
\keywords{data races, concurrency, static program analysis, software verification, abstract interpretation}
\end{abstract}

\section{Introduction}
Data races are a class of errors that is particularly hard to debug, partially due to the nondeterministic nature of the
scheduling, i.e., bugs not materializing in every execution.
To ensure correctness of code, particularly in domains where bugs are unacceptable, \emph{sound} static analyses have been proposed
that are \emph{guaranteed} to flag any potential data race. However, these analyses typically suffer from false positives.
For such tools to be practical, the number of false positives must remain manageable~\cite{ChristakisB16, JohnsonSMB13}.
Nevertheless, much of the literature on static data-race detection considers only mutexes and thread creation and joining as synchronization,
overlooking less common but equally valid ways to prevent conflicting accesses from occurring concurrently.
One such feature are barriers as provided by \textsc{Pthreads}:
\begin{figure}
    \centering
    \footnotesize
    \begin{minipage}[t]{5cm}
\begin{minted}[tabsize=1,linenos,fontsize=\scriptsize]{c}
int g;

pthread_barrier_t b;

void* worker(void* arg) {
  // Worker initialization

  pthread_barrier_wait(&b);

  int f = compute(g,arg); |\label{pthread:barrier:example:compute}|
  // Do something with f

  return NULL;
}

\end{minted}
            \end{minipage}\hspace{2em}\begin{minipage}[t]{5.5cm}
\begin{minted}[tabsize=2,linenos,firstnumber=last, fontsize=\scriptsize]{c}
int main() {
  pthread_t w1, w2; // Worker threads

  pthread_barrier_init(&b, NULL, 3);

  pthread_create(&w1, NULL, worker, "1");
  pthread_create(&w2, NULL, worker, "2");

  // Main initialization code
  g = 2; |\label{pthread:barrier:example:main:init}|

  pthread_barrier_wait(&b);
  // ...
}
\end{minted}
            \end{minipage}

    \caption{Program using \pthreads\ barriers.}\label{f:barrier}
\end{figure}

\begin{example}
	\Cref{f:barrier} is a program fragment using barriers.
	The main thread creates two workers, which compute with the value of a global variable $g$
	and their string argument. %
	The barrier $b$ is used to ensure that both workers can only perform their computation (line~\ref{pthread:barrier:example:compute})
	after the main thread has set the global in line~\ref{pthread:barrier:example:main:init}.
	Once all three threads have called wait for $b$,
	all of them can proceed.
	Assuming that the variable $g$ is only accessed in the three locations given in the program fragment,
	there is no data race:
	The write to $g$ in line~\ref{pthread:barrier:example:main:init} happens before all threads have reached the barrier,
	and the other two accesses to $g$ happen later.\label{e:barrier}
\end{example}

On top of synchronization by barriers, we also consider the \pthreadOnce feature library developers often rely on
(see the example \cref{f:onceComplete} in \cref{ss:once})
and two patterns combining reasoning about threads and synchronization by mutexes
(see the \emph{descendantLocksets} example in \cref{f:descendant-ls-example} in \cref{s:descendant-lockset},
as well as the
\emph{creationLocksets} examples
in \cref{f:creation-ls-example}).

We focus on the temporal aspect of race detection, providing sound methods to prove that two access instructions cannot happen in parallel.
This takes different forms: For barriers, our approach determines that
certain accesses must happen before others. For \pthreadOnce, this is combined with establishing that all accesses
protected by the same \pthreadOnce control variable happen in the same thread and can thus not race, even when their
specific order cannot be established.
When it comes to reasoning about threads and mutexes, in the \emph{descendantLocksets} case (\cref{s:descendant-lockset}),
it is once more must-happen-before relationships that are established, while for \emph{creationLocksets}
(\apprefInlineShort{F}{app:creation-lockset-long}),
it is only the fact that accesses are in fact ordered by critical sections even though they may not appear to be at first glance.
Broadly, these mechanisms fall into two categories: barriers, \pthreadOnce, and descendant locksets use additional
history information to establish must-happen-before relationships between accesses,
whereas creation locksets (\apprefInlineShort{F}{app:creation-lockset-long}) extend conventional lockset reasoning by recognizing
protection through critical sections spanning multiple threads.

Our approach is parametric in the other important ingredient of static race detection, namely a \emph{may-alias} analysis,
and can be combined with different such analyses to obtain different speed--precision tradeoffs.

We base our work on \emph{digest-driven abstract interpretation}. \emph{Abstract interpretation}~\cite{popl77} is a principled way to overapproximate the
concrete semantics of programs using abstract domains. \emph{Digests}~\cite{schwarz2023clustered} were originally designed as a mechanism to improve
the precision of thread-modular value analyses using abstract interpretation.
Digests allow an analysis to distinguish program points as well as the states of shared variables according to the execution history observed by the thread executing
the statement.
They thus generalize trace partitioning~\cite{Mauborgne05} to concurrent programs, providing \emph{concurrency-sensitivity}
that can be configured independently of the analysis domain.
In the result computed by an analysis with digests, potentially reachable accesses to shared variables are annotated with digests, i.e., an abstraction of the execution history observed by the thread in question.
This can be exploited to enhance the precision of sound static data race detection~\cite{SchwarzE26}:
When considering two accesses to a shared variable, both annotated with a digest, data races can be excluded when the two accesses cannot happen concurrently---as judged by their digests.

While previous work~\cite{schwarz2023clustered, hammerandnail, SchwarzE26} proposes digests for history information such as whether the execution is single-threaded or multi-threaded,
or which mutexes are locked by the current thread, libraries such as \textsc{Pthreads} offer further concurrency and synchronization primitives,
which have often been ignored.
In particular, we provide digests for barriers, a mechanism to synchronize multiple threads.
To do so, we extend the semantic framework underlying the digest framework to actions that involve more than two threads (\cref{s:preliminaries}).
We show how digests are used to argue about whether two arbitrary operations may possibly happen in parallel\footnote{While \emph{may happen concurrently} is perhaps the more accurate term,
we use \emph{may happen in parallel} for consistency with the literature on MHP analysis.}, which is not only a key ingredient of our
barrier digest but also paves the way towards further applications of the digest framework.

We show in \cref{ss:barriers} how a history abstraction, i.e., a digest, relating to the operations performed on the barrier, is used to deduce race freedom for the programs using barriers, such as \cref{e:barrier}.
There, the predicate $||^?$, introduced for checking whether two accesses to the same variable may race, is used for a different purpose in the definition of the barrier digest itself,
 to determine whether a sufficient number of threads may wait at the barrier at the same time.

\Cref{ss:once} proposes a digest for \pthreadOnce for more precise data-race detection.
\pthreadOnce is a feature used in particular by library developers to ensure that some code is executed exactly once.

Sometimes a thread locks a mutex before creating child threads and does not release it until certain protected operations have completed.
This pattern requires intricate handling beyond existing per-thread lockset analyses.
Partially inspired by~\citet{Lu2025}, we propose a way of handling such patterns with digests in \cref{s:descendant-lockset}.
Another pattern, where locks are held by a thread during the entire execution of its child is covered in \cref{s:creation-lockset}.
We implemented the proposed digests in an abstract interpreter. The experiments we conducted on a set of hand-crafted examples using these constructs are discussed in \cref{s:experiments}. %

\section{Digest-Driven Data Race Detection}\label{s:preliminaries}

At a high level, digest-driven data race detection~\cite{SchwarzE26} proceeds in two phases.
First, an abstract interpreter propagates abstract program states indexed by \emph{digests},
compact summaries of the execution history observed by the current thread, such as
its thread \emph{id}, currently held locks, or completed synchronization events.
Several digests, each tailored to a different aspect of concurrency, can be active
simultaneously and jointly index the abstract states.
Whenever the analysis reaches a shared-memory access, it records the access together
with its digest. Second, the race detector considers pairs of accesses that may refer
to the same memory location and of which at least one is a write. It applies a
may-happen-in-parallel predicate to each active digest: if any one of them establishes
from the aspect of the histories it records that the accesses cannot occur concurrently,
the pair is discarded; otherwise, it is reported as a potential data race. Thus,
digests provide the concurrency information
used both to partition abstract states during the analysis and to rule out infeasible
races afterward.

We next introduce the local trace semantics underlying the digest framework,
and formally define digests and the may-happen-in-parallel predicate.
The local trace semantics precisely describes concurrent programs
while abstracting away irrelevant details such as the order of local actions of independent threads,
which are differentiated in an interleaving semantics.
It has served as the justification for thread-modular analyses and the digest framework.
\emph{Observable} and \emph{observing} actions play a key role: they may be ordered relative to actions of other threads.
Previous work fixed observing actions to arity two, so features such as \pthreadOnce or barriers could not be expressed.
We directly present our generalization to observing actions of arbitrary arity.
Previous work encoded atomic global accesses using cumbersome \emph{atomicity mutexes}.
We instead allow operations to be both observable and observing, avoiding atomicity mutexes.

\subsection{Local Trace Semantics}\label{ss:lt}

A program is a set of labelled control-flow graphs (CFGs),
one for each thread entry point. Let $\Nodes$ be the CFG nodes (\emph{program points}),
$\Actions$ the actions, and $\Edges \subseteq \Nodes \times \Actions \times \Nodes$ the labelled CFG edges.
Program text can be translated into this representation using standard compiler techniques.

A local trace encodes the history of one \emph{ego} thread --- the thread whose perspective is taken --- and the parts of other threads' histories it can know.
The knowledge comes from \emph{observing} actions, which observe \emph{observable} actions of other threads.
Examples include locking (observing), unlocking (observable), and global-variable accesses (both).
\Cref{t:actions} summarizes observing and observable actions.
Further, there are \emph{local} actions (neither observing nor observable).

\begin{table*}[t]
    \caption{Observable and observing actions for primitives addressed in this paper.
    }\label{t:actions}
    \centering
    \begin{tabular*}{0.9\textwidth}{@{\extracolsep{\fill}}lll@{}}
        \toprule
        Feature & Observable Action & Observing Action \\
        \midrule
        Access to a global & $g=x$ / $x=g$ & $g=x$ / $x=g$ \\
        Mutexes & $\unlock(a)$ & $\lock(a)$ \\
        Joining & $\return\,x$ & $x' {=} \join(x'')$ \\
        Barriers &  $\texttt{barrier\_N\_record(b)}$ & $\texttt{barrier\_N\_check(b)}$ \\
        \pthreadOnce & $\onceEnd(o)$ / $\onceInit(o)$ & $\onceStart(o)$ \\
        \bottomrule
    \end{tabular*}
\end{table*}
A local trace contains one action-interleaved configuration sequence per appearing thread.
Each sequence follows a path through the CFG and is \emph{locally consistent}:
its local-variable reads, writes, and taken guards agree~\cite{schwarzthesis}.
This sequence is the thread's \emph{swim-lane}, ordered by the \emph{program order} $\to_{po}$.
For each non-main thread, a $\to_c$ edge connects its \texttt{create} action to the first node of its swim-lane.
Observing and observable actions induce further dependencies between configurations, possibly across threads,
ensuring global consistency.
For example---given mutex $m$ from the set of mutexes $\Mutexes$---there is a dependency $\to_{m}$ from an
\texttt{unlock(m)} to the next \texttt{lock(m)} operation, where each unlock of $m$ (and the initialization action) has at most one outgoing edge
$\to_m$ and each lock
has exactly one incoming edge $\to_m$ (for the first lock, this edge originates at the endpoint
of a special observable action $\initMT$\footnote{
This program-start action executes first and is observed when a mutex is first locked or a global first accessed.
We elide this purely technical action henceforth.

}). Thus, mutex operations are totally ordered, so at most one thread holds a mutex at a time.
Likewise, a $\to_g$ edge connects each access to global $g$ to the next, encoding atomic global accesses.
We revisit its relation to race reporting at the section's end.
In a local trace, then,
\begin{itemize}
    \item Each swim-lane is locally consistent;
    \item Each auxiliary order (e.g., creation order $\to_c$ and mutex orders $\to_m$) meets its requirements;
    \item The transitive closure of the combined program and auxiliary orders is acyclic,
        with a least element representing the main thread's initial configuration and a maximal element.
\end{itemize}
The maximal element is the ego thread's last configuration,
ensuring that the local trace represents one thread's perspective.
\cref{f:prog1} shows an example representation for a local trace of the program in \cref{f:barrier} with ego thread $w1$.

\begin{figure*}[t]
    \ltDrawingDefs
    \centering
    \scalebox{0.72}{
        \begin{tikzpicture}
            \tikzset{programpointbarrier/.style={programpoint,node distance=40pt and 90pt}}
            \node[programpoint](mainpp0){};
            \programon[wide]{mainpp0}{\initMT}{mainpp1}
            \programon[wide]{mainpp1}{\create(w1)}{mainpp2}
            \programon[wide]{mainpp2}{\create(w2)}{mainpp3}
            \programon{mainpp3}{g = 2}{mainpp4}
            \programon[barrier]{mainpp4}{\textsf{barrier\_3\_record}(b)}{mainpp5}

            \node[programpoint,below right=36pt and 42pt of mainpp2](w1pp0){};
            \programon[barrier]{w1pp0}{\textsf{barrier\_3\_record}(b)}{w1pp1}
            \programon[barrier]{w1pp1}{\textsf{barrier\_3\_check}(b)}{w1pp2}
            \programon[widesink]{w1pp2}{x = g}{w1pp3}

            \node[programpoint,below right=76pt and 42pt of mainpp3](w2pp0){};
            \programon[barrier]{w2pp0}{\textsf{barrier\_3\_record}(b)}{w2pp1}

            \createo[left=2mm, pos=0.8]{mainpp2}{w1pp0}
            \createo[left=2mm, pos=0.8]{mainpp3}{w2pp0}
            \signalo[right=1mm, pos=0.5]{mainpp5}{w1pp2}{b}{out=285,in=105,looseness=1.25}
            \signalo[below=1mm, pos=0.5]{w1pp1}{w1pp2}{b}{out=335,in=205,looseness=0.8}
            \signalo[right=1mm, pos=0.18]{w2pp1}{w1pp2}{b}{out=135,in=255,looseness=1.15}
            \mutexo[below=1.5mm, pos=0.55]{mainpp1}{mainpp3}{g}{out=60,in=120,looseness=0.8}
            \mutexo[above=2mm, pos=0.62]{mainpp4}{w1pp2}{g}{out=250,in=130,looseness=0.9}
        \end{tikzpicture}}
    \caption{Stylized local trace for \cref{f:barrier} with ego thread $w1$,
    assuming the first step of $\textsf{compute}(g,arg)$ is storing $g$ into $x$.
    Black arrows indicate the order of configurations within threads,
    red arrows indicate the order induced by accesses to $g$.
    Green arrows indicate the barrier order from the recorded arrivals at $b$ to the check, %
    blue arrows the order between a configuration creating a new thread and its first
    configuration.
    }\label{f:prog1}
\end{figure*}

Let $\Traces$ denote the program's local traces, defined inductively later.
For now, let $\init \subset \Traces$ contain the \emph{initial} local traces, each consisting only of the main thread's initial configuration.
Given a local trace $t$ and a CFG edge $e = (u,\act,v)$ where $\act$ is not observing,
we define the semantics by a function $\semT{e}(t): \Traces \to (\Traces \option)$ which either returns
$\oSome t'$ if $t$ can be extended to a new local trace $t'$ by executing $\act$ along $e$,
or $\oNone$ otherwise, e.g., if $t$ ends in a program point $u' \neq u$.
For $k$-ary observing actions $\act$, we define the semantics using a function
$\semT{e}(t_0,\ldots,t_{k-1}): \Traces^k \to (\Traces \option)$ which returns the resulting local trace
if the provided local traces can be combined into a new local trace.\footnote{
Thus, successor computation is deterministic; nondeterminism can instead be encoded by multiple CFG edges.
We use the curried version of $\semT{\cdot}$ interchangeably.
    }
If $\semT{e}(t_0,\ldots,t_{k-1}) = \oSome t'$, then $t_0,\ldots,t_{k-1}$ are \emph{compatible} w.r.t. the edge $e$ and each other.
Otherwise, they are \emph{incompatible}.

For example, $\semT{(u,\lock(a),v)} \, t_0 \, t_1$ returns $\oSome t'$ if
$t_0$ ends in $u$ and $t_1$ ends in an \texttt{unlock(a)}
action \emph{and} taking the union of these two local traces and adding an edge $\to_a$ from \texttt{unlock(a)}
to the newly introduced successor node of \texttt{lock(a)} yields a new local trace $t'$. Otherwise, it returns $\oNone$.
There is also a function $\new: \Nodes \to \Nodes \to \Traces \to (\Traces\option)$, where
$\new\,u\,u_1\,t = \oSome t'$ computes the initial local trace of the new thread, if there is an
edge $(u,\create\,u_1,u')$ in the CFG and $t$ ends in $u_1$, and $\oNone$ otherwise.
The set $\Traces$ is the least fixpoint obtained by applying $\semT{\cdot}$
and $\new$ to $\init$ and to traces generated along the way.
For abstract interpretations, an equivalent constraint-system view is more useful.
We use variables (\emph{unknowns}) $\relax[u]$ for each program point $u\in\Nodes$ and $\relax[\act]$ for observable actions $\act$.
The resulting constraint system then takes the following form:
\begin{equation}
    \begin{array}{lllr}
        [u_0] &\supseteq& \init& \\[0.3em]
        \relax[\act] &\supseteq& \semT{(u_1,\act,u_2)}(\relax[u_1])&
         (\text{for } (u_1,\act,u_2) \in\Edges, \act \text{ observable})\\[0.3em]
        \relax[u_2] &\supseteq& \semT{(u_1,\act,u_2)}(\relax[u_1]) \qquad\qquad&
         (\text{for } (u_1,\act,u_2) \in\Edges, \act \text{ not observing})\\[0.5em]
        \relax[u_2] &\supseteq& \semT{(u_1,\act,u_2)}(\relax[u_1],\relax[\act_0], \dots,\relax[\act_{k-2}])\span\\
         \span\span\span (\text{for } (u_1,\act,u_2) \in\Edges, \act \text{ of arity $k$},
          \act_0 \dots \act_{k-2} \text{ observed by } \act)\\[1em]
         \relax[u'] &\supseteq& \new(\relax[u_1],u_1,u')&
            (\text{for } (u_1,x = \create\,u',u_2) \in\Edges)
    \end{array}
\label{eq:local-trace-constraints}
\end{equation}
where $u_0$ is the main thread's initial program point and the functions $\semT{\cdot}$
and \new\ are lifted point-wise to sets of local traces.
For an edge $e$ with an action of arity $k$ and sets $T_i, 0 \leq i < k$ of local traces we define
$
\semT{e}(T_0,\ldots,T_{k-1}) \coloneqq \bigcup_{t_0\in T_0,\ldots,t_{k-1}\in T_{k-1}}
\begin{cases}
    \{t'\} & \text{if } \semT{e}(t_0,\ldots,t_{k-1}) = \oSome t'\\
    \emptyset & \text{otherwise}
\end{cases}.
$\\
In \cref{eq:local-trace-constraints}, observable-action edges have two constraints.
One propagates traces to the CFG successor unknown; the other records them in the observable action unknown.
In the concrete semantics, the right-hand sides for both are identical.
We leave local actions unspecified as our approach is parametric in them.

\paragraph{Data Races.}
Since local trace semantics totally orders all accesses to a global $g$ via $\to_g$ edges, its use for data-race detection may seem surprising.
\citet{SchwarzE26} show that it is possible: \emph{bidirectional trace compatibility} coincides with the intuitive notion
of two accesses to a global being unordered.
\begin{definition}
    Traces $t_0$ and $t_1$
    ending in nodes $u_0$ and $u_1$ with outgoing edges $(u_0,\act_0,u_0')$ and $(u_1,\act_1,u_1')$
    where $\act_0$ and $\act_1$ are accesses to the same global variable,
    are \emph{bidirectionally trace compatible}, if, with appropriately lifted $\sem{e}$,
    \[
    \begin{array}{lll}
        \exists t' \in \Traces&:& \oNone \neq \semT{(u_0,\act_0,u_0')}(t_0,\semT{(u_1,\act_1,u_1')}(t_1,t'))\land\\
        && \oNone \neq \semT{(u_1,\act_1,u_1')}(t_1,\semT{(u_0,\act_0,u_0')}(t_0,t')).
    \end{array}
    \]
\end{definition}
This means that the accesses can immediately follow each other in either order,
matching the intuitive notion of being unordered.
\begin{proposition}\label{prop:data-race-bidirectional-compat}
    Two accesses to a global variable $g$ are racy in the intuitive sense if and only if
    at least one is a write and
    traces $t_0$ and $t_1$ ending in the respective CFG predecessors exist, s.t.\ they
        are bidirectionally trace compatible.
\end{proposition}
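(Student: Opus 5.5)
The argument has two directions, and it needs a precise version of the ``intuitive sense.'' We fix it as follows. An interleaving (global) execution of the program reaches a state in which two threads (or one thread, for self-races across distinct thread instances) are simultaneously enabled at the CFG predecessors $u_0$ and $u_1$ of the two accesses, and at least one access is a write. We rely on the known correspondence between interleaving executions and local traces from the digest framework~\cite{SchwarzE26,schwarzthesis}. A global execution prefix yields, for each thread, the local trace obtained by restricting to the causal past of that thread's last configuration. Conversely, every local trace can be linearized into a global execution prefix that contains exactly its configurations, because the transitive closure of the orders is acyclic.

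\emph{``Only if'':} Take a global execution reaching a state where both accesses are enabled. Let $t_0$ and $t_1$ be the local traces of the two threads at that state, i.e.\ the causal pasts of their current configurations. Let $t'$ be the local trace of the last access to $g$ before that state; if there is none, use the trace of $\initMT$. In the interleaving, we can next execute $\act_1$ and then $\act_0$. The resulting local trace of the $\act_0$ thread is $\semT{(u_0,\act_0,u_0')}(t_0,\semT{(u_1,\act_1,u_1')}(t_1,t'))$, since the $\to_g$ edge of $\act_0$ comes from $\act_1$, whose $\to_g$ predecessor is the endpoint of $t'$. That trace is therefore $\neq\oNone$. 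The symmetric interleaving gives the other conjunct. We must check that these nested applications of $\semT{\cdot}$ really coincide with the causal pasts taken in the interleaving. This follows from the definition of the observing semantics, which takes the union of the traces and adds the new node and edge.

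\emph{``If'':} Suppose $t'$ witnesses bidirectional compatibility. Consider $s=\semT{(u_0,\act_0,u_0')}(t_0,\semT{(u_1,\act_1,u_1')}(t_1,t'))$. In $s$, the node ending $t_0$ and the node ending $t_1$ are both present. The only edges into the new $\act_0$ node are its program-order edge and the $\to_g$ edge from the $\act_1$ node, and the same holds for the $\act_1$ node. Linearize $s$ so that every configuration outside these two new nodes comes first. This is possible because these nodes are maximal, or immediately follow each other. The prefix before the two accesses is a global execution in which both threads sit at $u_0$ and $u_1$. Hence the accesses are simultaneously enabled, and they race if at least one is a write.

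The main obstacle lies in the ``only if'' direction and concerns the precise role of $t'$ and of compatibility. We must show that unions of the local traces $t_0$, $t_1$, $t'$ taken from a single global execution are always consistent. This means the swim-lanes must agree on shared threads, and thread identities and mutex orders must not conflict. We must also show that the union stays acyclic. For the first point we would argue that all three traces are sub-histories (downward-closed sets) of the same global execution graph, so their union is again downward closed and inherits acyclicity and the auxiliary-order requirements. We also need the second conjunct to use the same $t'$ as the first, which holds because $t'$ is simply the causal past of the last $g$-access before the racing state. The ``if'' direction needs only the linearization lemma, which we take as standard for local traces.
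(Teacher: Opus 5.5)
\paragraph{Gap in the ``if'' direction.} This direction never uses the second conjunct of bidirectional compatibility, and the step you skip is exactly where that conjunct is needed. You linearize $s=\semT{(u_0,\act_0,u_0')}(t_0,\semT{(u_1,\act_1,u_1')}(t_1,t'))$ with ``the two new nodes'' last. Nothing, however, prevents the node produced by $\act_1$ from already lying in $t_0$, i.e., in the causal past of the endpoint of $t_0$.

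Concretely, let $i_1$ perform $\act_1$ inside a critical section of $m$ and unlock $m$. Let $i_0$ then lock $m$ and reach $u_0$, with no further access to $g$ in between. Take $t_0$ to be this trace, $t_1$ the causal past of $i_1$'s configuration at $u_1$, and $t'$ the trace ending in the $\to_g$-predecessor of $\act_1$. Then $\semT{(u_1,\act_1,u_1')}(t_1,t')$ is a subtrace of $t_0$ and $s\neq\oNone$. Yet the accesses are ordered via $\to_m$, and no linearization places both access nodes after everything else.

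The repair is short, but it is the whole point of bidirectionality. If the $\act_1$-node were contained in $t_0$, the endpoint of $t'$ would already have its unique outgoing $\to_g$ edge inside $t_0$. Then $\semT{(u_0,\act_0,u_0')}(t_0,t')=\oNone$, contradicting the second conjunct. Any other continuation of $i_1$'s lane inside $t_0$ would make the union forming $s$ inconsistent, contradicting the first conjunct. Only after this argument is the $\act_0$-node the sole successor of the $\act_1$-node in $s$, and your linearization goes through.

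\paragraph{Reliance on an unestablished correspondence.} You formalize ``racy in the intuitive sense'' through an interleaving semantics. Both directions then lean on a two-way correspondence: causal pasts of execution prefixes lie in $\Traces$, and every local trace linearizes into an execution prefix. Your ``only if'' direction is fine modulo this lemma, but the paper defines no interleaving semantics. Whatever correspondence earlier work provides covers only observing actions of arity two. It would have to be proved anew for the $N$-ary $\textsf{barrier\_N\_check}$, whose $\to_b$ side conditions must be matched against a blocking \texttt{pthread\_barrier\_wait}. The same holds for $\onceStart$ together with the $\onceRan$ guards. So ``take as standard'' hides precisely the part affected by this paper's generalization.

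The paper avoids this by defining raciness inside the model: two accesses are racy if some local trace contains both of them unordered once the $\to_g$ edges induced by the accesses themselves are disregarded. Both directions then reduce to graph manipulations on local traces, a simplification of \cite[Theorem~1]{SchwarzE26} without atomicity mutexes. Since that argument never asks \emph{why} two nodes are ordered, the arity of observing actions is irrelevant. If you keep your interleaving formulation, you must supply the correspondence for the extended action set yourself.
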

\begin{proof}
    Similar to \cite[Theorem 1]{SchwarzE26}; see
    \apprefInlineShort{G}{app:prop1-prop2}.\qed
\end{proof}

\subsection{Digests and Digest-Driven Data Race Detection}\label{ss:digests}

\emph{Digests}~\cite{schwarzthesis,hammerandnail,schwarz2023clustered} abstract thread histories
and can refine abstract interpretations of concurrent programs.
They exploit that some thread histories are \emph{incompatible}, a notion transferable from local traces to digests.

Let $\Digests$ be the set of digests.
$\alpha_\Digests$ maps a local trace $t$ to its digest $\alpha_\Digests\, t \in \Digests$.
\footnote{We denote by $\gamma_\Digests\,A = \{ t \mid \alpha_\Digests\,t = A \}$ the concretization function
for a digest.}
We require digests to be computable inductively.
For the initial set of digests at the program start, we define
\begin{equation}
\initDigests = \{ \alpha_\Digests\,t \mid t \in \init\}.
\label{def:initSound}
\end{equation}
Further, we introduce a digest counterpart $\semDigests{u,\act} : \Digests^k \to (\Digests \option)$ of the semantics $\semT{u,\act,v}$
inheriting its arity from the concrete semantics but operating on digests instead of traces. We call these functions
right-hand side functions.
More concretely, for an edge $e = (u,\act,v)$ we demand $\forall t_0,\cdots,t_{k-1} \in \Traces$
\begin{equation}
    \begin{array}{lll}
        \textsf{map}\;\alpha_\Digests\;(\semT{e}(t_0,\cdots,t_{k-1}))
        \hat{\sqsubseteq} \semDigests{u,\act}(\alpha_\Digests\,t_0,\cdots,\alpha_\Digests\,t_{k-1})
    \end{array}
    \label{def:SemSound}
\end{equation}
where $\alpha_\Digests$ is mapped over the trace option returned by $\semT{e}(t_0,\cdots,t_{k-1})$, and $\hat{\sqsubseteq}$
is the partial order on $\Digests \option$ with $\oNone$ as bottom and all other elements incomparable.
\Cref{def:SemSound} says that $\aSemDigest{\cdot}$ soundly overapproximates $\semT{\cdot}$.

For newly created threads, a function $\newDigests : \Digests \to \Nodes \to \Nodes \to \Digests \option$
returns the digest of a thread starting at $u_1$, created at $u$ by a thread with digest $A$.
We demand that $\newDigests$ soundly abstracts $\new$:
\begin{equation}
    \forall t \in \Traces:
    \textsf{map}\;\alpha_\Digests\;(\new\;t\;u\;u_1) \hat{\sqsubseteq}
    \newDigests\;(\alpha_\Digests\;t)\;u\;u_1
    \label{def:NewSound}
\end{equation}
If the thread calling $\create$ has a successor digest, then $\newDigests$ must return a digest for the created thread:
\begin{equation}
    \forall A \in \Digests:
    \aSemDigest{u, x = \create(u_1)}(A) \neq \oNone \Longrightarrow
    \newDigests\;A\;u\;u_1 \neq \oNone
    \label{def:CreateNewConsistent}
\end{equation}
Digests satisfying \labelcref{def:initSound,def:SemSound,def:NewSound,def:CreateNewConsistent} are \emph{admissible}.

\paragraph{Digest-Driven Abstract Interpretation.}
An analysis can be refined using any admissible digest.
We quickly present the requirements for an analysis.
An analysis constraint system is obtained from the concrete constraint system by replacing sets of traces with an analysis domain
and concrete transfer functions with abstract ones.
Let $\D$ be the abstract domain used by the analysis, which forms a complete lattice with operations join $\sqcup$ and meet $\sqcap$.
A monotonic function $\gamma_\D: \D \to 2^{\Traces}$ provides the concretization of a given domain element.
The analysis must define abstract transfer functions $\aSem{\cdot}$, as well as abstract versions $\newAbs$ and $\initAbs$ of $\new$ and $\init$.
Given soundness of $\aSem{\cdot}$, $\newAbs$, and $\initAbs$ w.r.t.\ their concrete counterparts, related by $\gamma_\D$,
and the monotonicity of $\aSem{\cdot}$ and $\sem{\cdot}$,
it follows that the concretization of a solution of the analysis constraint system is greater than the least
solution of the concrete constraint system (see \cite[Theorem~1]{hammerandnail}).

When refining an analysis with a digest, the constraint system is adapted by splitting unknowns according to digests,
yielding a constraint system of form:
\[
    \begin{array}{lllr}
        [u_0, A_0] & \sqsupseteq& \initAbs& (\text{for } A_0 \in \initDigests) \\[0.5em]
        \relax[\act, A'] &\sqsupseteq& \aSem{(u_1,\act)}(\relax[u_1, A])&
         (\text{for } (u_1,\act,u_2) \in\Edges, \act \text{ observable}, \\
         & & &  \oSome A' = \aSemDigest{u_1,\act}\ A)\\[0.5em]
        \relax[u_2, A'] &\sqsupseteq& \aSem{(u_1,\act,u_2)}(\relax[u_1, A])& (\text{for } (u_1,\act,u_2) \in\Edges, \act \text{ not observing}, \\
         & & &  \oSome A' = \aSemDigest{u_1,\act}\ A)\\
                 \relax[u_2, A'] &\sqsupseteq& \aSem{(u_1,\act,u_2)}(\relax[u_1, A_1],\relax[\act_0, A_2], \dots,\relax[\act_{k-2}, A_k])\span\\
         \span\span\span (\text{for } (u_1,\act,u_2) \in\Edges, \act \text{ of arity $k$},
          \act_0 \dots \act_{k-2} \text{ observed by } \act, \\
          & & &  \oSome A' = \aSemDigest{u_1,\act}\ (A_1, A_2, \dots A_k))\\[1em]
         \relax[u', A'] &\sqsupseteq& \newAbs(\relax[u_1, A],u_1,u')\qquad\qquad&
            (\text{for } (u_1,x = \create\,u',u_2) \in\Edges, \\
            & & & \oSome A' = \newDigests A\;u_1\;u')
    \end{array}
\]

\begin{proposition}\label{prop:refine-sound}
    Refining a sound abstract interpretation of a concurrent program with an admissible digest yields
    a sound abstract interpretation.
\end{proposition}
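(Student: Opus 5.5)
The plan is to reduce the digest-refined constraint system to an ordinary, unrefined one by a product construction, and then reuse the soundness result already cited (\cite[Theorem~1]{hammerandnail}). Concretely, I introduce a \emph{refined concrete} constraint system whose unknowns are pairs $[u,A]$ and $[\act,A]$ with $A\in\Digests$. Its right-hand sides are the concrete ones from \cref{eq:local-trace-constraints}, restricted by digest: for a $k$-ary edge I put into $[u_2,A']$ the set $\semT{e}(T_1,\dots,T_k)\cap\gamma_\Digests A'$, where $T_i$ is the value of the corresponding refined unknown and $\oSome A'=\aSemDigest{u_1,\act}(A_1,\dots,A_k)$. The analogous restriction is used for $\init$ with $\initDigests$ and for $\new$ with $\newDigests$.

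\textbf{Step 1: the refined concrete system covers the original one.} Let $\sigma$ be the least solution of \cref{eq:local-trace-constraints} and $\sigma'$ the least solution of the refined concrete system. I show by fixpoint induction (equivalently, induction on the derivation of a trace in $\Traces$) that every $t\in\sigma[u]$ lies in $\sigma'[u,\alpha_\Digests t]$, and likewise for action unknowns. There are three cases.
\begin{itemize}
\item For initial traces this holds by \cref{def:initSound}.
\item If $t=\semT{e}(t_0,\dots,t_{k-1})$ with each $t_i\in\sigma'[\cdot,\alpha_\Digests t_i]$ by the induction hypothesis, then \cref{def:SemSound} and the flat order $\hat\sqsubseteq$ give $\aSemDigest{u_1,\act}(\alpha_\Digests t_0,\dots)=\oSome(\alpha_\Digests t)$. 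Hence the refined constraint for exactly this digest tuple exists and contributes $t$.
\item For thread creation I use \cref{def:NewSound} in the same way. \Cref{def:CreateNewConsistent} is only needed to guarantee that the $\newDigests$ constraint is instantiated. Strictly, \cref{def:NewSound} already yields $\oSome$ whenever $\new$ succeeds, so this case goes through directly.
\end{itemize}
It follows that $\sigma[u]\subseteq\bigcup_A\sigma'[u,A]$.

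\textbf{Step 2: the abstract refined system is sound for the refined concrete system.} The refined abstract system has exactly the same shape as the refined concrete one, with the same index set and the same digest side conditions. Its right-hand sides are $\aSem{\cdot}$, $\newAbs$ and $\initAbs$, each of which is sound w.r.t.\ its concrete counterpart under $\gamma_\D$. Intersecting with $\gamma_\Digests A'$ only shrinks the concrete side and keeps it monotone. So the generic soundness theorem for constraint systems (\cite[Theorem~1]{hammerandnail}) applies unknown-by-unknown: any abstract solution $\rho$ satisfies $\sigma'[x]\subseteq\gamma_\D(\rho[x])$ for every refined unknown $x$. Combining this with Step~1 gives $\sigma[u]\subseteq\bigcup_A\gamma_\D(\rho[u,A])$, which is the soundness claim for the refined analysis.

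\textbf{Main obstacle.} The delicate point is Step~1 for observing actions of arity $k$. The refined system only adds contributions for those digest tuples $(A_1,\dots,A_k)$ that are \emph{explicitly} enumerated together with the observed action unknowns $[\act_i,A_{i+1}]$. I must therefore check that each observed trace $t_i$ really sits in the action unknown indexed by its own digest. This requires that the observable-action constraint records $t_i$ under $\alpha_\Digests t_i$ itself and not under some predecessor digest. I would establish this as a separate invariant, using \cref{def:SemSound} for the observable edge. In doing so I treat actions that are both observable and observing uniformly, since under the new convention in this paper they appear in both kinds of constraints.
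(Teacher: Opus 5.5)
Your proposal is correct and takes essentially the paper's route: build the digest-split concrete constraint system, relate its least solution to the original one by fixpoint induction with a case split over the constraints using the admissibility conditions, and then check constraint by constraint that the concretization of a refined abstract solution solves the refined concrete system. The only difference is that you prove just the covering inclusion, whereas the paper (following Schwarz's thesis) establishes a full equivalence/bijection of least solutions; your inclusion is all that soundness requires.
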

\begin{proof}
    By giving a digest-refined concrete semantics and showing that soundness of the original abstract interpretation w.r.t.\ the original
    semantics carries over to the refined setting. %
    See \apprefInline{G}{app:prop1-prop2} for details.\qed
\end{proof}

As examples, we consider the lockset digest, which tracks each thread's currently held
locks (see \cref{f:locksets}), and the thread \emph{id} digest for dynamically created threads:
\begin{figure*}[t]
    \begin{minipage}[t]{.38\linewidth}\[
        \begin{array}{lll}
            \Digests = 2^\Mutexes\\
            \initDigests = \{\emptyset\}\\
            \newDigests\,S\,u_1 = \oSome \emptyset\\
            \semDigests{u,\act}\,(S_0,\cdots,S_i) = \oSome S_0\\
            \qquad\text{(other edge } (u,\act,v) \text{)}
        \end{array}
    \]
    \end{minipage}
    \begin{minipage}[t]{.57\linewidth}\[
        \begin{array}{rll}
            \semDigests{\_,\lock(a)}\,(S_0,S_1) &=& \begin{cases}
                \oNone & \text{if } a \in S_0\\
                \oSome (S_0 \cup \{a\}) & \text{otherwise}
                \end{cases}\\[1ex]
            \semDigests{\_,\unlock(a)}\,S &=&
                \begin{cases}
                    \oNone & \text{if } a \not\in S\\
                    \oSome (S \setminus \{a\}) & \text{otherwise}
                \end{cases}
        \end{array}
        \]
    \end{minipage}
    \caption{Right-hand sides for lockset digest~\cite{hammerandnail}, generalized to our setting.
     Underscores serve as wildcards that do not bind the matched value.
    }\label{f:locksets}
\end{figure*}
\citet{schwarz2023clustered} propose a digest computing thread \emph{id}s for \textsc{pthreads}.
It identifies threads by the sequence of create edges involved in their creation:
the main thread receives \textsf{main}, a thread it creates along edge $e$ receives $\textsf{main} \cdot e$, and so on.
To account for cycles in the creation history caused by recursively
creating threads or threads being created in a loop, non-definite thread \emph{id}s are used. %
For details, see \cite{schwarz2023clustered}. %
In the rest of the paper, we use generic thread \emph{id}s with the following assumptions.
A thread \emph{id} is information that remains constant for all operations of a given
concrete thread and can be projected from a digest $\NDigests{1}$ by $\pTID: \NDigests{1} \to \TIDs$.
We further assume the existence of functions:
\begin{itemize}
    \item $\unique: \TIDs\to\Bool$ checking whether, in any execution, at most one concrete thread can have this abstract thread \emph{id}.
    \item $\mustanc: \TIDs \to 2^{\TIDs}$ returning the set of \emph{must ancestors}, i.e., threads that must have been involved in the creation of the current thread.
    \item $\maycreate: \NDigests{1} \to 2^{\TIDs}$ returning the \emph{may-created} thread \emph{id}s, i.e., child-thread \emph{id}s the ego thread may have created given a digest.
\end{itemize}
The product of admissible digests is also an admissible digest \cite{hammerandnail}. %
This allows for a modular design where digests tracking different aspects of the history can seamlessly be combined during an analysis.

\paragraph{May-Happen-In-Parallel Predicate.} Digests support reasoning about the temporal
separation aspect of data races~\cite{SchwarzE26}.
For two accesses to race, they must potentially happen concurrently: by \cref{prop:data-race-bidirectional-compat},
there must be bidirectionally trace-compatible traces ending just before them.
Thus, compatibility of \emph{digests}, an abstraction of local-trace compatibility, can exclude data races.
Two digests $A_0$ and $A_1$ are compatible with an edge $e$ and each other if there are traces $t_0$ and $t_1$
with these digests that are compatible with $e$ and each other.
Thus, if two digests are not bidirectionally compatible, the corresponding traces are not either.
This yields a first definition of a \emph{may-happen-in-parallel} predicate
\[
||_{e_1, e_2}^? : \Digests \to \Digests \to \{ \textsf{false}, \top \}
\]
for checking whether two accesses to a variable may race,
where $e_1$ and $e_2$ are the CFG edges in question. We outline the detailed requirements for
$||_{e_1, e_2}^?$ in \apprefInline{A}{app:mhp}.
When edges are clear or irrelevant, we write $||^?$.
The answers form the lattice $\textsf{false} \sqsubseteq \top$.
For product digests, answers can be combined using meet ($\sqcap$).
For thread \emph{id}s, the $||^?$ definition arising from bidirectional digest compatibility is maximally precise.
However, this is not the case for all digests.
For instance, with the lockset digest, two actions with overlapping locksets cannot run concurrently, so the corresponding concrete accesses cannot race.
However, bidirectional digest compatibility cannot infer this, because overlapping locksets need not prevent the respective accesses to a global $g$ from directly following each other in the $\to_g$ order.
Thus, one may instead provide a more precise $||^?$ and justify it directly w.r.t. bidirectional \emph{trace} compatibility.
We generalize $||^?_{e_1, e_2}$ to show that two arbitrary-action edges $e_1,e_2$ cannot happen in parallel:
introduce a fresh global variable and replace the actions by writes to it, yielding edges $e_1',e_2'$.
Then $A_1 ||^?_{e_1, e_2} A_2$ is defined as $A_1 ||^?_{e_1', e_2'} A_2$.

\paragraph{Race Detection.} \citet{SchwarzE26} give a data race detection algorithm using digests.
First, one performs digest-driven abstract interpretation, collecting digests associated with accesses.
Then, for each shared memory location, one iterates over all access pairs.
For accesses at edges $e_1$ and $e_2$ with digests $A_1$ and $A_2$,
a potential race is flagged if at least one access is a write and
$A_1 ||^?_{e_1, e_2} A_2 = \top$.

\section{Synchronization via \pthreadOnce}\label{ss:once}
\begin{figure}[t]
    \begin{minipage}[t]{7cm}
\begin{minted}[tabsize=1]{c}
// Library file
pthread_once_t once;
schmilblick* dev; // Some device

void init_lib() {
  dev = malloc(size_of(schmilblick));
  init_schmilblick(dev);
}

void maluba(){
  // Ensure the library is initialized
  pthread_once(once,init_lib);
  assert(dev != NULL);
  // ...
}
\end{minted}
\end{minipage}\begin{minipage}[t]{7cm}
    \begin{minted}[tabsize=1]{c}
// Main file
void *t1(void *arg) {
  maluba();
  // ...
  return NULL;
}

int main() {
  pthread_t t;
  pthread_create(&t, t1, NULL);

  maluba();
  // ...
}
    \end{minted}
        \end{minipage}
\caption{\textsc{Pthread} program using \pthreadOnce}\label{f:onceComplete}
\end{figure}
\pthreads\ offers a mechanism for executing code at most once.
This enables resource initialization before thread access, such as opening a shared log file or database connection.
Application developers can place such code in \texttt{main}, but library developers do not control \texttt{main}
and instead rely on \pthreadOnce~\cite{butenhof1993programming}.
It provides a type for control variables (which we assume are from set $\mathcal{O}$) and a function
\texttt{pthread\_once(o,f)} which executes $f$ and sets $o$ if it is unset, and otherwise does nothing.
The system makes running $f$ and setting $o$ appear atomic.
Consider the program in \cref{f:onceComplete}.
\texttt{init\_lib} must allocate and initialize the device before any other library function runs.
\pthreadOnce ensures that \texttt{init\_lib} is called exactly once and that the device is initialized before use.

\pthreadOnce is often overlooked by static analyzers or abstracted very crudely.
With digests, it can be handled precisely without complicating the underlying analysis;
all handling happens inside the digest.
This excludes two race classes: within the called function and between initialization and later code.

\paragraph{Extension of the concrete semantics.}

We introduce a new observable action $\onceInit(o)$ to initialize a control
variable $o$ at program start\footnote{
    Unlike library developers, a whole-program analyzer can conceptually insert helper code at program start,
    so this action does not negate the need for \pthreadOnce.
} and compile \texttt{pthread\_} \texttt{once(o,f)} to a sequence calling $\onceStart(o)$, branching
on $\onceRan(o)$, executing the inlined code from $f$ if $\onceRan(o)$ is false, and finally performing $\onceEnd(o)$.
Here, $\onceInit(o)$ and $\onceEnd(o)$ are observable, while $\onceStart(o)$ is observing,
analogous to $\initAct/\unlock(a)$ and $\lock(a)$ for mutex $a$. The once order $\to_o$ follows
\begin{wrapfigure}{r}{0.6\textwidth}
    \centering
    \begin{minipage}[t]{0.25\textwidth}
\begin{minted}[tabsize=1]{c}
main:
 initO(o);
 create(t1);
 // ...
 startO(o);
 if (!ran(o))
    dev = malloc(42);
 endO(o);

 assert(dev != NULL);

\end{minted}
    \end{minipage}\begin{minipage}[t]{0.25\textwidth}
\begin{minted}[tabsize=2]{c}
t1:
 // ...
 startO(o);
 if (!ran(o))
    dev = malloc(42);
 endO(o);

 // ...

 assert(dev != NULL);
\end{minted}
    \end{minipage}
\caption{Program highlighting how \pthreadOnce is decomposed into individual steps.}\label{f:once}
\vspace{-2em}
\end{wrapfigure}
the mutex-order requirements from~\cref{ss:lt}.
The actions $\Pos(\onceRan(o))$ and $\Neg(\onceRan(o))$ guard whether the code associated with
\pthreadOnce is to be executed: the
former returns a local trace only when $\onceEnd(o)$
appears in the trace, and the latter only when it does not.\footnote{
Recursively calling \texttt{pthread\_once(o,f)} is not conclusively specified in the standard; here,
we decided to go for the behavior of \texttt{glibc}, which deadlocks upon such calls.
}
The program in \cref{f:once} corresponds to the one in \cref{f:onceComplete} after abstracting away
details and expressing \textsf{pthread\_once} as above.

A digest for this feature (\cref{f:onceASimple}) may then track two sets in a tuple $(A,C)$:
\begin{itemize}
    \item A set $A$ of \emph{active} control variables, i.e., those for which the ego thread is currently inside the section
    between $\onceStart(o)$ and $\onceEnd(o)$.
    \item A set $C$ of \emph{completed} control variables, i.e., those for which the associated fragment
    has surely been run, as a call to $\onceEnd(o)$ was observed.
\end{itemize}
The guards $\Pos(\onceRan(o))$ and $\Neg(\onceRan(o))$ check whether $o$ has already completed, i.e., whether it is in $C$.
$\onceStart(o)$ adds $o$ to the active set, and $\onceEnd(o)$ moves it from the active to the completed set.
\begin{figure*}[b]
    \begin{minipage}[t]{.4\linewidth}\[
        \begin{array}{lll}
        \Digests = 2^\Onces\times 2^\Onces\\[0.5ex]
		\initDigests = \{(\emptyset,\emptyset)\}\\[0.5ex]
        \newDigests\,(A,C)\,u\,u_1= \oSome (\emptyset,C)\\[0.5ex]
        \semDigests{\_, \Pos(\onceRan(o))}\,(A,C) =\\
        \qquad \begin{cases}
            \oNone & \text{if } \texttt{o} \not\in C\\
            \oSome (A,C) & \text{otherwise}
        \end{cases}\\
        \semDigests{u,\act}\,(A,C)\,\cdots\,(A_i,C_i) =\\
          \qquad \oSome (A,C) \quad \text{(other edge } (u,\act,v) \text{)}\qquad\;\;\\
        \end{array}%
    \]
    \end{minipage}%
    \begin{minipage}[t]{.5\linewidth}\[
        \begin{array}{lll}
            \semDigests{\_, \onceStart(o)}\,(A_0,C_0)\,(A_1,C_1) =\\
            \qquad \begin{cases}
                \oSome (A_0 \cup \{o\}, C_0 \cup C_1) & o \not\in A_0\\
                \oNone & \text{otherwise}
            \end{cases}\\
            \semDigests{\_, \onceEnd(o)}\,(A,C) = \oSome (A \setminus \{o\}, C \cup \{o\})\\[0.5ex]
            \semDigests{\_, \Neg(\onceRan(o))}\,(A,C) =\\
            \qquad \begin{cases}
                \oNone & \text{if } \texttt{o} \in C\\
                \oSome (A,C) & \text{otherwise}
            \end{cases}
        \end{array}
        \]
    \end{minipage}
    \caption{Right-hand sides for excluding races based on \pthreadOnce.}\label{f:onceASimple}
\end{figure*}
$||^{?}$ can then be defined as:
\[
(A,C)\;||^{?}\;(A',C') = \begin{cases}
    \textsf{false} & \text{ if } (A \cap (A' \cup C') \neq \emptyset ) \lor (A' \cap (A \cup C) \neq \emptyset)\\
    \top & \text{otherwise},
\end{cases}
\]
i.e., accesses may not race when a control variable is active during both accesses, or active in one and completed in the other.
\begin{example}
    Consider again the program in \cref{f:once}. An analysis using the digest from \cref{f:onceASimple}
    succeeds in establishing race freedom.
    $\hfill\qed$
\end{example}
\begin{proposition}\label{p:onceSound}
  The digest given above is admissible, and its predicate $||^?$ sound.
\end{proposition}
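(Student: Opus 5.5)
Since the right-hand sides of \cref{f:onceASimple} are deterministic, the natural approach is to fix a concrete abstraction function first and then check the admissibility conditions against it. For a local trace $t$ with ego thread $e$, we let $\alpha_\Digests\,t = (A_t, C_t)$. Here $A_t$ is the set of control variables $o$ for which the last $\onceStart(o)$ of $e$ in $t$ is not followed by an $\onceEnd(o)$ of $e$. $C_t$ is the set of $o$ for which some $\onceEnd(o)$ configuration lies in the causal past of $e$'s last configuration, i.e., precedes it in the transitive closure of program and auxiliary orders. Since a local trace contains exactly the causal past of its maximal element, this is simply the set of $o$ with an $\onceEnd(o)$ occurring anywhere in $t$. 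This definition matches the intended reading ``active'' / ``surely completed''. With it, (\ref{def:initSound}) is immediate: initial traces contain no once actions, so they map to $(\emptyset,\emptyset)$.

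For (\ref{def:SemSound}) we go through the edges in turn. Local actions, unrelated observable actions, and observing actions other than $\onceStart$ leave $A_t$ unchanged. They may enlarge $C_t$, for instance a $\lock$ or $\join$ can import an $\onceEnd(o)$ from another thread. The step I expect to be the main obstacle is exactly this gap: the generic rule returns $\oSome(A,C)$, while \cref{def:SemSound} demands \emph{equality} of digests under $\hat\sqsubseteq$. I would fix this by redefining $C_t$ as the set of $o$ whose $\onceEnd(o)$ reaches $e$ only through $\to_o$ edges and create edges along $e$'s own history, i.e., the $o$ learned via $\onceStart$ or inherited at creation. I would then check that this is still a sound under-approximation of ``$\onceEnd(o)$ lies in the past'', which is all that $||^?$ and the guards need.

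For $\Pos(\onceRan(o))$ and $\Neg(\onceRan(o))$, the concrete guard filters on the trace. One argues that the relevant $\onceEnd(o)$ is always observed via the immediately preceding $\onceStart(o)$, whose incoming $\to_o$ edge originates at the last $\onceEnd(o)$ or at $\onceInit(o)$. Hence membership in $C$ agrees with the concrete guard. For $\onceStart(o)$, the combined trace unions the ego trace with the trace ending in the observed $\onceEnd$ or $\onceInit$, so $C$ becomes $C_0\cup C_1$, and if that action is $\onceEnd(o)$, then $o \in C_1$. The case $o\in A_0$ yields $\oNone$ concretely, because we follow the glibc deadlock semantics. $\onceEnd(o)$ moves $o$ from $A$ to $C$. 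For (\ref{def:NewSound}) and (\ref{def:CreateNewConsistent}), the new thread has no active sections, its create edge imports the parent's past so it gets $C$, and $\newDigests$ is never $\oNone$.

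For soundness of $||^?$, we use \cref{prop:data-race-bidirectional-compat} and the generalization via fresh writes. Assume $t_0,t_1$ are bidirectionally compatible, and suppose $o\in A_0\cap(A_1\cup C_1)$. Because $\to_o$ totally orders $\onceStart/\onceEnd$ pairs like a mutex, $t_0$'s open section for $o$ has not been closed. Consider first $o\in A_1$. Then both threads hold ``$o$'' simultaneously, which contradicts the mutex-like total order, exactly as in the lockset argument of \cite{SchwarzE26}. Consider next $o\in C_1$. Then $t_1$ contains some $\onceEnd(o)$. Any merged trace containing both $t_0$ and $t_1$ must then order that $\onceEnd(o)$ relative to $t_0$'s open $\onceStart(o)$. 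The end is either after the start, which is impossible because the section is open in the combined trace, or before it. In the latter case the $\Neg(\onceRan(o))$ executed by $t_0$ contradicts the situation, or $t_0$ is in the $\Pos$ branch, where $A$ still contains $o$. This last subcase needs care: in it the access lies after $\onceEnd$'s observation. I would handle it by noting that the inlined body is guarded by $\Neg$, so accesses inside a non-running section lie outside the body and are not counted in $A$ for the body accesses. If needed, I would restrict the claim to edges within the $\Neg$ branch.
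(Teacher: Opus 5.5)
This is essentially the paper's proof: it likewise takes $C$ to be the $\onceEnd$ actions backward reachable along program-order, once-order and creation edges (so that the generic right-hand sides, which do not merge $C$, are exact), verifies the requirements edge by edge with the same arguments for the guards, $\onceStart(o)$, $\onceEnd(o)$ and thread creation, and proves soundness of $||^?$ from the mutex-like total order of once sections together with the fact that at most one thread ever takes $\Neg(\onceRan(o))$. Your closing hedge is not needed for accesses: \texttt{pthread\_once} is compiled so that nothing lies between a $\Pos(\onceRan(o))$ guard and the following $\onceEnd(o)$, so every access whose digest has $o\in A$ has a $\Neg(\onceRan(o))$ after its last $\onceStart(o)$ on the ego lane, which is exactly the fact the paper invokes (points in the $\Pos$ branch do have $o\in A$; there are simply no accesses there).
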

\noindent
See \appref{C}{app:oncelong} for the proof (by contradiction) and
an extended digest making the analysis of global variables more precise.

\section{Synchronization via Barriers}\label{ss:barriers}

\pthreads\ provides barriers: after initialization by calling
$\textsf{pthread\_barrier\_init}(b,c)$ for a barrier $b$ and
capacity $c\in\mathbb{N}^+$, callers to $\textsf{pthread\_barrier\_wait}(b)$ block until $c$ threads
have called it, at which point all waiters are unblocked.
\Cref{f:barrier} shows a barrier example whose $\textsf{pthread\_barrier\_wait}(b)$ calls may all happen in parallel.

In fact, this always needs to be the case, as any call to $\textsf{pthread\_barrier\_wait}(b)$ happening \emph{after}
another call has returned, cannot contribute to reaching the required
capacity.
The predicate $||^?$ also detects stuck executions when fewer than $c-1$ other
$\textsf{pthread\_barrier\_wait}(b)$ calls may happen in parallel.
Passing a barrier reveals cross-thread information that excludes some races.

\paragraph{Extension of the concrete semantics.}
Let $\Barriers$ be a set of barriers.
For presentation, we initially assume static capacities (eliminating \textsf{pthread\_barrier\_init})
and replace each $\textsf{pthread\_barrier\_wait}(b)$ with
$\textsf{barrier\_N\_record}(b);$ $\textsf{barrier\_N\_check}(b)$, where $N \in \mathbb{N}^+$ is $b$'s capacity.
The action $\textsf{barrier\_N\_record}(b)$
is non-blocking and observable, i.e., the local trace ending in it is recorded at the corresponding unknown.
The action $\textsf{barrier\_N\_check}(b)$ is observing and $N$-ary, incorporating local traces
stored at the unknown corresponding to $\textsf{barrier\_N\_record}(b)$, i.e., those ending in
	that action.\footnote{This uses our generalization to higher-arity observing actions.}
\Cref{f:prog1} shows an example local trace. %

By construction of the constraint system, for $\semT{e}(t_0, \dots, t_{N-1})$ for an edge $e=(u,\textsf{barrier\_N\_check}(b),v)$,
$t_0$ is a local trace ending in $u$ with last action
$\textsf{barrier\_N\_record}(b)$, and the other $N-1$ local traces provided as an argument also end
in $\textsf{barrier\_N\_record}(b)$, as that is the action observed by the check action.
The call yields $\oSome t'$ only if its arguments are distinct and their union,
augmented with dependency edges from the record endpoints to the check endpoint, is a local trace.
For each barrier $b$, the barrier order $\to_b$
goes from nodes with an incoming edge
$\textsf{barrier\_N\_record}(b)$ to those with an incoming edge $\textsf{barrier\_N\_check}(b)$;
call these sets $B^+$ and $B^-$, respectively.
Each $B^+$ node has at most $1$ outgoing $\to_b$ edge,
while $B^-$ node has \emph{exactly} $N$ incoming $\to_b$ edges,
including one from the control-flow predecessor, i.e., the thread passing the barrier was involved in
the barrier being passed. This order then forms part of the causality order of a local trace.
Consider the $N$ subtraces ending at $B^+$ nodes
that directly $\to_b$-precede a $B^-$ node.
None of these subtraces may contain any of the relevant calls to $\textsf{barrier\_N\_record}(b)$.
This ensures that all $N$ calls happened in parallel.

Before deriving a digest tracking information about barriers, we first observe that for any digest
that provides the predicate $||^?$ for checking for data races and does not track information about barriers, i.e., is
unaffected by executing actions $\textsf{barrier\_N\_record}(b)$ and $\textsf{barrier\_N\_check}(b)$,
a definition of $\aSemDigest{\_, \textsf{barrier\_N\_check(b)}}$ can be given that detects some instances where the execution
must become stuck: $\aSemDigest{\_, \textsf{barrier\_N\_check(b)}}(A_0,\dots,A_{N-1}) =$
\[
    \begin{array}{l}
         \begin{cases}
            \oSome A_0 & \text{if } \forall i,j \in [0,N-1]: i \neq j \implies A_i ||^{?} A_j = \top\\
            \oNone & \text{otherwise}
        \end{cases}
    \end{array}
\]
This uses the predicate $||^?$ to check whether the $N$ calls to $\textsf{barrier\_N\_record}(b)$ can all happen in parallel.
When this is not the case, the execution must become stuck, and $\oNone$ is returned, denoting unreachability.
The predicate thus turns out to be not only useful when considering whether accesses can happen in parallel, but also when considering richer synchronization primitives.
\begin{example}
    Consider \cref{f:barrier} and assume that the developer has, by mistake, surrounded calls to
    $\textsf{pthread\_barrier\_wait}(b)$ with locking and unlocking of mutex $m$.
    Then, one thread reaches the call to $\textsf{pthread\_barrier\_wait}(b)$ while holding $m$.
    However, it will not release $m$ until the barrier is passed, and the other threads do not
    call $\textsf{pthread\_barrier\_wait}(b)$
    until they have acquired $m$. The execution deadlocks.
    With the lockset digest from \cite{SchwarzE26} with the definition of $||^?$
    returning false when locksets overlap, the analysis will detect this as
    $\{m\} ||^{?} \{m\} = \textsf{false}$.
    $\hfill\qed$
\end{example}

Showing certain accesses unreachable may improve the precision of race detection.
If a digest tracks thread \emph{id}s, we can define a more precise digest that excludes more races
by tracking barriers passed by the current thread and observed barrier actions of other threads.
If, for two accesses, one thread knows that the other has passed a barrier that it has not yet passed,
the accesses are ordered w.r.t.\ each other and thus not bidirectionally compatible and not racy.

The new digest is the product of the original digest $\NDigests{1}$ and two components%
\footnote{The tracking of separate components for $R$ and $O$ is to simplify presentation here: Whenever the thread
\emph{id} in $A$ is unique, $R$ can be reconstructed from $O$.}:
\begin{itemize}
    \item a set $R \subseteq \Barriers$ of barriers for which the current thread has called $\textsf{barrier\_N\_record}(b)$
    \item a map $O: \TIDs \to 2^\Barriers$ from thread \emph{id}s to sets of barriers for which the respective thread has been
    \emph{observed} to have called $\textsf{barrier\_N\_record}(b)$.
\end{itemize}
The corresponding right-hand sides for the digest are given in \cref{f:barrierA} in terms of those for $\NDigests{1}$.
The predicate then is given by: $(A_0,R_0,O_0)\;||^{?}\;(A_1,R_1,O_1) = $
\[
    \begin{array}{l}
        \qquad = (A_0\;||^{?}_{\NDigests{1}}\; A_1) \sqcap
        \begin{cases}
            \textsf{false} & \text{if }  (\unique\,(\pTID\,A_0) \land O_1\,(\pTID\,A_0) \not\subseteq R_0)\\
            &\quad     \lor (\unique\,(\pTID\,A_1) \land O_0\,(\pTID\,A_1) \not\subseteq R_1)
            \\
            \top & \text{otherwise}
        \end{cases}
    \end{array}
\]
On top of the predicate for the base digests, this exploits the fact that if $A_0$ is unique, any of its actions that have been observed by the other thread must already have happened for the accesses to potentially happen concurrently.

\begin{figure*}[t]
\[
\begin{array}{l}
    \Digests = \Digests_1 \times 2^\Barriers \times (\TIDs \to 2^{\Barriers} )\qquad\qquad\qquad
    \initDigests = \{(A_1,\emptyset,\emptyset) \mid A_1 \in \initNDigests{1} \}\\[1ex]
    \newDigests\,(A,R,O)\,u\,u_1 =
        \begin{cases}
            \oSome (A',\emptyset,O) & \text{if } \newNDigests{1}\,   A\,u\,u_1 = \oSome A'\\
            \oNone & \text{otherwise}
        \end{cases}\\[1ex]

    \aSemDigest{\_, \textsf{barrier\_N\_check(b)}}((A_0,R_0,O_0),\dots,(A_{N-1},R_{N-1},O_{N-1})) =\\
    \quad \begin{cases}
        \oSome (A_0, R_0, \bigcup_{i} O_i) &
        \text{if } \forall i,j \in [0,N-1]: (i \neq j \implies A_i ||^{?} A_j) \land\\
        & \qquad\qquad(\unique\,(\pTID\,A_i) \implies O_j\,(\pTID\,A_i) \subseteq R_i) \\[1ex]
        \oNone & \text{otherwise}
    \end{cases}\\[1ex]

    \aSemDigest{\_, \textsf{barrier\_N\_record(b)}}(A,R,O) =\\
        \qquad\oSome
        (A, R \cup \{ b \}, O \oplus \{ (\pTID\,A) \mapsto (O\,(\pTID\,A)) \cup \{b\} \})\\[1ex]

    \aSemDigest{\_, \act}((A_0,R_0,O_0),\dots,(A_{N-1},R_{N-1},O_{N-1})) =\\
        \qquad \begin{cases}
        \oSome (A', R_0, \bigcup_i O_i) & \text{if } \aSemNDigest{1}{\_,act}(A_0,\dots,A_1) = \oSome A'\\
        \oNone & \text{otherwise} \qquad\qquad\qquad\qquad\qquad\qquad \text{(other $n$-ary action)}
        \end{cases}
\end{array}
\]\vspace{-1em}
\caption{Barrier digest where $\oplus$ denotes updating a map,
missing bindings are assumed to be $\emptyset$, and the union of maps is defined pointwise per binding.}\label{f:barrierA}
\end{figure*}

\begin{example}
    Consider again \cref{f:barrier} and assume a digest identifies all threads
    as unique\footnote{This is, e.g., the case for the digest-based thread \emph{id}s abstracting creation history~\cite{schwarz2023clustered}.}
    and assigns them \emph{id}s $\textsf{main}$, $\textsf{w1}$, and $\textsf{w2}$.
    The access in $\textsf{main}$ happens before calling $\textsf{pthread\_barrier\_wait}(b)$,
    and has digest $(A_0, \emptyset, \{ \})$; the worker accesses
    have digests $(A_j, \{ b \}, \{ \textsf{main} \mapsto \{ b\},
    \textsf{w1}\mapsto \{ b\},
    \textsf{w2}\mapsto \{ b\}
    \})$ for some $A_j$, where we omit empty bindings.
    As $\pTID\,A_0 = \textsf{main}$, $\unique\,\textsf{main}$ holds but $\{ b \} \not\subseteq \emptyset$,
    the write in $\textsf{main}$ happens before any of the reads,
    and race-freedom is shown.
    $\hfill\qed$
\end{example}
\begin{proposition}\label{prop:barr-sound}
    The digest given above is admissible, and its predicate $||^?$ sound.
\end{proposition}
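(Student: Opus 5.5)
To prove \cref{prop:barr-sound}, I would first define the abstraction $\alpha_\Digests$ for the barrier digest explicitly. For a local trace $t$ with ego thread $\tau$, set $\alpha_\Digests\,t = (\alpha_{\NDigests{1}}\,t, R_t, O_t)$. Here $R_t$ is the set of barriers $b$ for which the ego swim-lane contains a $\textsf{barrier\_N\_record}(b)$ action. $O_t$ maps each abstract thread \emph{id} $i$ to the set of barriers $b$ such that some swim-lane in $t$ belonging to a concrete thread with $\pTID = i$ contains a $\textsf{barrier\_N\_record}(b)$. Since $\alpha_\Digests$ is a function and the digest is applied to traces, I would show the right-hand sides of \cref{f:barrierA} compute exactly this abstraction, which gives equality rather than mere inclusion in \labelcref{def:SemSound}.

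\textbf{Admissibility.} Condition \labelcref{def:initSound} is immediate, since initial traces contain no record actions. For \labelcref{def:NewSound,def:CreateNewConsistent}, note that the new thread's trace is the creator's trace extended by a $\to_c$ edge and a fresh swim-lane. Its own swim-lane has no records, so $R=\emptyset$, while all observed records are inherited, so $O$ is unchanged. Both conditions then follow from those of $\NDigests{1}$.

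For a record action, exactly one record is appended to the ego lane, and the ego thread's \emph{id} is $\pTID\,A$; the update is therefore exact. For other $k$-ary actions, the resulting trace is the union of the argument traces. $R$ depends only on the ego lane, which is $t_0$'s lane extended, so it stays $R_0$. $O$ of a union is the pointwise union of the $O_i$.

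For $\textsf{barrier\_N\_check}(b)$ with result $\oSome t'$, I must show the guard holds for $\alpha$ of the arguments. Compatibility means the $N$ subtraces ending at the record nodes pairwise contain none of each other's relevant record calls. So each pair of record endpoints is unordered, and hence bidirectionally compatible in the generalized sense (fresh global writes). This gives $A_i ||^? A_j = \top$ by soundness of $||^?_{\NDigests{1}}$.

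\textbf{Main obstacle: the uniqueness conjunct.} I need $\unique(\pTID A_i) \implies O_j(\pTID A_i) \subseteq R_i$. If $\pTID A_i$ is unique, the only concrete thread with that \emph{id} is the ego thread of $t_i$. Any record of it observed in $t_j$ lies in that same thread's swim-lane. Since $t'$ is a local trace whose swim-lanes must be consistent (one lane per thread, a prefix-closed path), the part of that lane occurring in $t_j$ is a prefix of the lane in $t_i$, or vice versa. The second case is excluded: $t_i$ would then be contained in $t_j$, contradicting that the subtraces do not contain one another's relevant record, since $t_i$ ends in one. Hence every barrier recorded there is in $R_i$. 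The delicate point is justifying this lane-prefix property from the union construction of local traces; I would argue it from acyclicity plus uniqueness of each thread's swim-lane in a trace.

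\textbf{Soundness of $||^?$.} I would argue by contradiction. Suppose $t_0,t_1$ are bidirectionally compatible at edges $e_1,e_2$, and $\unique(\pTID A_0)$ holds with some $b\in O_1(\pTID A_0)\setminus R_0$. Then $t_1$ contains a record of $b$ by the ego thread of $t_0$ (by uniqueness), which is absent from $t_0$'s lane. Consider the joint trace obtained when $e_2$ occurs directly after $e_1$ via the fresh global. There, the lane of $t_0$'s thread would have to contain, as a predecessor of its final configuration, a record that by the prefix property must come after the end of $t_0$'s lane. This creates a cycle and contradicts acyclicity. The symmetric case is identical, and the $\NDigests{1}$ component is handled by its own sound predicate under meet.
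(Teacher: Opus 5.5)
Your proof follows the paper's route almost step for step. You use the same abstraction $(\alpha_{\NDigests{1}}\,t,R(t),O(t))$ and the same case analysis for admissibility. For the uniqueness conjunct of the check guard you give the same argument: a record of the unique thread that lies in $t_j$ but not in $R_i$ would make $t_j$ contain $t_i$'s own barrier arrival, contradicting that the $N$ arrivals are parallel. The $||^?$ part is the same contradiction argument, and your prefix-comparability argument inside the combined check trace is a clean way to make the paper's ``strict continuation'' step precise.

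However, the soundness step fails as literally stated, because you pick the wrong order. You consider the joint trace in which $e_2$ occurs directly after $e_1$, i.e.\ $\semT{e_2}(t_1,\semT{e_1}(t_0,t'))$. That composition is typically realizable. Suppose the unique thread $T_0$ performs its access along $e_1$ and then records $b$, and the thread of $t_1$ passes the barrier (thereby observing that record) before accessing along $e_2$. Then $t_1$ already contains $T_0$'s $e_1$-step followed by the record, and this order is exactly what happens in the real execution; no cycle arises.

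The contradiction comes only from the opposite composition $\semT{e_1}(t_0,\semT{e_2}(t_1,t'))$. There, the new configuration of $T_0$ is the maximal element of the joint trace and the immediate program-order successor of $t_0$'s last configuration. Yet it must causally follow the record contained in $t_1$, which lies strictly later on the same lane. Hence $T_0$'s lane either forks or closes a cycle with $\to_{po}$. This is the composition the paper uses when it concludes $\semT{e}(t_0,t_1)=\oNone$. Your phrase about the record preceding ``its final configuration'' only makes sense in this order, so swapping the order repairs the argument.

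A smaller point: state explicitly, as the paper does, that $\NDigests{1}$ is insensitive to $\textsf{barrier\_N\_record}$ and $\textsf{barrier\_N\_check}$, i.e.\ its abstraction is unchanged by these actions. You use this silently in three places:
\begin{itemize}
\item the record case, where the first component must stay $A$;
\item the check case, where the result keeps $A_0$;
\item your appeal to soundness of $||^?_{\NDigests{1}}$ on the digests $A_i$. These belong to traces ending \emph{after} the record actions, whereas the generalized predicate speaks about traces ending at the sources of the record edges.
\end{itemize}
Admissibility of the product as defined needs this hypothesis.
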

The proof is by contradiction; see \appref{D}{sec:barriers-app}.
This digest generalizes straightforwardly to tracking how often barriers have been waited for, possibly stopping exact bookkeeping at a fixed constant $k$ to avoid a blowup.
Conversely, for a more tractable analysis, one can switch to an abstract digest to avoid keeping track of exact sets, along the lines of \cite{hammerandnail}.
\begin{remark}
    Our implementation supports dynamic barrier capacities and pointer-based accesses.
    Barriers may also be used for inter-process communication, but since the \GoblintAnon framework does not support process-level
    concurrency, our implementation soundly and coarsely treats such barriers as no-ops.
\end{remark}

\section{Synchronization via Mutexes and Creation Orders} \label{s:descendant-lockset}
While previous sections covered features that are often completely unhandled,
here we deal with complex patterns combining thread creation and mutexes:
Creating threads while holding a mutex can simulate barriers on platforms lacking them.
Also, then, worker threads do not have to wait for one another:
\begin{figure}[t]
  \centering
  \footnotesize
  \begin{minipage}[t]{.37\linewidth}
    \begin{minted}[tabsize=1,fontsize=\scriptsize]{c}
int g = 0;
pthread_mutex_t m;

void *worker(void *arg) {
  // Worker Initialization...

  pthread_mutex_lock(&m);
  pthread_mutex_unlock(&m);
  int f = compute(g, arg);
  // ...
  return NULL;
}
    \end{minted}
  \end{minipage}\hfill
  \begin{minipage}[t]{.58\linewidth}
    \begin{minted}[tabsize=1,fontsize=\scriptsize]{c}
int main() {
  // Worker threads
  pthread_t i1, i2;

  pthread_mutex_lock(&m);
  pthread_create(&i1, NULL, worker, NULL);
  pthread_create(&i2, NULL, worker, NULL);
  g++; // Initialization code

  pthread_mutex_unlock(&m);
  return 0;
}
    \end{minted}
  \end{minipage}
  \caption{Access to \texttt{g} in \texttt{main} must happen before accesses in worker threads.}
  \label{f:descendant-ls-example}
\end{figure}
\begin{example}
  \label{e:descendant-ls}
  In \cref{f:descendant-ls-example}, main creates workers $i_1$ and $i_2$ while holding $m$.
  Each worker initializes, then immediately acquires and releases $m$.
  These locks occur only after main unlocks $m$, so worker accesses to \texttt{g} follow main's access and cannot race.
  Notably, $i_1$ need not wait for $i_2$ to reach the lock.
\end{example}
For accesses in thread $i_0$ and a thread $i_1$ it created, the following conditions rule out conflicts via mutex $m$.
At the first access, we require that
\begin{enumerate}
  \item $i_1$ was created in $i_0$, ensuring $i_0$ is its ancestor.
  This check matters because an ancestor of $i_0$ might otherwise have created $i_1$, enabling races.
  \item At all of those creations, $m$ is locked---c.f.\ lockset digest (\cref{f:locksets}).
  \item $m$ was not unlocked after any such creation.
\end{enumerate}
\noindent Note that $i_1$ does not need to be a child of $i_0$, but can be more generally a descendant.
So, we account for \emph{transitive} creations of $i_1$, enhancing the precision.

At the second access, we record which threads have locked each mutex.
We can exclude a race if a lock of $m$ has happened at some point after the creation in $i_0$ in the
created thread or---for the same reason as above---one of its descendants.
If $m$ was locked in $i_H$, this is the case if $i_0$ must be an ancestor of
$i_H$. We describe a suitable digest in \apprefInline{E}{app:descendant-lockset-long}.

\section{Digests for Creation Locksets}\label{s:creation-lockset}

The pattern described in \cref{s:descendant-lockset} %
 is not the only one where thread creations with mutexes held can ensure mutual exclusion.
This time, we are interested in programs where a mutex $m$ is held in $i_0$ throughout the entire execution of a descendant thread $i_1$.
In consequence, $i_1$ is protected in some way by $m$.
This may sometimes be useful: Imagine, e.g., that a thread needs to do two expensive
computations in parallel while having exclusive access to a global. Then, it may spawn a new thread while holding the
protecting mutex to do one part of the work, while performing some other work in another task.
Before releasing the mutex again, the worker thread is joined again.
While this is probably more common for programming languages such as \texttt{Go}, where thread creation is cheap, it applies just as well to
C programs, as we show in the next example:

\begin{example}
  Consider the program given in \cref{f:creation-ls-example}.
  Throughout the entire execution of $work$, the mutex $m$ is locked.
  As $m$ is also in the lockset at the access to \texttt{g} in $t1$,
  this access cannot race with the one in $work$.
\end{example}

\begin{figure}
  \begin{minipage}[t]{5cm}
    \begin{minted}[tabsize=1]{c}
int g = 0;
pthread_mutex_t m ;
pthread_t i1, i2;

void* t1(void* arg) {
  pthread_mutex_lock(&m);
  g++; // not racing!
  pthread_mutex_unlock(&m);
  return NULL;
}

// main holds m throughout work
void* work(void* arg) {
  // Expensive computations
  g = ...; // not racing!
  return NULL;
}
    \end{minted}
  \end{minipage}
  \begin{minipage}[t]{7cm}
    \begin{minted}[tabsize=1]{c}
int main() {
  pthread_create(&i1, NULL, t1, NULL);
  ...
  pthread_mutex_lock(&m);

  // Two expensive computations
  // some of it happens in work

  // work created while m is locked
  pthread_create(&i2, NULL, work, NULL);
  // no unlock of m while work running
  pthread_join(i2, NULL);

  pthread_mutex_unlock(&m);
  return 0;
}
    \end{minted}
  \end{minipage}
  \caption{Creation Locksets: Mutex \texttt{m} protects accesses to \texttt{global}---even though
  $work$ never acquires it.}
  \label{f:creation-ls-example}
\end{figure}

Note that it is important to record that a mutex $m$ is locked in the creator thread $i_0$ and held throughout
the execution of a descendant thread $i_1$,
as it does not protect $i_1$ from accesses happening in $i_0$.
To detect that $i_1$ is protected from $i_0$ with $m$, we propose the following conditions:
\begin{enumerate}
  \item $i_1$ was transitively created in $i_0$, ensuring $i_0$ \emph{must} be an ancestor of it. We require the latter condition for the same reason as given in \cref{s:descendant-lockset}---there should be no alternative way of creating $i_0$ other than via $i_1$.
  \item At all of those creations, $m$ must be locked, as recorded by the lockset digest.
  \item There exists no unlock of $m$ while $i_1$ may run in parallel.
\end{enumerate}

For the last point, one can compute the \emph{id}s of the descendant threads running in parallel by determining all threads that were transitively created from the current thread and subtracting threads that were joined back. Using the thread \emph{id} digest, we obtain the former. For the latter information, we use the join digest suggested by \citet{hammerandnail}.
\appref{F}{app:creation-lockset-long} gives a digest for checking the conditions above.

\section{Experiments and Benchmark Set}\label{s:experiments}

\begin{table*}[t]
\caption{Runtime comparison. Runtime in seconds.}
\centering
\small
\begin{tabular}{@{}lrrrrr@{}}
\toprule
Benchmark &
\rotatebox{0}{LOC} &
\rotatebox{0}{Baseline} &
\rotatebox{0}{Once ($\Delta$\%)} &
\rotatebox{0}{Barriers ($\Delta$\%)} &
\rotatebox{0}{Locked ($\Delta$\%)} \\
\midrule
\texttt{cksum} & 68k & 1.00  & 1.02 (+2.0\%) & 1.04 (+4.0\%) & 1.13 (+13.0\%) \\
\texttt{cp}    & 75k & 11.86 & 12.45 (+5.0\%) & 12.36 (+4.2\%) & 14.44 (+21.8\%)  \\
\texttt{cut}   & 69k & 1.79  & 1.85 (+3.4\%) & 1.84 (+2.8\%) & 2.10 (+17.3\%) \\
\texttt{dd}    & 72k & 2.68  & 2.77 (+3.4\%) & 2.74 (+2.2\%) & 3.30 (+23.1\%) \\
\texttt{df}    & 70k & 13.86 & 14.55 (+5.0\%) & 14.65 (+5.7\%) & 18.09 (+30.5\%) \\
\texttt{du}    & 70k & 9.13  & 9.72 (+6.5\%) & 9.81 (+7.4\%) & 11.27 (+23.4\%) \\
\texttt{nohup} & 68k & 1.97  & 1.95 (-1.0\%) & 1.91 (-3.0\%) & 2.25 (+14.2\%) \\
\texttt{ptx}   & 71k & 4.20  & 4.49 (+6.9\%) & 4.45 (+6.0\%) & 5.11 (+21.7\%) \\
\texttt{tail}  & 72k & 3.44  & 3.62 (+5.2\%) & 3.68 (+7.0\%) & 4.18 (+21.5\%) \\
\midrule
Average &   &  & +4.0\% & +4.0\% & +20.7\% \\
\bottomrule
\end{tabular}
\label{tab:threadcomparison}
\end{table*}

We enhanced the \Goblint~\cite{DBLP:conf/tacas/SaanKPHESVS26} static analyzer to support
\pthreadOnce, barriers, and complex interactions between thread creation and mutex locking.
The implementations deviate slightly from the presentation here: they support \pthreads directly rather than idealized primitives
and use abstract digests in the sense of~\cite{hammerandnail}.
For thread-creation and mutex-locking interactions, they use mixed flow-sensitivity~\cite{DBLP:conf/fm/SeidlVES26}
to compute abstract digests at global unknowns.

We evaluate scalability and runtime on 60k--80k-LOC GNU Coreutils.
Their size precludes ground-truth annotations, but they allow measuring the runtime impact of the
new digests
against a minimal baseline and the \textsc{SV-Comp} configuration.
We ran each benchmark on Ubuntu 26.04 with a 2.5\,GHz Intel Core i7-11700,
a 5-minute timeout, and a 20\,GB memory limit.
Table~\ref{tab:threadcomparison} shows the cost of the enhanced synchronization features.
It is most pronounced
for the locked-creation feature. %
For \texttt{df}, locked
creation handling adds $30.5\%$ to the baseline time.
However, across $\approx$ 30,000 \textsc{SV-Comp} benchmarks enabling the enhanced
synchronization features increased total runtime by $\approx2\%$, which is negligible. %

We also compare our implementation with other static analyzers.
We gathered 90 small benchmarks, 20--70 lines of code each, testing the correct handling
of each of our four features: 28 \pthreadOnce, 26 barriers, 18 \creationLockset, and 18 \descendantLockset\ tests.
Each file contains one focused test annotated with the expected ground truth and sometimes motivated by a real-world use case.
Our main goal is to determine whether analyzers can correctly certify the absence of data races
in the presence of these features. A \textit{false positive} is an unexpected warning;
avoiding them is crucial for the acceptance of static analyzers in practice.
Conversely, a \textit{false negative} is a missed race;
\textit{even one compromises soundness}. We encode the desired behavior as
\textsc{SV-Comp} verdicts. %
We compared our implementation with %
\textsc{CPAChecker}~\cite{DBLP:conf/tacas/BaierBCJKLLRSWW24},
\textsc{Dartagnan}~\cite{DBLP:conf/tacas/LeonFH020},
\textsc{UGemcutter}~\cite{DBLP:conf/tacas/KlumppDHSEFP22},
\textsc{ESBMC-kind}~\cite{DBLP:journals/sttt/GadelhaIC17},
and \textsc{RacerF}~\cite{DBLP:conf/ecoop/DacikV25}. We also wanted to check whether
soundness concerns about~\cite{Lu2025} carry over to the implementation, but the code is unavailable
 and the authors did not respond to inquiries.

\cref{tab:tool-feature-comparison} summarizes the experiment, detailed in \appref{G}{app:experiments}.
The benchmark set reveals no false positives or false negatives in our implementation.
Since our implementation builds on abstract interpretation, it may produce false positives on other programs with the considered features.

All tools treat barriers in a sound way (which can be trivially achieved by not providing specific handling), they however do not exploit
the information completely to reduce false positives. \textsc{Dartagnan} and \textsc{UGemcutter} avoid false negatives but return
\texttt{ERROR} and \texttt{UNKNOWN}, respectively, when barriers or \texttt{pthread\_once} are present. For \texttt{pthread\_once},
\textsc{CPAChecker}, \textsc{RacerF}, and \textsc{ESBMC-kind} miss races, but
do not produce false positives except for \textsc{RacerF}. All tools miss races in the \creationLockset and \descendantLockset
categories, demonstrating unsound behavior. \textsc{Dartagnan}, \textsc{ESBMC-kind}, and \textsc{UGemcutter}
produce no false positives.
We hope contributing these tests to \SVCOMP will encourage tools to support these features
and spur competition.

\begin{table*}[t]
    \newcommand{\chec}{$\textcolor{teal}\checkmark$}
    \newcommand{\parti}{ $\textcolor{red}{\textbf{-}}$ }
    \newcommand{\nono} { $\textcolor{red}{\textbf{+}}$ }
    \newcommand{\resultpair}[2]{\makebox[1.2em][r]{#1}\,/\,\makebox[1.2em][l]{#2}}
	\centering
	\scriptsize
	\caption{Capability of static analyzers to establish data-race freedom in the presence of enhanced synchronization features.
	The table reports the number of false negatives (\parti, unsound) and false-positive race warnings (\nono) for each tool.}
	\begin{tabular}{@{}l@{\hspace{2em}}c@{\hspace{2em}}c@{\hspace{2em}}c@{\hspace{2em}}c@{}}
		\toprule
		Tool & \texttt{pthread\_once} & \texttt{barrier\_wait}
			 & \creationLockset & \descendantLockset \\
		     & \parti\,/\,\nono & \parti\,/\,\nono
		     & \parti\,/\,\nono & \parti\,/\,\nono \\
		\midrule
		\Goblint\ (this work)        & \resultpair{0}{0} & \resultpair{0}{0}  & \resultpair{0}{0} & \resultpair{0}{0} \\
		\textsc{CPAChecker}          & \resultpair{9}{0} & \resultpair{0}{12} & \resultpair{2}{1} & \resultpair{2}{0} \\
		\textsc{Dartagnan}           & \resultpair{0}{0} & \resultpair{0}{0}  & \resultpair{3}{0} & \resultpair{3}{0} \\
		\textsc{Ultimate Gemcutter}  & \resultpair{0}{0} & \resultpair{0}{0}  & \resultpair{2}{0} & \resultpair{2}{0} \\
		\textsc{ESBMC-kind}          & \resultpair{6}{0} & \resultpair{0}{4}  & \resultpair{5}{0} & \resultpair{3}{0} \\
		\textsc{RacerF}              & \resultpair{3}{7} & \resultpair{0}{7}  & \resultpair{1}{3} & \resultpair{2}{3} \\
		\bottomrule
	\end{tabular}
	\label{tab:tool-feature-comparison}
\end{table*}

\section{Related Work}\label{s:related}
Local traces and digests were proposed in a series of papers on thread-modular value analyses~\cite{schwarz2023clustered,schwarzthesis,hammerandnail,DBLP:conf/sas/SchwarzSSAEV21}.
Recent work~\cite{SchwarzE26} re-uses them for race detection.
We generalize both to cover more synchronization mechanisms, and
show that $||^?$ can check whether two actions
may run concurrently, which we use for barriers.

\citet{Terauchi08} presents an approach for framing race-freedom of programs as a typing problem, which in turn is
solved using rational linear programming.
It covers classical lock-based synchronization and mechanisms rarely supported by static analyses,
including signaling, semaphores, and read-write locks.
However, the approach differs from ours both in the used techniques, type-checking rather than abstract interpretation,
and in the considered synchronization mechanisms.

\paragraph{Barriers.} Barriers have been widely
studied~\cite{le_verification_2013,mittermayr_kronecker_2016,kokologiannakis_bam_2021,murthy_design_2016}, including
analyses of OpenMP-style barriers \cite{Lin05,SwainLLLG020,SwainH18}.
In OpenMP, all threads in a team share the same
code and must arrive at the syntactically same barrier, so code before a barrier cannot
run in parallel with code after it.
This does not carry over to \pthreads: different syntactic
calls to \textsf{wait} may belong together, and the capacity of a barrier
need not equal the number of running threads.
Here, we focus on work on \pthreads-style barriers.
\citet{mittermayr_kronecker_2016} extend the use of Kronecker algebra from the static analysis of semaphores to barriers.
\citet{kokologiannakis_bam_2021} speed up stateless model checking by merging interleavings that differ only in barrier-arrival order.
There, it is required that the initial capacity of a barrier is an upper bound to the number of \textsf{wait} calls that happen in parallel.
The analysis is justified w.r.t.\ a execution graph semantics.
\citet{murthy_design_2016} verify (using SPIN) a distributed phaser, a synchronization construct similar to barriers,
but do not analyze code using it.
\citet{le_verification_2013} propose a separation-logic using bounded partial permissions
to show that participating threads execute in lockstep phases separated by barriers, thereby excluding races across phases.
Their approach and others~\cite{HoborG2012} rely on user annotations; ours is fully automatic.
\citet{JeremiassenE94} split executions into barrier-delimited phases to determine which sections cannot run concurrently,
aiming to reduce compiler-introduced \emph{false sharing} and cache misses, while \citet{DBLP:journals/taco/DasSR15} do so to reduce the
instrumentation overhead of \emph{dynamic} approaches.

\paragraph{Feature \pthreadOnce.} For \pthreadOnce, missing support has been noted as a limitation~\cite{SiegelZLZMEDR15, LeRVBDO22};
to our knowledge, ours is the first analysis beyond syntactic checks~\cite{sonarsource},
both for race detection or static analysis more broadly.

\paragraph{Mutexes and Creation Orders.}
In \emph{static} analysis, research on the patterns from \cref{s:descendant-lockset} is sparse.
To our knowledge, only \citet{Lu2025} explicitly use it to exclude races,
via \emph{Segmented Thread-Sensitive Control Flow Graph}s.
This analysis is not justified w.r.t. a concrete semantics and is unsound on several litmus tests.
They also describe mutexes held throughout descendant-thread execution, but give no analysis to detect this pattern;
to our knowledge, ours is the first.
\emph{Dynamic} race-\emph{prediction} techniques~\cite{Smaragdakis12, Kini17, Genc19, Sulzmann20}
reason about ordering constraints induced by locks and other dependencies to determine
which re-orderings of an observed execution are possible.
This resembles our use of digests as synchronization histories to rule out parallel accesses,
but the analysis problem differs substantially.
Predictive analyses start from a single execution and seek to weaken its observed ordering without
admitting infeasible re-orderings; our sound static analysis overapproximates all executions and
must warn unless it can establish that the relevant accesses are ordered in \emph{every} execution,
not just those arising from re-orderings.
The descendant-lockset construction establishes the same must-happen-before relationship that
CP/WCP-style rules derive from lock synchronization --- if a thread is known to acquire a
specific mutex before another thread (in our case because the first thread
is statically known to be an ancestor of the second and to hold the mutex when the descendant is created),
then the release by the first thread must precede the acquire by the second thread.
Unlike predictive analyses, which reason over a single trace, our analysis establishes
this ordering soundly across all executions.
\citet{Sulzmann23} extend dynamic lockset reasoning with cross-thread critical sections,
in which an event may be protected by an acquire--release pair executed by another thread
when it must occur between them.
This captures the same pattern as our \emph{creationLocksets}:
a mutex held by an ancestor throughout a descendant's execution can protect accesses in the descendant,
although it never acquires the mutex.
Their analysis operates on observed executions and derives
cross-thread critical sections from must-happen-before information; our contribution is a sound
static abstraction establishing this protection across all executions using creation, join, thread-id,
and lock-history information.

\section{Conclusion and Future Work}\label{s:conclusion}
Sound static race detection comes with strong guarantees; reducing false positives is crucial for usability.
We built on digests, a framework refining thread-modular abstract interpretation,
by distinguishing program points according to thread histories.
While previous work cast existing techniques as digests to justify their soundness,
we present novel digests for several interesting concurrency primitives that have received
little to no attention in sound static race detection.
Future work may apply our techniques to other concurrency bugs, such as deadlocks and termination.

{
  \renewcommand{\doi}[1]{\textsc{doi}: \href{http://dx.doi.org/#1}{\nolinkurl{#1}}}
  \bibliographystyle{splncs04nat}
  \bibliography{bibliography}
}

\ifdefined\extended
\clearpage
\appendix
\section{Requirements on $||_{e_0, e_1}^?$}\label{app:mhp}

Formally, we demand that a predicate $||_{e_0, e_1}^?$ satisfies for all digests $A_0, A_1$:
\[
\begin{array}{ll}
  \exists t_0, t_1, t' \in \Traces:\\
  \qquad t_0 \textsf{ ends in } u_0 \land \alpha_\Digests\,t_0 = A_0 \land t_1 \textsf{ ends in } u_1 \land \alpha_\Digests\,t_1 = A_1 \land\\
  \qquad \oNone \neq \semT{(u_0,\act_0,u_0')}(t_0,\semT{(u_1,\act_1,u_1')}(t_1,t'))\land\\
  \qquad \oNone \neq \semT{(u_1,\act_1,u_1')}(t_1,\semT{(u_0,\act_0,u_0')}(t_0,t')) \implies \\[1.5ex]
  A_0\,||_{(u_0,\act_0,v_0), (u_1,\act_1,v_1)}^?\,A_1 = \top
\end{array}
\]
The definition in terms of bi-directional digest compatibility is then given by
\[
\begin{array}{ll}
  A_0\,||_{(u_0,\act_0,v_0), (u_1,\act_1,v_1)}^?\,A_1 =\\
  \qquad\begin{cases}
    \textsf{false} & \text{if } \neg\exists A':
          \oNone \neq \aSemDigest{u_0,\act_0}(A_0,\aSemDigest{u_1,\act_1}(A_1,A'))\land\\
        & \qquad\quad
       \oNone \neq \aSemDigest{u_0,\act_0}(A_1,\aSemDigest{u_1,\act_1}(A_0,A'))\\
    \top & \text{otherwise}
  \end{cases}
\end{array}
\]
The soundness of this definition follows directly from the admissibility of the digest.

\section{Proofs for~\cref{prop:data-race-bidirectional-compat,prop:refine-sound}}\label{app:prop1-prop2}
Recall \cref{prop:data-race-bidirectional-compat} from \cref{ss:lt}:
\begin{proposition}
    Two accesses to a global variable $g$ are racy in the intuitive sense if and only if
    at least one is a write and
    traces $t_0$ and $t_1$ ending in the respective CFG predecessors exist, s.t.\ they
        are bidirectionally trace compatible.
\end{proposition}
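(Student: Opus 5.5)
Since the local trace semantics is an order-based presentation of executions, I would fix "racy in the intuitive sense" as follows: there is an interleaving execution of the program (a linearization of some local trace) in which the two accesses to $g$ are adjacent, at least one of them is a write, and they are performed by different threads. Equivalently, in some global configuration both threads are poised at $u_0$ and $u_1$ and either access may be scheduled next. The proof then relates such executions to local traces in two directions. The argument follows \cite[Theorem 1]{SchwarzE26}. The one new point is that observing actions may now have arity $k>2$, for instance $\textsf{barrier\_N\_check}$. This changes no part of the argument, because barrier actions are never accesses to $g$ and therefore only contribute ordinary causality edges.

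For the direction from races to compatibility, I would take an interleaving execution $\sigma$ in which both accesses are enabled at a common prefix $\sigma_p$. I would let $t'$ be the local trace obtained from the downward causal closure (w.r.t.\ $\to_{po}$, $\to_c$, and all auxiliary orders) of the last access to $g$ in $\sigma_p$. For $t_i$ I would take the downward closure of thread $i$'s current configuration in $\sigma_p$; each is a local trace with maximal element the ego configuration. Extending $\sigma_p$ by performing access $1$ and then access $0$, or the reverse, gives two valid interleavings. Projecting each onto the causal past of its final access yields exactly $\semT{(u_0,\act_0,u_0')}(t_0,\semT{(u_1,\act_1,u_1')}(t_1,t'))$ and its mirror. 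Both are therefore $\neq\oNone$. The thing to check is that the union of the causal pasts is consistent, which holds because all three are sub-orders of one execution.

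For the converse, suppose both compositions succeed. The combined trace $T=\semT{e_0}(t_0,\semT{e_1}(t_1,t'))$ is a local trace, so $t_0\cup t_1\cup t'$ is a consistent, acyclic partial order. In it, the $\to_g$-maximal access of $t'$ directly precedes access $1$, which directly precedes access $0$, and $t_0$ contains no $g$-access after the point where it was joined. I would linearize $t_0\cup t_1\cup t'$ into an interleaving in which the two ego threads are stopped right before $u_0$ and $u_1$. Both extensions then yield valid traces, so both accesses are enabled and can be scheduled in either order. A race follows once at least one access is a write. The two threads are distinct because a single thread cannot be at both $u_0$ and $u_1$ with each access immediately following the other in program order in both directions; this contradicts the acyclicity of $\to_{po}$.

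The main obstacle is the linearization step. Realizability of local traces by interleavings needs an inductive lemma: every local trace arises as the causal past of some interleaving execution, and conversely. I would prove it by induction on the constraint system \labelcref{eq:local-trace-constraints}, using the fact that right-hand sides only add edges to the new maximal node. The delicate point is that the mutex, once, and barrier requirements ("at most one outgoing edge", "exactly $N$ incoming") are preserved under unions of compatible traces. I would discharge this by noting that compatibility is defined precisely as the union being a local trace.
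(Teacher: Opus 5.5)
\paragraph{Gap: the realizability lemma.} Your two directions are correct bookkeeping on top of it, but because you define raciness through interleavings, the realizability lemma in your last paragraph carries all the content, and your proposal does not establish it. You need a real adequacy theorem with two halves:
\begin{enumerate}
  \item causal pasts of events in actual runs lie in $\Traces$, which your race-to-compatibility direction uses for $t_0$, $t_1$, $t'$ and the composed traces;
  \item every linearization of a local trace is an admissible run, which your converse uses.
\end{enumerate}
Saying that compatibility is defined as the union being a local trace is circular here. It restates the local-trace side and never compares the side conditions on $\to_m$, $\to_o$, $\to_b$ with how mutexes, \pthreadOnce, and barriers actually behave. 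That comparison has to be made primitive by primitive. So on your route the arity-$N$ checks are not irrelevant, contrary to your first paragraph: they are exactly what the lemma has to handle.

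With the barrier conditions as the paper states them, the first half even fails. Those conditions are that each record node has at most one outgoing $\to_b$ edge and each check node exactly $N$ incoming ones. Let two threads each wait once on a barrier of capacity $2$ and then write $g$. Once the round completes, both writes are enabled, so they race in your sense. However, $t_0\cup t_1$ contains both checks, and each check needs both records (its own and the only other one). Each record would then need two outgoing $\to_b$ edges, so $\semT{e_0}(t_0,\semT{e_1}(t_1,t'))=\oNone$. Your race-to-compatibility direction therefore cannot go through for the model as given.

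\paragraph{How the paper avoids this.} The paper fixes the intuitive notion inside the local-trace semantics. Two accesses are racy if they touch the same memory, at least one is a write, and some local trace contains both of them unordered once the order induced by the accesses themselves is disregarded. The proposition then becomes a purely order-theoretic statement about local traces. It is proved as a simplification of \cite[Theorem~1]{SchwarzE26}, now without atomicity mutexes. Because that proof never asks why events are ordered, higher-arity observing actions genuinely play no role.

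Under that definition your converse needs no linearization. Your argument that neither $t_i$ contains the other access, which uses both compositions, already shows that the composed trace is itself a witness once the $\to_g$ edge between the two accesses is dropped. The forward direction amounts to cutting a witness just before the two accesses and re-splicing them in both orders. If you want to keep the interleaving reading, you have to state and prove adequacy separately for each primitive, which the paper does not do.
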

\begin{proof}
More formally, in line with \cite{SchwarzE26}, define two accesses to be \emph{racy in the intuitive sense}
if they access the same memory,
at least one is a write, and there exists a local trace in which they are unordered after disregarding the order induced
 by the accesses themselves. The proof is then a simplification of the proof of \cite[Theorem 1]{SchwarzE26},
 which uses atomicity mutexes to order accesses to globals.
 The existence of observing actions of arity n does not complicate the proof here,
 as it does not consider the \emph{reason} some actions are ordered w.r.t. each other.
 \qed
\end{proof}
\medskip
Recall \cref{prop:refine-sound} from \cref{ss:digests}:
\begin{proposition}\label{prop:refine-sound2}
    Refining a sound abstract interpretation of a concurrent program with an admissible digest yields
    a sound abstract interpretation.
\end{proposition}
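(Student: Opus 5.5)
We follow the route sketched in the main text. First we build a \emph{digest-refined concrete constraint system}. It has unknowns $[u,A]$ and $[\act,A]$, and it is obtained from \cref{eq:local-trace-constraints} by splitting every unknown according to digests, exactly as in the refined abstract system. Concretely, the right-hand side for $[u_2,A']$ collects $\semT{e}(T_1,T_2,\dots,T_k)$ over all $A_1,\dots,A_k$ with $\oSome A' = \aSemDigest{u_1,\act}(A_1,\dots,A_k)$. Creation and initialization are split analogously, using $\newDigests$ and $\initDigests$.

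The first and central step is to show that this refined concrete system loses no traces. Let $\sigma$ be its least solution and let $\Traces_u$ be the least solution of the original system at $u$. We claim that $\Traces_u = \bigcup_A \sigma[u,A]$, and more strongly that every $t \in \Traces_u$ lies in $\sigma[u,\alpha_\Digests\,t]$; the same holds for the action unknowns. We prove the stronger statement by induction on the fixpoint iteration, i.e., on the derivation of $t$.
\begin{itemize}
\item For initial traces we use \cref{def:initSound}.
\item Suppose $t = \semT{e}(t_0,\dots,t_{k-1})$ with each $t_i \in \sigma[\cdot,\alpha_\Digests\,t_i]$ by the induction hypothesis. Then \cref{def:SemSound} gives $\oSome(\alpha_\Digests\,t) \mathrel{\hat{\sqsubseteq}} \aSemDigest{u_1,\act}(\alpha_\Digests\,t_0,\dots)$. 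Since $\hat\sqsubseteq$ has all $\oSome$ values incomparable, the right-hand side must equal $\oSome(\alpha_\Digests\,t)$. Hence the constraint for $[u_2,\alpha_\Digests\,t]$ fires with these arguments and contains $t$.
\item Created threads are handled the same way via \cref{def:NewSound}.
\end{itemize}
The reverse inclusion $\bigcup_A\sigma[u,A]\subseteq \Traces_u$ is immediate, since each refined right-hand side is a subset of the unrefined one.

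Second, we relate the refined abstract system to the refined concrete one. Its abstract transfer functions are exactly the original $\aSem{\cdot}$, $\newAbs$ and $\initAbs$, applied to the split unknowns. For each constraint of the refined concrete system there is a constraint of the refined abstract system with the same indices. We then invoke the generic soundness theorem \cite[Theorem~1]{hammerandnail}: if the abstract right-hand sides are sound w.r.t.\ the concrete ones via $\gamma_\D$, and the functions are monotonic, then every abstract solution $\rho$ satisfies $\gamma_\D(\rho[x]) \supseteq \sigma[x]$ for every unknown $x$. The argument is a fixpoint-induction: $\gamma_\D\circ\rho$ is a post-solution of the refined concrete system, so it dominates its least solution. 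The instantiation only requires checking that soundness of each abstract right-hand side is local, i.e., independent of which unknown it is attached to. Combining both steps gives $\bigcup_A \gamma_\D(\rho[u,A]) \supseteq \Traces_u$, which is the claimed soundness.

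We expect the main obstacle to be the first step, because of observing actions of arity $k$. There we must check that the observed traces $t_1,\dots,t_{k-1}$ were recorded at $[\act_i,\alpha_\Digests\,t_i]$. This holds because the refined system records observable actions under their successor digest, so the induction has to be carried out jointly over program-point and action unknowns. We also need \cref{def:CreateNewConsistent}; it does not affect inclusion, and is needed only so that the abstract system is well-formed for creation edges.
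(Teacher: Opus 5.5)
Your proposal is correct and takes essentially the same route as the paper's proof. Both build the digest-split concrete constraint system and show, by fixpoint induction with a case distinction over the constraints (using the admissibility conditions and flatness of the option order), that it loses no traces, then check constraint by constraint that concretizing a solution of the refined abstract system gives a (post-)solution of the refined concrete one. Your point that the induction must run jointly over program-point and observable-action unknowns is where the higher-arity generalization enters, which the paper likewise says adds no real complication.
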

\begin{proof}
    First, a version of the concrete semantics where the unknowns
    are split according to the digests is constructed.
    This refined concrete semantics is shown equivalent to the original concrete semantics,
    which is shown by giving a bijection between least solutions of either of the constraint systems.
    The property is then established by fixpoint induction performing a case
    distinction based on the considered constraint. The proof is a (slight)
    generalization of the proof of Theorem 4 of Section 2.3 of the PhD thesis
    of~\citet{schwarzthesis}, to account for the higher arity of observing actions but
    no additional complications arise here. Then, from the soundness of the
    base analysis, the soundness of the refined analysis follows by checking that
    by concretizing a solution of the abstract refined analysis one obtains a
    solution of the refined concrete constraint system. This is done by verifying
    this property for each constraint.\qed
\end{proof}
\medskip

Technically, the admissibility of the lockset digest would also need to be
re-established for the additional considered actions, but as they return
the digest unmodified, and \texttt{pthread\_once} and barriers do not affect sets
of held locks, this is straightforward.

\section{Details on  \& Extended Support for \pthreadOnce}\label{app:oncelong}
Recall proposition \cref{p:onceSound} from \cref{ss:once}.
\begin{proposition}
  The digest given in \cref{ss:once} is admissible, and its predicate $||^?$ is sound.
\end{proposition}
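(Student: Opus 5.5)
The plan is to make the digest's intended meaning explicit as a property of traces and then verify everything against that property. For a local trace $t$ with ego thread $i$, let $\alpha_\Digests\,t = (A,C)$ where:
\begin{itemize}
\item $A$ is the set of $o$ such that the ego thread's swim-lane contains an $\onceStart(o)$ that is not followed by an $\onceEnd(o)$;
\item $C$ is the set of $o$ such that some $\onceEnd(o)$ node (in any thread) causally precedes the final configuration of $t$.
\end{itemize}
For the last part of this definition I will first try the \emph{exact} version, taking $C$ to be precisely this set. If that does not commute with the transfer functions, I will fall back to proving an invariant $C \subseteq \{\,o \mid \onceEnd(o) \in t\,\}$ together with the exact characterization of $A$.

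With this invariant, admissibility goes by checking \labelcref{def:initSound,def:SemSound,def:NewSound,def:CreateNewConsistent} edge by edge.
\begin{itemize}
\item Initial traces contain no once actions, so $\initDigests=\{(\emptyset,\emptyset)\}$.
\item $\new$: the new thread's swim-lane is empty, so $A=\emptyset$. Its causal past is the creator's trace, so $C$ is inherited. \Cref{def:CreateNewConsistent} is immediate because $\newDigests$ never returns $\oNone$.
\item $\onceEnd(o)$: the action appends to the ego lane, which removes $o$ from $A$ and adds it to $C$.
\item $\onceStart(o)$: the merged trace is the union of $t_0$ and $t_1$, so the completed sets union. If $o\in A_0$, the ego thread is already inside an $o$-section. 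For this case I will argue, using the $\to_o$ discipline inherited from mutexes, that $t_1$ must end in an $\onceEnd(o)$ or $\onceInit(o)$ that is not yet followed by a $\to_o$ edge. Since the ego's own earlier $\onceStart(o)$ has not been followed by its $\onceEnd(o)$, no such trace is compatible, so the concrete result is $\oNone$, matching the $\oNone$ case (this is the deadlock behaviour mirroring \texttt{glibc}).
\item Guards $\Pos/\Neg(\onceRan(o))$: the concrete semantics tests for an $\onceEnd(o)$ in the trace. With the exact characterization of $C$ the guard checks agree. With the weaker invariant, $\Neg$ is the delicate case. Here I would use that every $\onceEnd(o)$ lies on the $\to_o$ chain, and that the ego's $\onceStart(o)$ directly precedes the guard, so every $\onceEnd(o)$ in the trace is causally before the ego's current configuration. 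This recovers exactness of $C$ at guard positions.
\item All other actions preserve both components, and merging traces preserves them as well.
\end{itemize}

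For soundness of $||^?$, I argue by contradiction. Suppose $(A,C)\,||^?\,(A',C')=\textsf{false}$ with, say, $o\in A\cap(A'\cup C')$, yet bidirectionally compatible traces $t_0,t_1$ and a witness $t'$ exist. By \cref{prop:data-race-bidirectional-compat} and the appendix~\ref{app:mhp} requirement, it suffices to show that the combined trace, with access $e_0$ executed right after $e_1$ in the $\to_g$ order, cannot be acyclic.
\begin{itemize}
\item \emph{Case $o\in A'$.} Both threads are inside $o$-sections. Following the mutex-exclusion argument, $\to_o$ totally orders the $\onceStart(o)/\onceEnd(o)$ pairs. One thread's $\onceStart(o)$ then requires the other's $\onceEnd(o)$, which lies after its access. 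In one of the two orders this gives a cycle with the $\to_g$ edge between the accesses.
\item \emph{Case $o\in C'$.} Some $\onceEnd(o)$ precedes $t_1$'s access. By the $\to_o$ chain, that $\onceEnd(o)$ is either the ego's own (impossible, since $o\in A$ means the ego has not ended) or it precedes the ego's $\onceStart(o)$. The latter contradicts exclusivity, because an $\onceEnd(o)$ of a different thread followed by another start would require a further $\onceEnd$ before it. I will handle this case by showing $\onceEnd(o)$ occurs at most once per $\to_o$ successor chain until the ego ends, and then derive that the ordering of $e_0$ before $e_1$ in $\to_g$ produces a cycle through $\to_o$.
\end{itemize}

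The main obstacle is the $\onceStart$/$\Neg$ interplay: establishing exactness of $C$ where it matters, and making the mutual-exclusion cycle argument for once sections rigorous. I expect to reuse the lockset-digest soundness reasoning from~\cite{SchwarzE26}, treating each $o$ as a mutex held between $\onceStart(o)$ and $\onceEnd(o)$ and, after completion, as permanently released.
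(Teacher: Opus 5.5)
\paragraph{Gap 1: the case $o\in A\cap C'$ of the soundness proof.}
The decisive gap is in this case. An $\onceEnd(o)$ of another thread preceding the ego's $\onceStart(o)$ in the $\to_o$ chain contradicts no exclusivity: it is the ordinary situation of a second \texttt{pthread\_once} call. No counting of $\onceEnd(o)$ actions along the chain can rule it out, since every call contributes one. Nor is ``the ego's own $\onceEnd(o)$'' impossible: $t_1$ may contain it, and that is exactly the case that orders $e_0$ before $e_1$.

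Your mutex analogy cannot close this case, because a permanently released mutex gives no exclusion against later acquirers. In fact, suppose code could run between $\onceStart(o)$ and $\onceEnd(o)$ on the $\Pos$ branch. Then a second caller could race with code that the completing thread executes after its $\onceEnd(o)$, although $||^?$ returns $\textsf{false}$.

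The missing idea, which the paper uses, is the guard. An access with $o$ active lies in the inlined body, i.e., after $\Neg(\onceRan(o))$ on the ego lane. The concrete semantics permits this only if no $\onceEnd(o)$ is present. Hence the ego's $\onceStart(o)$ observed $\onceInit(o)$, and the ego runs the unique first section. Any $\onceEnd(o)$ in $t_1$'s past is therefore the ego's own or a later one. So $t_1$ already contains the ego's access, and the order with $e_1$ first fails.

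\paragraph{Gap 2: the abstraction function in the admissibility proof.}
Your setup does not work as stated. The ego's final configuration is the greatest element of a local trace, so your ``exact'' $C$ is just the set of $o$ with an $\onceEnd(o)$ anywhere in $t$. This does not commute with the given digest. For the other observing actions ($\lock(a)$, global accesses, $\join$), the right-hand side keeps $C_0$ and discards $C_1$, whereas the merged trace may contain an $\onceEnd(o)$ present only in $t_1$. As $\hat{\sqsubseteq}$ is flat, \labelcref{def:SemSound} demands equality. So ``merging traces preserves them'' is false, and an invariant $C\subseteq\{o\mid\onceEnd(o)\in t\}$ is no substitute for an abstraction function.

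The paper takes $o\in C$ iff some $\onceEnd(o)$ is backwards reachable along $\to_{po}$, once-order, and creation edges only. These are exactly the edges along which the digest propagates $C$, which turns every case into an equality.

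Finally, under your inclusion it is $\Pos(\onceRan(o))$, not $\Neg$, that needs the converse inclusion. It follows because all once actions form one $\to_o$/$\to_{po}$ chain ending at the ego's $\onceStart(o)$ just before the guard. By contrast, ``causally before the current configuration'' holds for every node and gives nothing.
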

\begin{proof}
    To verify admissibility, i.e., conditions \labelcref{def:initSound,def:SemSound,def:NewSound,def:CreateNewConsistent}, we first need to
    define
    \[
        \alpha_\Digests\,t_0 = (A_0,C_0)
    \]
    where $A_0$ is the set of control variables $o$ for which
    $\onceStart(o)$ appears along the ego-lane of $t_0$, but $\onceEnd(o)$
    does not.
    $C_0$, on the other hand, is the set of control variables $o$ for
    which $\onceEnd(o)$ is backwards reachable by following
    program order edges $\to_{po}$,
    once order edges $\to_{o_i}$, and thread creation edges $\to_{tc}$
    in reverse.
    This corresponds to those $\onceEnd(o)$ actions that are either along the ego-lane of $t_0$,
    or along the ego-lane of one of its parents up to the point where the parent created
    the ego thread, or that the current thread has learned about by communicating
    through other control variables.
    This choice may seem complicated when contrasted with just considering all
    $\onceEnd(o)$ actions that appear in the local trace, but is deliberate: in this
    way, information about control variables is not propagated along other pairs
    of observing/observable actions, improving performance of the analysis.
    We can then check conditions \labelcref{def:initSound,def:SemSound,def:NewSound,def:CreateNewConsistent}
    by induction on the length of the trace.
    The definition of $\initDigests = \{ (\emptyset,\emptyset) \}$ in \cref{f:onceASimple} coincides
    with requirement \labelcref{def:initSound}, as in all initial traces, no $\onceStart(o)$ or $\onceEnd(o)$
    actions have been executed.

    We need to verify requirement \labelcref{def:SemSound} for all edges, which we recall is:
    \begin{equation*}
    \begin{array}{lll}
        \textsf{map}\;\alpha_\Digests\;(\semT{e}(t_0,\cdots,t_{k-1}))
        \hat{\sqsubseteq} \semDigests{u,\act}(\alpha_\Digests\,t_0,\cdots,\alpha_\Digests\,t_{k-1})
    \end{array}
    \label{def:SemSound2}
    \end{equation*}
    where $\alpha_\Digests$ is mapped over the trace option returned by $\semT{e}(t_0,\cdots,t_{k-1})$, and $\hat{\sqsubseteq}$
    is the partial order on $\Digests \option$ with $\oNone$ as bottom and all other elements incomparable.

    \medskip
    \noindent
    Let us first consider an
    edge $ e \equiv (u,\Pos(\onceRan(o)),v)$ in the trace. We have:
    \[
    \semDigests{\_, \Pos(\onceRan(o))}\,(A,C) =\\
        \qquad \begin{cases}
            \oNone & \text{if } \texttt{o} \not\in C\\
            \oSome (A,C) & \text{otherwise}
        \end{cases}
    \]
    Consider a trace $t_0$:
    \begin{itemize}
        \item If $\sem{e}\,t_0 = \oNone$, we have
        \[
            \textsf{map}\;\alpha_\Digests\;(\semT{e}\,t_0) =
            \textsf{map}\;\alpha_\Digests\;\oNone = \oNone\,
            \hat{\sqsubseteq} \semDigests{u,\Pos(\onceRan(o))}(\alpha_\Digests\,t_0)
        \]
        which holds as $\oNone$ is the bottom element of the partial order $\hat{\sqsubseteq}$.
        \item Consider now the case where $\sem{e}\,t_0 = \oSome t_1$ and let
            $\alpha_\Digests\,t_0 = (A_0,C_0)$.
            \begin{itemize}
                \item Consider the case where $o \not\in C_0$. Then, $\onceEnd(o)$ does not appear in
                $t_0$: By construction, $\onceStart(o)$ appears in $t_0$ as the guard can syntactically
                only appear between $\onceStart(o)$ and $\onceEnd(o)$.
                All actions $\onceStart(o)$ and $\onceEnd(o)$ as well as actions in between are totally
                ordered by $\to_o$ and $\to_{po}$.
                Thus, if $\onceEnd(o)$ appears anywhere in $t_0$ it is
                backwards reachable by following
                program order edges $\to_{po}$,
                once order edges $\to_{o_i}$,
                and thread creation edges $\to_{tc}$.
                As $\onceEnd(o)$ does not appear in $t_0$, the code associated with $o$
                has not been executed, and thus $\sem{e}\,t_0 = \oNone$, which is a contradiction.
                \item It remains to consider the case where $o \in C_0$.
                Appending an action $\Pos(\onceRan(o))$ to the trace does not change the set of
                active or completed control variables, and thus $\alpha_\Digests\,t_1 = (A_0,C_0)$.
                Thus, we have:
                \[
                    \begin{array}{lll}
                        &&\textsf{map}\;\alpha_\Digests\;(\semT{e}\,t_0) =
                        \textsf{map}\;\alpha_\Digests\;(\oSome t_1) =
                        \oSome (A_0,C_0)\\
                        &\hat{\sqsubseteq}&
                        \oSome (A_0,C_0) =
                        \semDigests{u,\Pos(\onceRan(o))}(\alpha_\Digests\,t_0)
                    \end{array}
                \]
            \end{itemize}
    \end{itemize}
    Thus, the claim follows.
    The proofs proceed analogously for $ e \equiv (u,\Neg(\onceRan(o)),v)$,
    as well as for edges not involving control variables.

    \medskip
    \noindent
    Next, consider an edge $e \equiv (u,\onceStart(o),v)$. Since
    $\onceStart(o)$ observes either $\onceInit(o)$ or $\onceEnd(o)$, its
    concrete semantics has two arguments. Let
    $\alpha_\Digests\,t_i=(A_i,C_i)$ for $i\in\{0,1\}$. We have
    \[
        \semDigests{\_,\onceStart(o)}\,(A_0,C_0)\,(A_1,C_1)
        =
        \begin{cases}
            \oSome(A_0\cup\{o\},C_0\cup C_1) & \text{if }o\notin A_0,\\
            \oNone & \text{otherwise}.
        \end{cases}
    \]
    Again, if $\semT{e}(t_0,t_1)=\oNone$, requirement~\labelcref{def:SemSound} follows
    immediately because $\oNone$ is the least element of
    $\hat{\sqsubseteq}$. It therefore remains to consider
    $\semT{e}(t_0,t_1)=\oSome t'$.
    In this case, $o\notin A_0$: otherwise, the ego-lane of $t_0$ would
    contain an unmatched $\onceStart(o)$. A further $\onceStart(o)$ on the
    same lane cannot be supplied with a predecessor in the once order.
    The observable action supplying the unmatched start has already been
    consumed, while every later $\onceEnd(o)$ is ordered after the new start.
    Adding the required once-order edge would therefore either violate its
    uniqueness requirement or create a cycle with program order. Hence, the
    concrete successor could not be a local trace, contradicting
    $\semT{e}(t_0,t_1)=\oSome t'$.
    Appending the new action makes $o$ active on the ego-lane and leaves the
    activity of every other control variable unchanged. Consequently, the
    active component of $\alpha_\Digests\,t'$ is $A_0\cup\{o\}$.
    Moreover, the new once-order edge makes every completed action that is
    backwards reachable in either input trace also backwards reachable in
    $t'$, and it introduces no other $\onceEnd$ action.
    Thus, its completed component is $C_0\cup C_1$. We obtain
    \[
        \begin{array}{lll}
            &&\textsf{map}\;\alpha_\Digests\;(\semT{e}(t_0,t_1))
              = \oSome(A_0\cup\{o\},C_0\cup C_1)\\
            &\hat{\sqsubseteq}&
              \oSome(A_0\cup\{o\},C_0\cup C_1)
              =\semDigests{\_,\onceStart(o)}\,(A_0,C_0)\,(A_1,C_1).
        \end{array}
    \]

    \medskip
    \noindent
    Consider now an edge $e \equiv (u,\onceEnd(o),v)$. This action is
    non-observing, and its digest semantics is
    \[
        \semDigests{\_,\onceEnd(o)}\,(A_0,C_0)
        = \oSome(A_0\setminus\{o\},C_0\cup\{o\}).
    \]
    If $\semT{e}(t_0)=\oNone$, requirement~\labelcref{def:SemSound} again follows from the
    minimality of $\oNone$. Otherwise, let
    $\semT{e}(t_0)=\oSome t'$ and
    $\alpha_\Digests\,t_0=(A_0,C_0)$. Appending $\onceEnd(o)$ closes the
    section for $o$ on the ego-lane, while leaving all other active sections
    unchanged. Hence, the active component of $\alpha_\Digests\,t'$ is
    $A_0\setminus\{o\}$. The appended action is itself backwards reachable
    from the maximal element of $t'$, so $o$ is completed; all previously
    completed control variables remain backwards reachable. Therefore, the
    completed component is $C_0\cup\{o\}$, and
    \[
        \begin{array}{lll}
            &&\textsf{map}\;\alpha_\Digests\;(\semT{e}(t_0))
              = \oSome(A_0\setminus\{o\},C_0\cup\{o\})\\
            &\hat{\sqsubseteq}&
              \oSome(A_0\setminus\{o\},C_0\cup\{o\})
              =\semDigests{\_,\onceEnd(o)}\,(A_0,C_0).
        \end{array}
    \]

    This concludes the proof for requirement \labelcref{def:SemSound}.

    \medskip
    \noindent
    We next verify requirement~\labelcref{def:NewSound}, i.e.,
    \[
        \textsf{map}\;\alpha_\Digests\;(\new\;t\;u\;u_1)
        \mathrel{\hat{\sqsubseteq}}
        \newDigests\;(\alpha_\Digests\,t)\;u\;u_1
    \]
    for every trace $t$ and all program points $u,u_1$. Let
    $\alpha_\Digests\,t=(A_0,C_0)$. By definition,
    \[
        \newDigests\;(A_0,C_0)\;u\;u_1
        = \oSome(\emptyset,C_0).
    \]
    If $\new\;t\;u\;u_1=\oNone$, the left-hand side of requirement~\labelcref{def:NewSound}
    is $\oNone$, so the claim follows from the minimality of $\oNone$.
    Otherwise, let $\new\;t\;u\;u_1=\oSome t'$. The ego-lane of the newly
    created thread in $t'$ consists only of its initial configuration.
    In particular, it contains no unmatched $\onceStart(o)$ action, and the
    active component of $\alpha_\Digests\,t'$ is therefore empty.
    The thread-creation edge makes exactly the part of the creating thread's
    history available at the creation point backwards reachable from the new
    thread. Consequently, a $\onceEnd(o)$ is backwards reachable in $t'$ if
    and only if it contributes $o$ to $C_0$. Thus,
    $\alpha_\Digests\,t'=(\emptyset,C_0)$, and we obtain
    \[
        \begin{array}{lll}
            &&\textsf{map}\;\alpha_\Digests\;(\new\;t\;u\;u_1)
              = \oSome(\emptyset,C_0)\\
            &\hat{\sqsubseteq}&
              \oSome(\emptyset,C_0)
              = \newDigests\;(\alpha_\Digests\,t)\;u\;u_1.
        \end{array}
    \]
    This proves requirement~\labelcref{def:SemSound}.

    \medskip
    \noindent
    Finally, requirement~\labelcref{def:CreateNewConsistent} demands that, for every digest
    $(A_0,C_0)$,
    \[
        \semDigests{u,x=\create(u_1)}\,(A_0,C_0)\neq\oNone
        \quad\Longrightarrow\quad
        \newDigests\;(A_0,C_0)\;u\;u_1\neq\oNone.
    \]
    This follows directly from the definition of $\newDigests$: for every
    $(A_0,C_0)$ and all program points $u,u_1$,
    \[
        \newDigests\;(A_0,C_0)\;u\;u_1
        =\oSome(\emptyset,C_0)\neq\oNone.
    \]
    Thus, the conclusion of the implication holds independently of its
    premise, which proves requirement~\labelcref{def:CreateNewConsistent} and concludes the proof of
    admissibility.
    \medskip
    \noindent

  For the proof of soundness of $||^?$,
  we observe that actions $\onceStart(o)$ and $\onceEnd(o)$ and the actions appearing in between in any local
  trace are totally ordered by $\to_{o}$ and the program order $\to_{po}$.
  W.l.o.g., assume $(A \cap (A' \cup C') \neq \emptyset)$ and that the
  considered access by thread with digest $(A, C)$ is along an edge $e$.
  Consider $t \in \gamma_\Digests (A, C)$ and $t' \in \gamma_\Digests (A',C')$.
  If $o$ is in $A$ and in $A'$, for both traces to belong to the same execution, either $t$ is a subtrace
  of $t'$, or $t'$ is a subtrace of $t$, as $\onceEnd(o)$ needs to occur before a second thread can execute
  $\onceStart(o)$. Thus, $t$ and $t'$ cannot be bidirectionally trace-compatible.
  Now consider the case where $o \in A \cap C'$. Then, in $t'$, there is an edge $\Neg(\onceRan(o))$, meaning
  the code associated with $o$ was actually run inside of $t'$. For an access to occur with $o \in A$,
  $\Neg(\onceRan(o))$ needs to be contained in the ego-lane of $t$. As this holds for at most one thread
  in any execution, $t$ and $t'$ can only be compatible if they coincide up to the
  execution of the access. Thus, $t$ is a subtrace of $t'$, and
  $\sem{e}(t_0,t_1) = \oNone$ and the traces are not bidirectionally compatible.\qed
\end{proof}%
\medskip
Beyond supporting \pthreadOnce for checking for data races, one can also increase the precision of an analysis
by only reading those values that globals may have after the code inside \pthreadOnce has been executed.

A digest to help analyze this feature (\cref{f:onceA}) may then track \emph{three} pieces of information in a tuple $(A,C,O)$:
\begin{itemize}
    \item Some set $A$ of \emph{active} control variables as in \cref{ss:once}.
    \item Some set $C$ of \emph{completed} control variables, i.e., control variables for which the associated fragment
    has surely been run. Unlike in \cref{ss:once}, this is not only the case if $\onceEnd(o)$ was called by the ego
    thread or one of its parents, but also when the ego thread incorporates an observable action where the control variable
    is completed.
    \item Additionally, a map $O$ tracking for each mutex $a$, for which of the control variables $o$ the mutex $a$ was unlocked
    while $o$ was active.
\end{itemize}
The interesting case is the right-hand side for acquiring a mutex $a$: In case the ego thread has observed that the mutex $a$
was unlocked while some control variable $o$ was active, but the trace ending in the unlock of $a$ has neither observed
this nor is the control variable currently active, there must have been a later unlock of $a$, where
the control variable was active. Thus, no two traces with such digests can be compatible.
\begin{figure*}
    \begin{minipage}[t]{.4\linewidth}\[
        \begin{array}{lll}
        \Digests = 2^\Onces\times 2^\Onces \times (\Mutexes \to 2^{\Onces} )\\[0.5ex]
		\initDigests = \{(\emptyset,\emptyset,\{ a \mapsto \emptyset \mid a \in \Mutexes \})\}\\[0.5ex]
        \newDigests\,(A,C,O)\,u\,u_1= \oSome (\emptyset,C,O)\\[0.5ex]
        \semDigests{\_, \Pos(\onceRan(o))}\,(A,C,O) =\\
        \qquad \begin{cases}
            \oNone & \text{if } \texttt{o} \not\in C\\
            \oSome (A,C,O) & \text{otherwise}
        \end{cases}\\
        \semDigests{\_, \Neg(\onceRan(o))}\,(A,C,O) =\\
        \qquad \begin{cases}
            \oNone & \text{if } \texttt{o} \in C\\
            \oSome (A,C,O) & \text{otherwise}
        \end{cases}
        \end{array}%
    \]
    \end{minipage}%
    \begin{minipage}[t]{.6\linewidth}\[
        \begin{array}{lll}
            \semDigests{\_, \lock(a)}\,(A_0,C_0,O_0)\,(A_1,C_1,O_1) =\\
            \qquad
            \begin{cases}
                \oNone & \text{if } O_0\,a \not\subseteq (C_1 \cup A_1)\\
                \oSome (A_0, C_0 \cup C_1, O_0 \cup O_1) & \text{otherwise}
            \end{cases}\\[4ex]
            \semDigests{\_, \unlock(a)}\,(A,C,O) = \oSome (A, C, O \oplus \{ a \mapsto O\,a \cup A \})\\[2ex]

            \semDigests{\_, \act}\,(A_0,C_0,O_0)\,(A_1,C_1,O_1) =\\
            \qquad \oSome (A_0, C_0 \cup C_1, O_0 \cup O_1) \quad \text{(other observing)}\\[2ex]
            \semDigests{\_, \act}\,(A,C,O) = \oSome (A,C,O) \quad \text{(other non-observing)}\qquad\;\;
        \end{array}
        \]
    \end{minipage}\\
    \begin{minipage}[t]{\linewidth}\[
        \begin{array}{lll}
        \semDigests{\_, \onceStart(o)}\,(A_0,C_0,O_0)\,(A_1,C_1,O_1) =
        \begin{cases}
            \oSome (A_0 \cup \{o\}, C_0 \cup C_1, O_0 \cup O_1) & o \not\in A_0\\
            \oNone & \text{otherwise}
        \end{cases}\\
        \semDigests{\_, \onceEnd(o)}\,(A,C,O) = \oSome (A \setminus \{o\}, C \cup \{o\}, O)\\

        \end{array}%
    \]
    \end{minipage}\\
    \caption{Right-hand sides for expressing \pthreadOnce as a refinement, where $\cup$ on maps
    is defined as the point-wise union of the right-hand sides.}\label{f:onceA}
\end{figure*}
\begin{example}
    Consider again the program in \cref{f:once}, and an analysis associating values of globals with mutexes that are
    always held when the global is accessed, refined with the digests from \cref{f:onceA}.
    At the later $\lock(m_{dev})$ in both of the threads, the only reaching digests of the ego threads are
    $(\emptyset,\{o\},\{ m_{dev} \mapsto \{ o \}\})$. Thus, only those unknowns associated with observable actions with a digest
    $(A_1,C_1,O_1)$ where $o$ is either in $A_1$ or $C_1$ can be incorporated here. For all these unknowns, the value of $g$
    is known to not be null. Thus, the assertions can be shown in both threads.
    $\hfill\qed$
\end{example}

\section{Details on Barriers}\label{sec:barriers-app}
We first remark on the difference between barriers and signals.
\begin{remark}
    While the signalling mechanism of \pthreads\ at first glance seems similar to barriers of capacity one,
    the information that
    a call to \texttt{signal\_wait(s)} has returned cannot help exclude data races.
    \pthreads allows for spurious wakeups, so such a call may return even though $s$
    has not been signalled.
\end{remark}
We next turn to the soundness of using this digest for excluding races.
Recall that the barrier digest is built on a digest $\Digests_1$. We assume
that $\Digests_1$ is admissible and that its predicate
$||^?_{\Digests_1}$ is sound. Recall that in \cref{ss:digests} we have established
that $||^?_{\Digests_1}$ can be used to check whether arbitrary actions
may-happen-in-parallel.
We further assume that the base abstraction is insensitive to
the newly introduced barrier record and check actions. That is, extending a
trace by either action does not change its $\Digests_1$ abstraction, and the
corresponding base right-hand sides are identity functions. As required for the
generic thread IDs introduced in Section~2, $\pTID$ is constant along each
concrete thread and $\unique(q)$ implies that at most one concrete thread in
an execution has projected ID $q$.

Recall the following proposition from \cref{ss:barriers}:
\begin{proposition}
  Provided the digest $\NDigests{1}$ is admissible, and its predicate $||^{?}_{\NDigests{1}}$ is sound,
  the digest given in \cref{ss:barriers} is admissible, and its predicate $||^?$ sound.
\end{proposition}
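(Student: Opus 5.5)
We first fix the abstraction function. For a local trace $t$ with ego thread $i$, set $\alpha_\Digests\,t = (\alpha_{\NDigests{1}}\,t, R_t, O_t)$. Here $R_t$ is the set of barriers $b$ for which a $\textsf{barrier\_N\_record}(b)$ action lies on the ego-lane of $t$. The map $O_t$ sends a thread \emph{id} $q$ to the set of barriers $b$ for which some $\textsf{barrier\_N\_record}(b)$ action lies on the lane of a thread $j$ in $t$ with $\pTID$ of $j$ equal to $q$, and is backwards reachable from the maximal element of $t$. The backwards-reachability condition is not a real restriction, since every configuration of a local trace is below the maximal element.

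Admissibility is then checked requirement by requirement, following the scheme used for \cref{p:onceSound}. In each case we first dispose of $\oNone$ on the concrete side, since it is the bottom of $\hat{\sqsubseteq}$.
\begin{itemize}
\item \labelcref{def:initSound} holds because initial traces contain no record actions, so $R=\emptyset$ and $O=\emptyset$.
\item \labelcref{def:NewSound}: the new thread's lane contains only its initial configuration, so $R=\emptyset$. The create edge keeps the creator's history visible, so $O$ is unchanged. The first component follows from \labelcref{def:NewSound} for $\NDigests{1}$.
\item \labelcref{def:CreateNewConsistent} follows directly from the corresponding property of $\NDigests{1}$.
\item For $\textsf{barrier\_N\_record}(b)$, appending the action adds $b$ to $R$ and adds $b$ to $O$ at the ego thread's \emph{id}. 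The base component is unchanged by the insensitivity assumption.
\item For any other $n$-ary action, the resulting trace is the union of its argument traces plus new order edges. Its observed record actions are therefore exactly the union of those of the arguments, which gives $\bigcup_i O_i$. The ego-lane is that of $t_0$, so $R$ stays $R_0$. The base component follows from admissibility of $\NDigests{1}$.
\end{itemize}

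The substantive case is $\textsf{barrier\_N\_check}(b)$ with $\semT{e}(t_0,\dots,t_{N-1})=\oSome t'$. We must show that both conditions in \cref{f:barrierA} hold for the digests of the $t_i$.

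\textbf{Condition $A_i\,||^?\,A_j$ for $i \neq j$.} The local-trace semantics demands that none of the $N$ subtraces contains another's relevant record action. So for $i\neq j$ the record actions ending $t_i$ and $t_j$ are unordered in $t'$. Since each record endpoint has at most one outgoing $\to_b$ edge, replacing both records by writes to a fresh global gives bidirectionally trace-compatible traces. This uses the generalization of $||^?$ to arbitrary edges from \cref{ss:digests}. Soundness of $||^?_{\NDigests{1}}$ then yields $A_i\,||^?\,A_j=\top$.

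\textbf{Condition $\unique(\pTID\,A_i) \implies O_j(\pTID\,A_i)\subseteq R_i$.} Suppose $\pTID\,A_i=q$ is unique and $b'\in O_j(q)$. Then $t_j$ contains a record of $b'$ by the unique thread with \emph{id} $q$, which is the ego thread of $t_i$. Since $t_i$ and $t_j$ are parts of one execution, this is the same concrete thread. That record must lie on this thread's lane before its current point: if it came later, then $t_i$ would be a prefix of $t_j$'s history, contradicting the non-containment requirement. Hence $b'\in R_i$.

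This step is the main obstacle. It needs a careful lemma: in a consistent union, the lane of a unique thread in $t_j$ is a prefix of, or extension of, the ego-lane of $t_i$, and an extension is excluded because $t_i$'s record would then be contained in $t_j$. The check's result $(A_0,R_0,\bigcup_i O_i)$ then follows as in the generic case.

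Soundness of $||^?$ is proved by contradiction. Suppose bidirectionally compatible traces $t_0,t_1$ exist with digests $(A_0,R_0,O_0)$ and $(A_1,R_1,O_1)$ while the predicate returns \textsf{false}. If the base predicate $A_0\,||^?_{\NDigests{1}}\,A_1$ is \textsf{false}, this contradicts its soundness. Otherwise, without loss of generality, $q=\pTID\,A_0$ is unique and some $b\in O_1(q)\setminus R_0$. Then $t_1$ contains a record action of thread $q$ that is not on $t_0$'s ego-lane, so it lies strictly after $t_0$'s endpoint on that lane. Consequently $t_0$'s endpoint precedes $t_1$'s endpoint in any trace containing both. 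This contradicts the order in which $t_0$'s access follows $t_1$'s, which bidirectional compatibility requires to be possible.
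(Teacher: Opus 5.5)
Your proposal follows essentially the same route as the paper: the same abstraction $(\alpha_{\NDigests{1}}\,t, R(t), O(t))$, and the same requirement-by-requirement admissibility check in which only $\textsf{barrier\_N\_check}(b)$ is substantive (pairwise parallelism of the record actions plus soundness of $||^?_{\NDigests{1}}$, and the unique-thread prefix argument for $O_j(\pTID\,A_i)\subseteq R_i$), together with the same contradiction argument for $||^?$. One small tightening is needed in your final step: $t_0$'s endpoint preceding $t_1$'s endpoint does not by itself rule out either access order, so the contradiction must instead come from $t_1$ containing a configuration of thread $q$'s lane strictly beyond $t_0$'s endpoint, which puts $t_0$'s access itself in $t_1$'s causal past (the paper phrases this as a super-trace of $t_0$ containing the access being a subtrace of $t_1$).
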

\begin{proof}
  Even though the definition of the predicate $||^?$ appears in the right-hand side of the digests,
  the proof can proceed step-by-step, first showing admissibility and turning to the soundness of $||^?$
  later.
    Let $\alpha_{\Digests_1}$ be the abstraction function of the base digest.
    We define the abstraction function for
    the product digest by
    \[
        \alpha_\Digests\,t
        =(\alpha_{\Digests_1}\,t,R(t),O(t)).
    \]
    Here, $R(t)$ contains exactly those barriers $b$ for which a
    $\textsf{barrier\_N\_record}(b)$ action occurs on the ego-lane of $t$.
    For a projected thread ID $q$, the set $O(t)\,q$ contains $b$ if $t$
    contains a $\textsf{barrier\_N\_record}(b)$ action performed by a thread
    whose base digest at that action projects to $q$. Thus, $R$ describes
    records of the ego thread, whereas $O$ also includes records learned
    from incorporated traces. We verify requirements~\labelcref{def:initSound,def:SemSound,def:NewSound,def:CreateNewConsistent}.

    \medskip
    \noindent
    For requirement~\labelcref{def:initSound}, an initial trace contains no barrier action.
    Therefore, $R(t)=\emptyset$ and $O(t)$ is the everywhere-empty map for
    every $t\in\init$. Its base component belongs to
    $\initNDigests{1}$ by admissibility of $\Digests_1$. Consequently,
    \[
        \{\,\alpha_\Digests\,t\mid t\in\init\,\}
        =
        \{\,(A,\emptyset,\emptyset)
            \mid A\in\initNDigests{1}\,\}
        =\initDigests.
    \]

    \medskip
    \noindent
    We next verify requirement~\labelcref{def:SemSound}. As before, if a concrete right-hand side
    returns $\oNone$, the requirement follows immediately because $\oNone$
    is the least element of $\hat{\sqsubseteq}$. In the following cases, we
    may therefore assume that the concrete operation succeeds.

    First, let
    \[
        e=(u,\textsf{barrier\_N\_record}(b),v),\qquad
        \semT{e}(t_0)=\oSome t',
    \]
    and write $\alpha_\Digests\,t_0=(A_0,R_0,O_0)$. The record action is
    appended to the ego-lane. Barrier actions do not affect the base digest,
    so the first component remains $A_0$. The new record adds $b$ to $R_0$.
    It is performed by the ego thread, whose projected ID is
    $\pTID\,A_0$, and therefore also adds $b$ to the corresponding binding
    of $O_0$. No other binding changes. Hence,
    \[
    \begin{split}
        \alpha_\Digests\,t'
        =\bigl(&A_0,R_0\cup\{b\},\\
               &O_0\oplus
               \{(\pTID\,A_0)\mapsto
                   (O_0(\pTID\,A_0))\cup\{b\}\}\bigr),
    \end{split}
    \]
    which is exactly
    $\aSemDigest{\_,\textsf{barrier\_N\_record}(b)}
      (A_0,R_0,O_0)$.

    Now consider
    \[
        e=(u,\textsf{barrier\_N\_check}(b),v),\qquad
        \semT{e}(t_0,\ldots,t_{N-1})=\oSome t',
    \]
    and let
    $\alpha_\Digests\,t_i=(A_i,R_i,O_i)$.
    Concrete success means that the $N$ traces ending in the corresponding
    record action are distinct, mutually compatible, and can occur in
    parallel. Soundness of $||^?_{\Digests_1}$ therefore gives
    \[
        A_i||^?_{\Digests_1}A_j=\top
        \qquad\text{for every }i\neq j.
    \]
    It also implies the second check in the abstract right-hand side. To see
    this, fix $i\neq j$ and suppose that $\pTID\,A_i$ is unique. Every
    barrier record in $O_j(\pTID\,A_i)$ was then performed by the unique
    ego thread represented by $A_i$. If such a record were not in $R_i$,
    $t_j$ would already contain a strict continuation of the ego-lane of
    $t_i$. The two record actions supplied to the same barrier check could
    then not occur in parallel: combining the traces would either duplicate
    a barrier-order predecessor or create a causality cycle. This would
    contradict the assumed success of the concrete check. Therefore,
    \[
        \unique(\pTID\,A_i)
        \quad\Longrightarrow\quad
        O_j(\pTID\,A_i)\subseteq R_i.
    \]
    Thus, the guard of the successful branch of the abstract right-hand
    side holds.

    The check action adds no barrier record to its ego-lane, so the resulting
    $R$ component is $R_0$. It incorporates all input traces, and hence a
    record is available in the result exactly when it is available in at
    least one input. Its observation map is consequently the pointwise union
    $\bigcup_{i=0}^{N-1}O_i$. Finally, the check does not alter the base
    digest of the ego thread. We conclude that
    \[
        \alpha_\Digests\,t'
        =(A_0,R_0,\bigcup_{i=0}^{N-1}O_i),
    \]
    which agrees with the successful branch of
    $\aSemDigest{\_,\textsf{barrier\_N\_check}(b)}$.

    It remains to consider an edge $e=(u,\act,v)$ for any other $n$-ary
    action. Let
    $\alpha_\Digests\,t_i=(A_i,R_i,O_i)$ and assume
    $\semT{e}(t_0,\ldots,t_{n-1})=\oSome t'$. By requirement~\labelcref{def:SemSound} for the
    base digest and the flat order on options, its abstract right-hand side
    returns precisely the base component $A'$ of $t'$:
    \[
        \aSemNDigest{1}{u,\act}(A_0,\ldots,A_{n-1})
        =\oSome A'.
    \]
    The action adds no barrier record to the ego-lane, so the second
    component remains $R_0$. All records incorporated from the argument
    traces are captured by the pointwise union of their observation maps.
    Therefore,
    \[
        \alpha_\Digests\,t'
        =(A',R_0,\bigcup_{i=0}^{n-1}O_i),
    \]
    exactly as returned by the product right-hand side. This proves
    requirement~\labelcref{def:SemSound} in all cases.

    \medskip
    \noindent
    For requirement~\labelcref{def:NewSound}, let
    $\alpha_\Digests\,t=(A_0,R_0,O_0)$. If
    $\new\;t\;u\;u_1=\oNone$, the claim follows from the minimality of
    $\oNone$. Otherwise, let
    $\new\;t\;u\;u_1=\oSome t'$. Requirement~\labelcref{def:NewSound} for the base digest yields
    \[
        \newNDigests{1}\;A_0\;u\;u_1=\oSome A'
        \quad\text{with}\quad
        \alpha_{\Digests_1}\,t'=A'.
    \]
    The new thread's ego-lane initially contains no barrier record, so its
    $R$ component is empty. Through the creation edge, it inherits all
    records known at the creation point, and hence its observation map is
    $O_0$. It follows that
    \[
        \alpha_\Digests\,t'=(A',\emptyset,O_0)
        =\newDigests\;(A_0,R_0,O_0)\;u\;u_1,
    \]
    establishing requirement~\labelcref{def:NewSound}.

    \medskip
    \noindent
    Finally, suppose the premise of requirement~\labelcref{def:CreateNewConsistent} holds:
    \[
        \aSemDigest{u,x=\create(u_1)}(A_0,R_0,O_0)\neq\oNone.
    \]
    By the definition of the product right-hand side, the corresponding base
    right-hand side is also different from $\oNone$. Requirement~\labelcref{def:CreateNewConsistent} for
    $\Digests_1$ then gives
    \[
        \newNDigests{1}\;A_0\;u\;u_1=\oSome A'
    \]
    for some $A'$. By definition of the product operation,
    \[
        \newDigests\;(A_0,R_0,O_0)\;u\;u_1
        =\oSome(A',\emptyset,O_0)\neq\oNone.
    \]
    This proves requirement~\labelcref{def:CreateNewConsistent}, and hence the barrier digest is admissible.
    \medskip
    \noindent
  Let us now turn to the soundness of $||^?$:
  The first check is sound by soundness of $||^{?}_{\NDigests{1}}$. We show that the second
  argument for $\sqcap$ returning $\textsf{false}$ implies that the traces are not bi-directionally
  compatible.
  W.l.o.g., assume $(\unique\,(\pTID\,A_0) \land O_1\,(\pTID\,A_0) \not\subseteq R_0)$ and that the
  considered access by thread with digest $(A_0,R_0,O_0)$ is along an edge $e$.
  Consider $t_0 \in \gamma_\Digests (A_0,R_0,O_0)$ and $t_1 \in \gamma_\Digests (A_1,R_1,O_1)$.
  Then, trace $t_1$
  has observed an action performed by the ego-thread of $t_0$ that itself is not part of $t_0$ yet. Thus,
  a super-trace of $t_0$ containing the access and the observed action is a subtrace of $t_1$. Thus,
  $\sem{e}(t_0,t_1) = \oNone$ and the claim follows.\qed
\end{proof}

\section{Digests for Descendant Lockset}\label{app:descendant-lockset-long}

A digest helping with analyzing the pattern described in Section \ref{s:descendant-lockset} tracks the following information in a triple $((A,S), H, D)$:
\begin{itemize}
  \item A tuple $(A,S)$ holding the value of the thread \emph{id} digest $A$ and the value of the lockset digest $S$
  \item A map $H$, where each mutex $m$ is mapped to the set of thread \emph{id}s. This set includes only thread \emph{id}s, in which $m$ was locked. In \cref{e:descendant-ls}, $H$ holds the relevant information we need at the access to \texttt{global} in $i_1$.
  \item A map $D$, where we map each thread \emph{id} $i$ to a set of mutexes. If $m$ is included in the set, the conditions 2 and 3 listed in section \ref{s:descendant-lockset}.
\end{itemize}

\begin{figure*}
\[
\begin{array}{l}
\Digests = \NDigests{T\times S} \times (\Mutexes\to2^\TIDs) \times (\TIDs \to 2^\Mutexes)\\[1ex]

\newDigests\,((A,S),H,D)\,u\,u_1=\\
\quad\begin{cases}
  \oSome\,((A',S'),H,\{i\mapsto \Mutexes\mid i\in \TIDs\}) & \mathrm{if}\, \newNDigests{T\times S}\,(A, S)\,u\,u_1=\oSome\,(A',S')\\
  \oNone & \mathrm{otherwise}
\end{cases}\\[1ex]

\initDigests=
\{((A,S),\emptyset,\{i\mapsto \Mutexes\mid i\in\TIDs\})
  \mid (A,S)\in\initNDigests{T\times S}\}\\[1ex]

\aSemDigest{u, \create\,u_0}\,((A,S),H,D)=\\
\quad\begin{cases}
  \begin{aligned}
    &\Let\,\oSome A_\mathrm{new}=\newNDigests{T}\,A\,u\,u_0\,\In\\
    &\Let\,i_{\mathrm{ego}}=\pTID\,A\,\In\\
    &\Let\,I_\mathrm{new}^*=\{i\in\TIDs\mid
      \pTID\,A_\mathrm{new}\in (\{i\}\cup\mustanc\,i)\}\,\In\\
    &\Let\,D'=D\oplus \{i^*_\mathrm{new}\mapsto S\cap D\,i^*_\mathrm{new}
      \mid i^*_\mathrm{new}\in I^*_\mathrm{new},
      i_\mathrm{ego}\in\mustanc\,i^*_\mathrm{new}\}\,\In\\
    &\oSome\,((A',S),H,D')
  \end{aligned}
  & \mathrm{if}\ \aSemNDigest{T}{u,\create\,u_0}\,A = \oSome A'\\
  \oNone & \mathrm{otherwise}
\end{cases} \\[1ex]

\aSemDigest{u,\lock(a)}\,((A_0,S_0),H_0,D_0)\,((A_1,S_1),H_1,D_1)=\\
\quad\begin{cases}
  \oSome\,((A_0,S'),H_0\oplus\{a\mapsto (H_0\,a)\cup\{\pTID\,A_0\}\},D_0)
    & \mathrm{if}\ \aSemNDigest{S}{u,\lock(a)}\,S_0\,S_1=\oSome S'\\
  \oNone & \mathrm{otherwise}
\end{cases}\\[1ex]

\aSemDigest{u,\unlock(a)}\,((A,S),H,D)=\\
\quad\Let\,D'=\{i\mapsto D\,i\setminus\{a\}\mid i\in\TIDs\}\,\In\\
\quad\begin{cases}
  \oSome\,((A,S'),H,D')&\mathrm{if} \aSemNDigest{S}{u,\unlock(a)}\,S =\oSome S'\\
  \oNone & \mathrm{otherwise}
\end{cases}\\[1ex]

\aSemDigest{\_, \act}(((A_0,S_0),H_0,D_0),\dots,((A_{N-1},S_{N-1}),H_{N-1}, D_{N-1})) =\\
    \quad \begin{cases}
    \oSome ((A',S'),H_0,D_0) & \text{if } \aSemNDigest{T\times S}{\_,\act}((A_0,S_0),\dots,(A_{N-1},S_{N-1})) =\\&\quad \oSome (A',S')\\
    \oNone & \text{otherwise} \qquad\qquad\qquad\qquad\qquad\qquad \text{(other)}
    \end{cases}
\end{array}
\]
\caption{Descendant lockset digest, where $\oplus$ denotes updating a map. Missing bindings map to the empty set.}
\label{f:descendant-lockset-digest}
\end{figure*}

We then define the race check as:
\[
\begin{array}{l}
  (A_0,S_0,H_0, D_0)\, ||^{?}\, (A_1,S_1,H_1,D_1) =\Let\, i_0 = \pTID\,A_0\,\In\,\Let\, i_1 = \pTID\,A_1\,\In\\
  \begin{cases}
    \false &\mathrm{if\ }
    \begin{array}{l}
      ((\{i_1\}\cup\mustanc\,i_1)\cap \maycreate\,A_0\ne\emptyset\;\land\\
      \hphantom{(}(\exists i_{H_1}\in \TIDs, s\in\Mutexes: i_{H_1}\in H_1\,s\land s\in D_0\,i_1\land i_0\in\mustanc\,i_{H_1}))\;\lor\\
      \color{darkgray}((\{i_0\}\cup\mustanc\,i_0)\cap \maycreate\,A_1\ne\emptyset\;\land\\
      \color{darkgray}\hphantom{(}(\exists i_{H_0}\in \TIDs, s\in\Mutexes: i_{H_0}\in H_0\,s\land s\in D_1\,i_0\land i_1\in\mustanc\,i_{H_0}))
    \end{array}
    \\
    \top &\mathrm{otherwise}
  \end{cases}
\end{array}
\]

\begin{example}
  Considering the program from \cref{f:descendant-ls-example} again, the digests have the following interesting values at the accesses to \texttt{global}:
  \begin{itemize}
    \item At the access in $i_1$, the $H$ component has value \{$m\mapsto \{\maintid,i_1\}\}$.
    \item At the access in $\maintid$, the $D$ component has value $\{i_1\mapsto\{m\}, i_2\mapsto\{m\}\}$.
  \end{itemize}
  With these values and the may-race predicate given above, we can exclude the possibility of a data race.
\end{example}

\begin{proposition}
  Provided the digest $\NDigests{T\times S}$ is admissible, the digest given above is admissible and its definition for $||^?$ is sound.
\end{proposition}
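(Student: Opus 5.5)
Following the pattern of the proofs for \cref{p:onceSound} and \cref{prop:barr-sound}, I will first fix a concrete abstraction function and check requirements \labelcref{def:initSound,def:SemSound,def:NewSound,def:CreateNewConsistent} by induction on the construction of local traces. Only afterwards will I turn to the soundness of $||^?$.

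I define $\alpha_\Digests\,t = (\alpha_{T\times S}\,t, H(t), D(t))$ as follows.
\begin{itemize}
\item $H(t)\,m$ is the set of projected thread \emph{id}s of threads whose $\lock(m)$ occurs on the ego-lane of $t$. Concretely, this is the ego thread itself, matching the right-hand side, which only adds $\pTID\,A_0$ and does not merge $H_1$.
\item For $D(t)$, I use the ``all mutexes'' convention for \emph{id}s that have not yet been created, since this is what $\initDigests$ and $\newDigests$ start from. For every $i$ that the ego thread must be an ancestor of and that may be (transitively) created through a creation on the ego-lane, $m \in D(t)\,i$ holds iff $m$ was held by the ego thread at every such creation and no $\unlock(m)$ follows on the ego-lane after the first of them.
\end{itemize}
I will state $D$ as an invariant: if $m \in D(t)\,i$ and a thread with \emph{id} $i$ or a descendant thereof was created by the ego lane, then $m$ has been continuously held by the ego thread since that creation. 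The cases are then routine:
\begin{itemize}
\item \textbf{lock and unlock.} For $\lock$, the base lockset digest gives the $S$ component and $H$ gains the ego \emph{id}. For $\unlock$, removing $a$ from every $D\,i$ preserves the invariant.
\item \textbf{create.} Intersecting $D\,i$ with the current lockset $S$ for all $i \in I^*_{\mathrm{new}}$ with $i_{\mathrm{ego}} \in \mustanc\,i$ exactly captures ``held at all such creations''. This uses soundness of $\newNDigests{T}$ and of $\mustanc$.
\item \textbf{new and other actions.} $\newDigests$ resets $D$ to $\Mutexes$ everywhere, which is correct because the new ego lane has created nothing yet. All other actions leave $H$ and $D$ unchanged. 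The only exception would be lock-free information from incorporated traces, which is deliberately not merged.
\item \textbf{Remaining requirements.} Requirements \labelcref{def:initSound} and \labelcref{def:CreateNewConsistent} follow directly from those for $\NDigests{T\times S}$.
\end{itemize}

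For $||^?$, I assume w.l.o.g.\ that the first disjunct holds for $t_0 \in \gamma(A_0,S_0,H_0,D_0)$ and $t_1 \in \gamma(\ldots)$, and derive that $t_0$ and $t_1$ are not bidirectionally compatible (using \cref{prop:data-race-bidirectional-compat}). The argument has five steps:
\begin{enumerate}
\item From $(\{i_1\}\cup\mustanc\,i_1)\cap\maycreate\,A_0 \neq \emptyset$, I get that the ego lane of $t_0$ has created $i_1$ or one of its ancestors.
\item From $s \in D_0\,i_1$ and the invariant, $s$ has been held by thread $i_0$ continuously from that creation up to the access in $t_0$.
\item From $i_{H_1} \in H_1\,s$, some thread with \emph{id} $i_{H_1}$ locked $s$ in the history of $t_1$. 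Since $i_0 \in \mustanc\,i_{H_1}$, that thread descends from $i_0$, so this lock occurs causally after some creation performed by $i_0$.
\item The mutex order $\to_s$ then forces that lock to come after the next $\unlock(s)$ of $i_0$. By step 2, that unlock lies strictly after the access in $t_0$.
\item Hence $t_1$ contains a continuation of $t_0$ beyond the access, so appending $t_1$'s access before $t_0$'s yields a cycle, and the traces are not bidirectionally compatible.
\end{enumerate}

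The main obstacle is step 3. It requires that the thread locking $s$ descends from the \emph{specific} creation that $D_0$ tracked, rather than from an earlier creation by $i_0$ made before $s$ was locked. I would first try to handle this by requiring the $D$-invariant over \emph{all} creations by $i_0$ of ancestors of $i_1$, combined with $I^*_{\mathrm{new}}$ covering $i_{H_1}$'s lineage. If that fails, I would need to argue via uniqueness of the involved thread \emph{id}s, which may force an additional $\unique$ assumption to be made explicit.
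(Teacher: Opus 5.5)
Your overall route is the paper's: fix $\alpha_\Digests$, check the four admissibility requirements case by case, then show that the first disjunct forces $t_1$ to contain a continuation of $t_0$'s ego lane past the access. Two parts do not go through as written.

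\textbf{Admissibility.} Because $\hat\sqsubseteq$ is flat, \labelcref{def:SemSound,def:NewSound} require $\alpha_\Digests$ of every successful concrete successor to \emph{equal} the right-hand side's result. Showing that an invariant is preserved is not enough, and your $\alpha_\Digests$ disagrees with the right-hand sides in two places. First, $\newDigests$ passes the creator's $H$ to the child; this is why $H$ at $i_1$ contains $\maintid$ in the paper's example. Your $H$ records only locks on the ego lane. Second, the unlock right-hand side deletes $a$ from \emph{every} entry of $D$, including ids not yet created. Your $D$ keeps $\Mutexes$ for those ids and ignores unlocks before the first creation. The paper avoids this by defining $H$ and $D$ as the fold of the right-hand sides along the creation-extended ego lane, so each admissibility case is bookkeeping. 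It uses the semantic reading only for $||^?$: $s\in D_0\,i_1$ iff $s$ was held at every creation on the lane that updates the entry of $i_1$, and $s$ was never unlocked on the lane. That reading already settles your worry about an earlier creation made before $s$ was locked, provided that creation is one that updates the entry of $i_1$.

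\textbf{Step~3 of the $||^?$ argument.} You leave step~3 open, so the $||^?$ part is not yet a proof. The paper closes it by simply asserting that, whichever recorded creation in $i_0$ led to the thread of $t_1$, the lock of $s$ follows it because $i_0\in\mustanc\,i_{H_1}$. This tacitly needs two things: (a) the concrete thread with id $i_0$ that is an ancestor of $i_{H_1}$ is the ego thread of $t_0$; (b) its creation starting the chain to $i_{H_1}$ updates $D_0\,i_1$. Your suspicion is justified: neither follows from the generic thread-id assumptions.

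For (a), let two threads $T_a,T_b$ share a non-unique id $i_0$, with $i_0\in\mustanc\,i_1$. Each runs $\lock(s)$; create $C$; $g{=}1$; $\unlock(s)$, and $C$ runs $\lock(s)$; $\unlock(s)$; $g{=}2$. The first disjunct holds for $T_b$'s write and the write of $T_a$'s child, with $i_{H_1}=i_1$. Yet that child can pass its critical section before $T_b$ enters its own and then write concurrently with $T_b$.

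For (b), let $\maintid$ create $P$, then lock $s$, create a $C$-thread with id $q$, write $g$, and unlock $s$. Meanwhile $P$ creates another $C$-thread with the same id $q$. Take the exact $\mustanc\,q=\{\maintid\}$. Then the creation of $P$ never touches $D_0\,q$, and the disjunct fires for $\maintid$'s write and the write of $P$'s child, which race.

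So both of your fallbacks must become explicit hypotheses:
\begin{itemize}
\item ids in $\mustanc$ are unique;
\item ids encode the creation path, as the history-based ids of~\cite{schwarz2023clustered} do, so that the first conjunct guarantees the creation starting the chain updates $D_0\,i_1$.
\end{itemize}
Then finish by cases; this also repairs your step~1, which reads the may-set $\maycreate\,A_0$ as a must-fact. If that creation lies beyond the end of $t_0$, then $t_1$ already contains a continuation of $t_0$'s ego lane. Otherwise $s$ was held at the creation and never released up to the access. Hence the $\to_s$-predecessor of $i_{H_1}$'s $\lock(s)$ lies at or after $i_0$'s next $\unlock(s)$, which is past the access.
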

\begin{proof}
  For proof of admissibility, we inductively verify that the digest fulfills
  \labelcref{def:initSound,def:SemSound,def:NewSound,def:CreateNewConsistent}.
  Let $\alpha_B$ be the abstraction function for $\NDigests{T\times S}$  and define
  \[
      \alpha_\Digests\,t=(\alpha_B\,t,H(t),D(t)).
  \]
  The auxiliary components are defined inductively along the creation-extended ego lane of the local trace.
  For traces in $\init$, $H$ maps every
  mutex to $\emptyset$, while $D$ maps every thread ID to $\Mutexes$.
  Executing $\lock(a)$ in a thread with projected ID $i$ adds $i$ to
  $H(a)$. A create action intersects $D(i)$ with the current lockset for
  precisely the possible descendant IDs selected by
  $I_{\mathrm{new}}^*$ in \cref{f:descendant-lockset-digest}. Executing
  $\unlock(a)$ removes $a$ from every entry of $D$. Other actions leave
  both components unchanged. A new thread inherits $H$ from its creator
  and starts with a map mapping all thread \emph{id}s to $\Mutexes$ for $D$.

  For requirement~\labelcref{def:initSound}, an initial trace contains no lock, unlock, or
  creation action. Hence, for every $t\in\init$,
  \[
      \alpha_\Digests\,t
      =((A,S),\emptyset,\{i\mapsto\Mutexes\mid i\in\TIDs\}) \in \initDigests
  \]
  where $(A,S)\in\initNDigests{T\times S}$. The converse inclusion follows
  from requirement~\labelcref{def:initSound} for the base digest, proving equality with the
  displayed $\initDigests$.

  We verify requirement~\labelcref{def:SemSound} by cases. A concrete result $\oNone$ is
  below every abstract result, so only successful concrete operations
  require consideration.

  Let $e\equiv(u,\create\,u_0,v)$,
  $\semT{e}(t)=\oSome t'$, and
  $\alpha_\Digests\,t=((A,S),H,D)$. Base admissibility and flatness of the
  option order give
  \[
      \aSemNDigest{T}{u,\create\,u_0}A=\oSome A',
      \qquad \alpha_B\,t'=(A',S).
  \]
  By requirement~\labelcref{def:CreateNewConsistent} for the
  thread-ID digest we have
  $\newNDigests{T}A\,u\,u_0=\oSome A_{\mathrm{new}}$ for some
  $A_{\mathrm{new}}$. Thus, the pattern match defining
  $A_{\mathrm{new}}$ in the successful branch is total. By the definition
  of $D$, the new creation adds the current-lockset condition for exactly
  the IDs in the resulting $I_{\mathrm{new}}^*$, so
  \[
      D'=D\oplus
      \{i\mapsto S\cap D(i)
        \mid i\in I_{\mathrm{new}}^*,
             \pTID\, A\in\mustanc\, i\}.
  \]
  A create performs no lock, hence $H$ is unchanged and
  $\alpha_\Digests\,t'=((A',S),H,D')$, coinciding with the result of the right-hand side.

  Next, consider
  $\semT{(u,\lock(a),v)}(t_0,t_1)=\oSome t'$ and let
  $\alpha_\Digests\,t_k=((A_k,S_k),H_k,D_k)$. The right-hand side of the base digest yield
  $\aSemNDigest{S}{u,\lock(a)}S_0=\oSome S'$ and leaves $A_0$
  unchanged. The action records that the ego thread, whose ID is
  $\pTID\, A_0$, has locked $a$; it does not modify $D$. Therefore,
  \[
      \alpha_\Digests\,t'
      =((A_0,S'),
        H_0\oplus\{a\mapsto H_0(a)\cup\{\pTID\, A_0\}\},D_0),
  \]
  agreeing with the result of the right-hand side.

  For a successful unlock operation,
  $\semT{(u,\unlock(a),v)}(t)=\oSome t'$ with
  $\alpha_\Digests\,t=((A,S),H,D)$, base admissibility yields
  $\aSemNDigest{S}{u,\unlock(a)}S=\oSome S'$.
  This unlock invalidates the uninterrupted-holding condition for $a$
  for every potential descendant ID, while changing no $H$ entry. Thus,
  \[
      \alpha_\Digests\,t'
      =((A,S'),H,\{i\mapsto D(i)\setminus\{a\}\mid i\in\TIDs\}),
  \]
  agreeing with the result of the right-hand side.

  For every other $n$-ary action, requirement~\labelcref{def:SemSound} for the base digest and
  flatness give the exact successful base successor $(A',S')$. Such an
  action changes neither auxiliary component by definition, so the
  product successor is $((A',S'),H_0,D_0)$, as specified by the
  ``other'' case in \cref{f:descendant-lockset-digest}. This proves requirement~\labelcref{def:SemSound}.

  For requirement~\labelcref{def:NewSound}, let
  $\alpha_\Digests\,t=((A,S),H,D)$.
  It once again suffices to consider only the case where the edge succeeds in the concrete.
  For
  $\new\;t\;u\;u_1=\oSome t'$, base requirement~\labelcref{def:NewSound} and flatness yield
  \[
      \newNDigests{T\times S}(A,S)\,u\,u_1=\oSome(A',S'),
      \qquad \alpha_B\,t'=(A',S').
  \]
  By definition, the child inherits $H$, has not itself established any
  creation constraint, and therefore starts with the everywhere-$\Mutexes$ map. Hence,
  \[
      \alpha_\Digests\,t'
      =((A',S'),H,\{i\mapsto\Mutexes\mid i\in\TIDs\}),
  \]
  which coincides with the definition of $\newDigests$.

  Finally, if the premise of requirement~\labelcref{def:CreateNewConsistent} holds for
  $((A,S),H,D)$, by
  requirements~\labelcref{def:SemSound,def:CreateNewConsistent} makes
  $\newNDigests{T\times S}(A,S)\,u\,u_1$ different from $\oNone$.
  The definition of the product new-thread operation then returns
  $\oSome((A',S'),H,\{i\mapsto\Mutexes\mid i\in\TIDs\})$ for the
  corresponding base result $(A',S')$, proving requirement~\labelcref{def:CreateNewConsistent}.
  Thus, the descendant-lockset digest is admissible.

  We now turn to the proof of soundness of $||^?$:
  W.l.o.g., assume the first disjunct of the $\false$ case holds.
  Consider two local traces $t$ and $t'$ where $t\in\gamma_\Digests(A_0,S_0,H_0, D_0)$
  and $t'\in\gamma_\Digests(A_1,S_1,H_1, D_1)$.

  Since $(\{i_1\}\cup\mustanc\,i_1)\cap \maycreate\,A_0\ne\emptyset$, $t$ has observed at least one
  transitive creation of threads associated with $i_1$---the thread \emph{id} at the end of $t'$.
  As $s\in D_0\, i_1$, all of those creations happened while $s$ was in the lockset and none of the
  creations is followed by $\unlock(s)$ in $i_0$.

  Since $i_{H_1}\in H_1\,s$, $t'$ has observed $\lock(s)$ in $i_{H_1}$. No matter which of the transitive creations
  in $i_0$ described above resulted in the creation of the thread at the end of $t'$---this lock happens after that creation,
   as $i_0\in \mustanc\,i_{H_1}$. The lock must also happen after a succeeding $\unlock(s)$ following the creation in $i_0$,
   since $s$ is in the lockset at that creation. Since no such unlock is included in $t$ whereas the later lock is
   included in $t'$, compatibility can hold only if $t$ is a subtrace of $t'$. Hence, the traces are not bidirectionally compatible.\qed
\end{proof}

\section{Digests for Creation Locksets}\label{app:creation-lockset-long}

We propose the digest in \cref{f:descendant-ls-digest} to analyze the conditions given at the end of \cref{s:creation-lockset}.
Such a digest is a triple $((A,S,J),C,O)$:
\begin{itemize}
  \item A triple $(A,S,J)$ representing the thread \emph{id} and lockset digest introduced in \cref{ss:digests}, and the join digest by~\citet{hammerandnail}.
  \item A map $C$ mapping each thread \emph{id} $i$ to a set of mutexes, where $m$ is included if conditions 2 and 3 are satisfied.
  \item A map $O$ mapping each thread \emph{id} $i$ to a pair. In the
  digest of the ego thread, $O\,i$ stores the $C$ component and the set of
  may-created thread \emph{id}s of the most recently observed
  sub-trace belonging to a thread represented by $i$. Such an observation occurs
  when the ego thread locks a mutex after that thread unlocks it. We use $O$
  to compare the lockset at a program point with inter-threaded mutex
  protections of other threads.
\end{itemize}

\begin{figure*}
  \[
    \begin{array}{l}
      \Digests = \NDigests{T\times S\times J} \times (\TIDs \to 2^\Mutexes) \times (\TIDs \to ((\TIDs\to 2^\Mutexes)\times 2^\TIDs))\\[1ex]

      \newDigests\,((A,S,J),C,O)\,u\,u_1=\\
      \quad \Let\, O'=\{i\mapsto(\emptyset,\emptyset)\mid i\in \TIDs\}\,\In\\
      \quad
      \begin{cases}
        \oSome ((A',S',J'),\emptyset,O') & \mathrm{if}\, \newNDigests{T\times S\times J}\,(A,S,J)\,u\,u_1 = \oSome (A',S',J')\\
        \oNone & \mathrm{otherwise}
      \end{cases}\\[1ex]

      \initDigests=
      \{((A,S,J),\emptyset,\{i\mapsto(\emptyset,\emptyset)\mid i\in\TIDs\})
      \mid (A,S,J)\in\initNDigests{T\times S\times J}\}\\[1ex]

      \aSemDigest{u, \create\,u_0}\,((A,S,J),C,O)=\\
      \quad\begin{cases}
        \begin{aligned}
          &\Let\,\oSome A_\mathrm{new}=\newNDigests{T}\,A\,u\,u_0\,\In\\
          &\Let\,i_{\mathrm{ego}}=\pTID\,A\,\In\\
          &\Let\,I_\mathrm{new}^*=\{i\in\TIDs\mid
            \pTID\,A_\mathrm{new}\in (\{i\}\cup\mustanc\,i)\}\,\In\\
          &\Let\,C'=C\oplus \{i^*_\mathrm{new}\mapsto S\cap C\, i^*_\mathrm{new}
            \mid i^*_\mathrm{new}\in I^*_\mathrm{new},
            i_\mathrm{ego}\in\mustanc\,i^*_\mathrm{new}\}\,\In\\
          &\oSome ((A',S,J),C',O)
        \end{aligned}
        & \mathrm{if}\ \aSemNDigest{T}{u,\create\,u_0}\,A=\oSome A'\\
        \oNone & \mathrm{otherwise}
      \end{cases}
      \\[1ex]

      \aSemDigest{u, \lock(a)}\,((A_0,S_0,J_0),C_0,O_0)\,((A_1,S_1,J_1),C_1,O_1)=\\
      \quad\Let\,O'=O_0\oplus\left\{\pTID\,A_1\mapsto (C_1,\maycreate\,A_1)\right\}\\
      \quad\begin{cases}
        \oSome((A_0,S',J_0),C_0,O') & \mathrm{if}\,\aSemNDigest{S}{u, \lock(a)}\,S_0 =\oSome S'\\
        \oNone & \mathrm{otherwise}
      \end{cases}\\[1ex]

      \aSemDigest{u, \unlock(a)}\,((A,S,J),C,O)=\\
      \quad\Let\,i_{\mathrm{ego}}=\pTID\,A\,\In\\
      \quad\Let\,I_\mathrm{running}=\{i \in \TIDs\mid\exists i_\mathrm{created}\in \maycreate\, A:
      i_\mathrm{created}\in\{i\}\cup\mustanc\, i\}\setminus J\,\In\\
      \quad\Let\,C'=C\oplus\{i_\mathrm{running}\mapsto C\,i_\mathrm{running}\setminus\{a\}\mid i_\mathrm{running}\in I_\mathrm{running} \}\,\In\\
      \quad\begin{cases}
        \oSome\,((A,S',J),C',O)& \mathrm{if}\, \aSemNDigest{S}{u, \unlock(a)}\,S= \oSome S'\\
        \oNone & \mathrm{otherwise}
      \end{cases}\\[1ex]

      \aSemDigest{\_, act}(((A_0,S_0,J_0),C_0,O_0),\dots,((A_{N-1},S_{N-1}, J_{N-1}),C_{N-1}, O_{N-1})) =\\
      \quad \begin{cases}
        \oSome ((A',S',J'),C_0,O_0) & \text{if } \aSemNDigest{T\times S\times J}{\_,act}((A_0,S_0,J_0),\dots,(A_{N-1},S_{N-1},J_{N-1})) =\\&\quad \oSome (A',S',J')\\
        \oNone & \text{otherwise} \qquad\qquad\qquad\qquad\qquad\qquad \text{(other)}
      \end{cases}
    \end{array}
  \]
  \caption{Creation lockset digest, where $\oplus$ denotes updating a map and missing bindings map to $\Mutexes$.}
  \label{f:descendant-ls-digest}
\end{figure*}
To determine if two program points with the digests explained above may race, we can evaluate the following predicate:

\[
  \begin{array}{l}
    ((A_0,S_0,J_0),C_0,O_0)\, ||^{?}\, ((A_1,S_1,J_1),C_1,O_1) =\\
    \qquad\Let\, i_0 = \pTID\,A_0\,\In\,\\
    \qquad\Let\, i_1 = \pTID\,A_1\,\In\\
    \qquad\begin{cases}
      \false &\mathrm{if\ }
      \begin{array}{l}
        (\exists i_O\in\TIDs:i_O\ne i_1\land\Let\,(C_O,I_O')=O_1\,i_O\,\In\\
        \quad C_O\,i_0\cap S_1\ne\emptyset\;\land(\{i_0\}\cup\mustanc\,i_0)\cap I'_O \neq\emptyset)\;\lor\\
        \color{darkgray}(\exists i_O\in\TIDs:i_O\ne i_0\land\Let\,(C_O,I_O')=O_0\,i_O\,\In\\
        \color{darkgray}\quad C_O\,i_1\cap S_0\ne\emptyset\;\land(\{i_1\}\cup\mustanc\,i_1)\cap I'_O \neq\emptyset)
      \end{array}
      \\
      \top &\mathrm{otherwise}
    \end{cases}
  \end{array}
\]
Consider the first disjunct for the \textsf{false} case.
It checks whether it can be shown that the access of thread $i_1$
must happen after the access of thread $i_0$.
Concretely, $O_1\,i_O=(C_O,I'_O)$ means that $i_1$ observed, through a lock it or one
of its ancestors acquired, an action of a thread with thread \emph{id} $i_O$.
The condition
$({\{i_0\}\cup\mustanc\,i_0})\cap I'_O\ne\emptyset$ establishes that this
observation occurred after $i_O$ had created either $i_0$ or one of its
mandatory ancestors.
If $C_O\,i_0\cap S_1\ne\emptyset$ there is some mutex,
say $m$, that is held at the access by $i_1$ and that was held
throughout the execution of $i_0$.
Thus, we obtain that $i_0$ or one of its mandatory ancestors must have been
created before the access in $i_1$ occurred and $m$ was held throughout $i_0$'s
execution. As $i_0 \neq i_1$, for $i_1$ to succeed in acquiring $m$, it must have been released,
and thus $i_0$ must have completed its execution before $i_1$'s access.
The second, gray disjunct checks the symmetric situation.

\begin{example}
  Consider the program from \cref{f:creation-ls-example} again. At the access to \texttt{g} in $i_1$, the $O$ component of the digest is $$\{\maintid\mapsto(\{\mathit{work}\mapsto\{m\}, i_1\mapsto\emptyset\}, \{i_1,\mathit{work}\})\},$$ where missing bindings of the outer map default to $(\emptyset,\emptyset)$ and missing bindings of the inner map default to $\Mutexes$. Together with $S=\{m\}$, this suffices to rule out a data race with any program point having thread \emph{id} $\mathit{work}$.
\end{example}

\begin{proposition}
  Provided the product digest $\NDigests{T\times S\times J}$ is admissible,
  the digest given above is admissible and its definition for $||^?$ is sound.
\end{proposition}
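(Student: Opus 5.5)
The plan follows the pattern of the proofs for the descendant-lockset and barrier digests: first establish admissibility by defining an abstraction function and checking \labelcref{def:initSound,def:SemSound,def:NewSound,def:CreateNewConsistent}, then prove soundness of $||^?$ by contradiction against bidirectional trace compatibility.

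For admissibility, write $\alpha_B$ for the abstraction of $\NDigests{T\times S\times J}$ and set $\alpha_\Digests\,t=(\alpha_B\,t,C(t),O(t))$. The components $C(t)$ and $O(t)$ are defined inductively along the creation-extended ego lane.
\begin{itemize}
\item $C(t)\,i$ is the set of mutexes $m$ that were in the ego lockset at every transitive creation (by the ego thread) of a thread with id $i$ or with a must-ancestor of id $i$, and that have not been unlocked by the ego thread since, while such a descendant was possibly still running, i.e.\ not yet joined according to $J$.
\item $O(t)\,i$ is the pair $(C,\maycreate)$ taken from the most recent trace of a thread with id $i$ that was incorporated by a $\lock$.
\end{itemize}
Requirement~\labelcref{def:initSound} is immediate. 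Requirements~\labelcref{def:NewSound,def:CreateNewConsistent} follow from the corresponding base requirements, since the new thread has no creations and no observations of its own. For \labelcref{def:SemSound}, only successful concrete steps need to be considered, and flatness of $\hat{\sqsubseteq}$ makes the base component exact. The remaining cases are checked one by one:
\begin{itemize}
\item $\create$ intersects $C$ with $S$ exactly on $I^*_{\mathrm{new}}$.
\item $\unlock(a)$ removes $a$ exactly for $I_{\mathrm{running}}$. Here I use that $J$ soundly under-approximates the joined threads, so every not-yet-joined descendant id lies outside $J$.
\item $\lock(a)$ overwrites $O$ at $\pTID\,A_1$ with the observed thread's $C_1$ and $\maycreate\,A_1$.
\item All other actions leave $C$ and $O$ unchanged.
\end{itemize}

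For soundness of $||^?$, assume without loss of generality that the first disjunct holds, witnessed by $i_O\neq i_1$ and a mutex $m\in C_O\,i_0\cap S_1$. Take $t_0,t_1$ in the concretizations. By construction of $O_1$, the trace $t_1$ contains a $\lock$ by the ego chain of $t_1$ that observed an $\unlock$ by a thread $\theta$ with id $i_O$. At the observed point, $\theta$ had already created a thread whose id lies in $\{i_0\}\cup\mustanc\,i_0$. Since that id is a must-ancestor of (or equal to) $i_0$, the thread of $t_0$ descends from a creation by $\theta$ that happens-before the access in $t_1$. Because $m\in C_O\,i_0$, the mutex $m$ was held by $\theta$ at that creation and not unlocked while the descendant could still run.

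The access in $t_1$ holds $m$ and is performed by a different thread, as $i_0\neq i_1$. So the $\to_m$ chain leading to the lock of $m$ before that access must pass through an $\unlock(m)$ by $\theta$ after the creation. By the invariant, that unlock occurs only after the $t_0$ thread has been joined, hence after its access. Therefore $t_1$ contains a strict extension of the ego lane of $t_0$ beyond the access. The composition $\semT{e_0}(t_0,\semT{e_1}(t_1,t'))$ would then either duplicate configurations or create a cycle, so the traces are not bidirectionally compatible.

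I expect the main obstacle to be this last ordering argument. It needs three facts: that the unlock of $m$ reaching $i_1$ is by $\theta$ and comes after the relevant creation, which is where $\mathit{mustanc}$ rather than ``may-ancestor'' is essential; that $J$ soundly reflects must-joined threads, which I take from admissibility of the join digest of~\citet{hammerandnail}; and that the stale last-observed entry in $O$ still lies after the creation. I would first try to state an explicit invariant that $C(t)\,i\ni m$ implies every $\unlock(m)$ in the ego lane after the creation is preceded by a join of all threads with id $i$, and prove this invariant alongside \labelcref{def:SemSound}.
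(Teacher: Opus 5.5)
Your admissibility argument is the paper's: you also define $C(t)$ and $O(t)$ by replaying the right-hand sides along the creation-extended ego lane, so every case is exact. Soundness of $J$ is needed only for the semantic reading of $C$, not for admissibility. Your soundness argument also follows the paper's route: mutual exclusion on some $m\in C_O\,i_0\cap S_1$ between $\theta$ and the ego thread of $t_1$. These two threads differ because $i_O\neq i_1$ (not $i_0\neq i_1$ as you write). Your first step is correct: that ego thread acquires $m$ only after an $\unlock(m)$ of $\theta$ that follows the creation.

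The step that fails is ``by the invariant, that unlock occurs only after the $t_0$ thread has been joined''. $C_O$ is a snapshot of $\theta$ taken at the observed $\unlock(a)$, for an \emph{arbitrary} mutex $a$. Your invariant therefore constrains only the $\unlock(m)$ actions of $\theta$ up to that snapshot. The $\unlock(m)$ through which the $\to_m$ chain passes may lie strictly later.

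This cannot be repaired, because the predicate itself is unsound. Consider the following program:
\begin{itemize}
\item $\maintid$ creates $\tau_1$ and $\rho$, then executes $\lock(m)$, $\create(\tau_0)$, $\lock(a)$, $\unlock(a)$, $\unlock(m)$, and never joins $\tau_0$;
\item $\rho$ executes $\lock(m);\unlock(m)$;
\item $\tau_1$ executes $\lock(a);\lock(m)$ and then writes $g$;
\item $\tau_0$ writes $g$.
\end{itemize}
Take the execution where $\tau_1$'s $\lock(a)$ observes $\maintid$'s $\unlock(a)$, and its $\lock(m)$ observes $\rho$'s $\unlock(m)$, which follows $\maintid$'s $\unlock(m)$. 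At $\tau_1$'s write, $S_1=\{a,m\}$. Moreover, $O_1\,\maintid$ is still the snapshot from $\maintid$'s $\unlock(a)$, with $C_O\,i_0=\{m\}$ and $i_0\in I'_O$; the later lock only overwrites $O_1\,i_\rho$. The first disjunct therefore yields $\false$. Yet the two writes are unordered, hence bidirectionally trace compatible, since $\maintid$ released $m$ while $\tau_0$ may still run.

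The paper's own proof makes the identical leap (``thread $i_O$ holds $s$ throughout the execution of every relevant thread represented by $i_0$''). So the statement needs a modified digest or predicate, not a better proof. One option is, for $m\in S_1$, to trust only snapshots produced by $\theta$'s own $\unlock(m)$. Since entries of $C$ only shrink along a lane, $m\in C_O\,i_0$ there forces $i_0\in J$ already at $\theta$'s first release of $m$ after the creation.

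Independently, you infer from $(\{i_0\}\cup\mustanc\,i_0)\cap I'_O\neq\emptyset$ that the thread of $t_0$ descends from a creation by $\theta$ before the snapshot. That inference needs $\maycreate$ to be exact on past creations and the relevant ids to be unique, which neither you nor the paper justifies.
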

\begin{proof}
  Let $\alpha_B$ be the abstraction function of the base product digest.
  Define
  \[
    \alpha_\Digests\,t=(\alpha_B\,t,C(t),O(t))
  \]
  inductively along the creation-extended ego lane of $t$. Initially, $C$
  maps every thread \emph{id} to $\Mutexes$ (represented by the empty finite
  map), and $O$ maps every thread \emph{id} to
  $(\emptyset,\emptyset)$. A create intersects the entries of $C$ selected
  by $I_\mathrm{new}^*$ with the current lockset. An $\unlock(a)$ removes
  $a$ from the entries selected by $I_\mathrm{running}$. If a $\lock(a)$
  observes a trace $s$ ending in the matching unlock, it stores
  $(C(s),\maycreate A_s)$ in the $O$ entry for $\pTID A_s$, where
  $\alpha_B\,s=(A_s,S_s,J_s)$. All other actions leave $C$ and $O$
  unchanged. A new thread starts with the initial values of both maps.

  We verify
  \labelcref{def:initSound,def:SemSound,def:NewSound,def:CreateNewConsistent}.
  For requirement~\labelcref{def:initSound}, an initial trace contains no
  creation, unlock, or observation. Hence
  \[
    \alpha_\Digests t
      =((A,S,J),\emptyset,
        \{i\mapsto(\emptyset,\emptyset)\mid i\in\TIDs\})
  \]
  for some $(A,S,J)\in\initNDigests{T\times S\times J}$. Conversely,
  every initial base digest is represented by an initial trace by
  requirement~\labelcref{def:initSound} for the base digest. This proves
  equality with the definition of $\initDigests$.

  For requirement~\labelcref{def:SemSound}, a concrete result $\oNone$ is
  below every abstract result, so it remains to consider successful
  concrete operations. First let $e=(u,\create\,u_0,v)$,
  $\semT{e}(t)=\oSome t'$, and
  \[
    \alpha_\Digests t=((A,S,J),C,O).
  \]
  Base admissibility and flatness of the option order give
  \[
    \aSemNDigest{T}{u,\create\,u_0}A=\oSome A',
    \qquad \alpha_B\,t'=(A',S,J).
  \]
  Admissibility and the componentwise definition of the base product make
  the match $\newNDigests{T}A\,u\,u_0=\oSome A_\mathrm{new}$ total. Let
  $I_\mathrm{new}^*$ be the set induced by $A_\mathrm{new}$ in the
  right-hand side and let
  \[
    C'=C\oplus
      \{i\mapsto S\cap C\,i
        \mid i\in I_\mathrm{new}^*,\ \pTID A\in\mustanc i\}.
  \]
  Exactly these entries acquire the current creation constraint; the create
  performs no observation. Thus
  $\alpha_\Digests t'=((A',S,J),C',O)$, which is the result of the abstract
  create transfer.

  Next suppose
  $\semT{(u,\lock(a),v)}(t_0,t_1)=\oSome t'$, where $t_1$ ends in the
  observed $\unlock(a)$, and write
  \[
    \alpha_\Digests t_k=((A_k,S_k,J_k),C_k,O_k)
    \quad(k\in\{0,1\}).
  \]
  The componentwise base transfer and base admissibility yield
  $\aSemNDigest{S}{u,\lock(a)}S_0=\oSome S'$ and leave the ego thread's
  $A_0$ and $J_0$ unchanged. The lock changes no creation constraint, and
  its mutex-order edge observes precisely $t_1$. Therefore, with
  \[
    O'=O_0\oplus
      \{\pTID A_1\mapsto(C_1,\maycreate A_1)\},
  \]
  we have $\alpha_\Digests t'=((A_0,S',J_0),C_0,O')$, exactly as specified
  by the abstract lock transfer.

  For a successful $\unlock(a)$, let
  $\alpha_\Digests t=((A,S,J),C,O)$. Base admissibility gives
  $\aSemNDigest{S}{u,\unlock(a)}S=\oSome S'$. The set
  \[
    I_\mathrm{running}
      =\{i\in\TIDs\mid\exists i_\mathrm{created}\in\maycreate A:
          i_\mathrm{created}\in\{i\}\cup\mustanc i\}\setminus J
  \]
  contains exactly the possible descendants not known to have been joined.
  Unlocking $a$ invalidates the uninterrupted-holding condition for those
  entries and performs no observation. Hence, for
  \[
    C'=C\oplus
      \{i\mapsto C\,i\setminus\{a\}\mid i\in I_\mathrm{running}\},
  \]
  we obtain $\alpha_\Digests t'=((A,S',J),C',O)$, the result of the
  abstract unlock transfer.

  Finally, for any other successful $n$-ary action, base admissibility and
  flatness give the exact base successor $B'$ from the base digests
  $B_0,\ldots,B_{n-1}$. Such an action changes neither auxiliary component,
  so the successor is $(B',C_0,O_0)$, agreeing with the ``other'' transfer.
  This proves requirement~\labelcref{def:SemSound}.

  For requirement~\labelcref{def:NewSound}, suppose
  $\alpha_\Digests t=(B,C,O)$ and $\new\;t\;u\;u_1=\oSome t'$. Base
  admissibility and flatness yield
  \[
    \newNDigests{T\times S\times J}B\,u\,u_1=\oSome B',
    \qquad \alpha_B\,t'=B'.
  \]
  A new thread has not created a descendant, invalidated a creation
  lockset, or observed another thread. Consequently
  \[
    \alpha_\Digests t'
      =(B',\emptyset,
        \{i\mapsto(\emptyset,\emptyset)\mid i\in\TIDs\}),
  \]
  precisely the result of $\newDigests$. The unsuccessful concrete case
  follows from the minimality of $\oNone$.

  Finally, suppose the premise of
  requirement~\labelcref{def:CreateNewConsistent} holds for $(B,C,O)$.
  Then the corresponding base create transfer succeeds. Requirement
  \labelcref{def:CreateNewConsistent} for the base product yields
  $\newNDigests{T\times S\times J}B\,u\,u_1=\oSome B'$ for some $B'$.
  The displayed definition of $\newDigests$ consequently returns
  \[
    \oSome(B',\emptyset,
      \{i\mapsto(\emptyset,\emptyset)\mid i\in\TIDs\})\ne\oNone.
  \]
  This proves requirement~\labelcref{def:CreateNewConsistent} and hence
  admissibility.

  We now turn to the proof of soundness of $||^?$:
  W.l.o.g., assume the first disjunct of the $\false$ case holds. Consider $t_0\in\gamma_\Digests((A_0,S_0,J_0),C_0,O_0)$ and $t_1\in\gamma_\Digests((A_1,S_1,J_1),C_1,O_1)$.

  Consider some thread with \emph{id} $i_O$, from which the thread at the end of $t_1$ has observed $(C_O,I'_O):=O_1\,i_O$. Since $(\{i_0\}\cup\mustanc\,i_0)\cap I'_O \neq\emptyset$, threads associated with $i_0$ have already been transitively created. Choose $s\in C_O\,i_0\cap S_1$. By the meaning of $C_O\,i_0$, thread $i_O$ holds $s$ throughout the execution of every relevant thread represented by $i_0$. At the same time, $s\in S_1$ means that the ego thread represented by $i_1$ holds $s$ at the access in $t_1$. Since $i_O\ne i_1$, these are different threads. Mutual exclusion orders the corresponding critical sections by $\to_s$, so the accesses cannot be bidirectionally trace-compatible.\qed
\end{proof}

\newcommand{\unreach}{$\texttt{unreach-call}$\ }
\newcommand{\norace} {$\texttt{no-data-race}$\ }

\section{Experimental Setup}\label{app:experiments}
\subsection{Benchmark Subjects}
Our benchmark sets follow the conventions of the \textsc{SV-Comp} software-verification competition.
Each benchmark instance comprises source code together with at least one property to verify:

\begin{itemize}
 \item \norace, which evaluates to \textbf{true} if no data race occurs in any execution of the program;
 \item \unreach, which evaluates to \textbf{true} if the special function \texttt{reach\_error()} is unreachable in every execution of the program; and
 \item other properties, which are not relevant to the benchmarks considered here.
\end{itemize}

The possible verdicts are \textbf{true}, \textbf{false}, \textbf{unknown}, and \textbf{error}. An analyzer that cannot
decide a property conclusively usually returns \textbf{unknown}. Because \textsc{SV-Comp} provides no direct way to
specify invariants, we encode them using program guards and the \unreach property.
Although these benchmarks are formally \unreach tests, they let us check whether an analyzer handles synchronization features correctly.

We use a consistent naming scheme for the benchmark programs. Each filename contains four components separated by underscores.
They identify, in order, the challenge (``b'' for ``barrier,'' ``cl'' for ``creation lockset,''
``dl'' for ``descendant lockset,'' and ``o'' for ``once''),
the property (``n'' for \norace and ``u'' for \unreach), the expected verdict (``true'' or ``false''),
and a number that uniquely identifies the benchmark.

\subsection{Soundness and Precision}

For both \norace and \unreach, we classify verdicts as follows:

\newcommand{\chec}{$\textcolor{teal}\checkmark$}
\newcommand{\parti}{ $\textcolor{red}{\textbf{-}}$ }
\newcommand{\nono}{ $\textcolor{red}{\textbf{+}}$ }
\begin{center}
\begin{tabular}{llcc}
\toprule
\multicolumn{2}{c}{} & \multicolumn{2}{c}{analysis verdict} \\
\cmidrule(lr){3-4}
& & \textbf{TRUE} & \textbf{FALSE} \\
\midrule
\multirow{2}{*}{ground truth }
  & \textbf{TRUE }  & \chec & \textcolor{orange}{ false positive} \\
  & \textbf{FALSE } & \textcolor{red}{ false negative} & \chec \\
\bottomrule
\end{tabular}
\end{center}

\noindent Under this formulation, a \textcolor{orange}{false positive} is a \textbf{FALSE} verdict when
\textbf{TRUE} is expected and indicates imprecision. A \textcolor{red}{false negative} is a \textbf{TRUE} verdict when
\textbf{FALSE} is expected and indicates unsoundness.
\textit{Because abstract interpretation overapproximates program behavior, an abstract interpreter generally cannot substantiate a
\textbf{FALSE} verdict when \textbf{FALSE} is expected and therefore usually returns \textbf{UNKNOWN}.}

\subsection{Comparative Benchmark Baseline}

We selected the baseline tools from \textsc{SV-Comp} participants that performed well
in concurrent-program verification categories. For each tool, we used the parameters specified by its configuration
for the relevant categories. \textsc{RacerF} is an abstract interpreter that supports only the \norace property, so
we did not evaluate it on \unreach.

\begin{center}
\small
\begin{tabularx}{\textwidth}{@{}l r >{\raggedright\arraybackslash\ttfamily}X@{}}
\toprule
\textbf{Tool} & \textbf{Version} & \normalfont\textbf{Configuration} \\
\midrule
\textsc{Racerf} & 2.2 & ./racerf-sv.py -machdep gcc\_x86\_64 \$FILE \\
\textsc{ESBMC} \textit{(base)} & 7.7.0 & ./esbmc --no-div-by-zero-check --force-malloc-success --state-hashing --add-symex-value-sets --no-align-check --k-step 2 --floatbv --unlimited-k-steps --no-vla-size-check \$FILE --64 --no-por --context-bound 3 --witness-output witness.graphml --no-pointer-check --no-bounds-check --incremental-bmc \\
\textsc{ESBMC} \textit{(no-data-race)} & & --data-races-check --no-assertions \\
\textsc{ESBMC} \textit{(unreach-call)} & & --enable-unreachability-intrinsic \\
\textsc{CPAChecker} & 4.2.2 & ./cpachecker --svcomp26 --heap 10000M --benchmark --timelimit 900 s --stats --spec \$PROPERTY --64 \$FILE \\
\textsc{U. Gemcutter} & 0.3.1 & ./Ultimate.py --spec \$PROPERTY --file \$FILE --full-output --architecture 64bit \\
\textsc{Dartagnan} & 4.1.0 & ./Dartagnan-SVCOMP.sh \$PROPERTY \$FILE \\
\bottomrule
\end{tabularx}
\end{center}

\subsection{Benchmark Execution Times}
\noindent We ran the benchmarks on a machine with a 2.5\,GHz Intel Core i7-11700 CPU, 64\,GB of RAM, and Ubuntu 26.04.
Each tool invocation was limited to 5 minutes and 20\,GB of RAM.

{\scriptsize
\setlength{\tabcolsep}{4pt}
\begin{longtable}{l | r | r r r r r r}
\caption{Tool execution times (ms) per input file.} \label{tab:tool_runtimes} \\
\toprule
\textbf{Input File}
      & \rotatebox{90}{\textit{lines of code} }
      & \rotatebox{90}{\textsc{CPAchecker} }
      & \rotatebox{90}{\textsc{Dartagnan} }
      & \rotatebox{90}{\textsc{ESBMC} }
      & \rotatebox{90}{\textsc{RacerF} }
      & \rotatebox{90}{\textsc{UGemCutter} }
      & \rotatebox{90}{\textsc{Goblint} }
      \\
\midrule
\endfirsthead
\multicolumn{8}{c}{\small\bfseries Table \thetable\ continued from previous page} \\
\toprule
\textbf{Input File}
      & \rotatebox{90}{\textit{lines of code} }
      & \rotatebox{90}{\textsc{CPAchecker} }
      & \rotatebox{90}{\textsc{Dartagnan} }
      & \rotatebox{90}{\textsc{ESBMC} }
      & \rotatebox{90}{\textsc{RacerF} }
      & \rotatebox{90}{\textsc{UGemCutter} }
      & \rotatebox{90}{\textsc{Goblint} }
      \\\midrule
\endhead
\bottomrule
\multicolumn{8}{r}{\small Continued on next page} \\
\endfoot
\bottomrule
\endlastfoot
\texttt{b\_n\_false\_1.c}  & 36     & 3013 & 2144 & 1337 & 2941 & 11543 & 82 \\
\texttt{b\_n\_false\_2.c}  & 30     & 2676 & 2017 & 1199 & 3371 & 9471 & 71 \\
\texttt{b\_n\_false\_3.c}  & 29     & 2530 & 2152 & 851  & 2852 & 8991 & 65 \\
\texttt{b\_n\_false\_4.c}  & 33     & 2551 & 2217 & 805  & 2754 & 10669 & 66 \\
\texttt{b\_n\_false\_5.c}  & 34     & 2867 & 2134 & 798  & 2677 & 9867 & 64 \\
\texttt{b\_n\_true\_1.c}   & 35     & 2533 & 2115 & 1316 & 2656 & 9235 & 65 \\
\texttt{b\_n\_true\_2.c}   & 40     & 2532 & 3275 & 2299 & 2891 & 9215 & 74 \\
\texttt{b\_n\_true\_3.c}   & 39     & 2645 & 2161 & 953  & 2851 & 10709 & 93 \\
\texttt{b\_n\_true\_4.c}   & 35     & 2486 & 2423 & 851  & 2846 & 8933 & 59 \\
\texttt{b\_n\_true\_5.c}   & 31     & 2616 & 2398 & 894  & 2708 & 9389 & 56 \\
\texttt{b\_n\_true\_6.c}   & 31     & 2342 & 2239 & 795  & 2788 & 9473 & 65 \\
\texttt{b\_n\_true\_7.c}   & 30     & 2582 & 2475 & 901  & 2812 & 9133 & 59 \\
\texttt{b\_u\_false\_1.i}  & 1420   & 2983 & 2040 & 720  & - & 11506 & 53 \\
\texttt{b\_u\_false\_2.i}  & 1418   & 2736 & 1646 & 720  & - & 13664 & 53 \\
\texttt{b\_u\_false\_3.i}  & 1431   & 3696 & 1655 & 797  & - & 13042 & 76 \\
\texttt{b\_u\_false\_4.i}  & 1422   & 2948 & 1701 & 710  & - & 12700 & 62 \\
\texttt{b\_u\_false\_5.i}  & 1417   & 3037 & 3195 & 797  & - & 12433 & 51 \\
\texttt{b\_u\_false\_6.i}  & 1411   & 2718 & 1762 & 772  & - & 12139 & 47 \\
\texttt{b\_u\_true\_1.i}   & 1415   & 3051 & 1684 & 711  & - & 12992 & 46 \\
\texttt{b\_u\_true\_2.i}   & 1427   & 3376 & 1619 & 701  & - & 14886 & 50 \\
\texttt{b\_u\_true\_3.i}   & 1429   & 2841 & 1667 & 812  & - & 13995 & 56 \\
\texttt{b\_u\_true\_4.i}   & 1424   & 2892 & 3036 & 860  & - & 13357 & 84 \\
\texttt{b\_u\_true\_5.i}   & 1425   & 3005 & 1677 & 882  & - & 12015 & 81 \\
\texttt{b\_u\_true\_6.i}   & 1427   & 2774 & 1594 & 729  & - & 14324 & 59 \\
\texttt{b\_u\_true\_7.i}   & 1427   & 2902 & 1687 & 1112 & - & 13406 & 52 \\
\texttt{b\_u\_true\_8.i}   & 1418   & 3885 & 1612 & 706  & - & 13195 & 46 \\
\texttt{cl\_n\_false\_1.i}  & 39     & 2483 & 2336 & 1124 & 1569 & 9532 & 73 \\
\texttt{cl\_n\_false\_2.i}  & 23     & 2420 & 3368 & 2148 & 1577 & 10176 & 65 \\
\texttt{cl\_n\_false\_3.c}  & 30     & 2594 & 2456 & 845  & 3650 & 10868 & 61 \\
\texttt{cl\_n\_false\_4.c}  & 32     & 2508 & 2863 & 1071 & 3519 & 10238 & 59 \\
\texttt{cl\_n\_false\_5.c}  & 35     & 3506 & 2527 & 992  & 3064 & 10503 & 60 \\
\texttt{cl\_n\_false\_6.c}  & 28     & 2592 & 2550 & 1613 & 2998 & 10024 & 88 \\
\texttt{cl\_n\_false\_7.c}  & 32     & 2626 & 2587 & 1775 & 2804 & 9673 & 64 \\
\texttt{cl\_n\_false\_8.c}  & 23     & 2468 & 2447 & 843  & 2931 & 10829 & 51 \\
\texttt{cl\_n\_false\_9.c}  & 24     & 2590 & 2666 & 1079 & 2778 & 10429 & 55 \\
\texttt{cl\_n\_false\_10.c} & 29     & 2695 & 2132 & 1699 & 2621 & 11228 & 60 \\
\texttt{cl\_n\_false\_11.c} & 39     & 2823 & 2633 & 7174 & 2726 & 10037 & 60 \\
\texttt{cl\_n\_false\_12.c} & 39     & 2930 & 3026 & 3657 & 2120 & 10191 & 62 \\
\texttt{cl\_n\_false\_13.c} & 888    & 2761 & $\infty$ & $\infty$ & 2678 & 10254 & 65 \\
\texttt{dl\_n\_false\_14.i} & 888    & 2367 & 3895 & 1657 & 1501 & 9486 & 48 \\
\texttt{dl\_n\_false\_15.i} & 20     & 2382 & 2923 & 1531 & 1399 & 8840 & 71 \\
\texttt{dl\_n\_false\_16.c} & 21     & 2806 & 2393 & 1959 & 2543 & 10196 & 68 \\
\texttt{dl\_n\_false\_17.c} & 27     & 3216 & 2282 & 827  & 3071 & 10443 & 90 \\
\texttt{dl\_n\_false\_18.c} & 26     & 2936 & 2211 & 1077 & 2648 & 12272 & 63 \\
\texttt{dl\_n\_false\_19.c} & 893    & 2912 & 3663 & 945  & 2683 & 10543 & 55 \\
\texttt{dl\_n\_false\_20.c} & 22     & 2586 & 2205 & 848  & 1867 & 9936 & 54 \\
\texttt{dl\_n\_false\_21.c} & 24     & 2468 & 3647 & 871  & 1859 & 9907 & 54 \\
\texttt{dl\_n\_false\_22.c} & 30     & 2623 & 2653 & 8510 & 3100 & 11235 & 57 \\
\texttt{dl\_n\_false\_23.c} & 23     & 2486 & 2246 & 1839 & 3224 & 9542 & 54 \\
\texttt{dl\_n\_false\_24.c} & 26     & 3017 & 2246 & 818  & 2855 & 11450 & 59 \\
\texttt{dl\_n\_false\_25.c} & 893    & 2527 & 2408 & 13061& 3184 & 9579 & 64 \\
\texttt{cl\_n\_true\_1.c}   & 888    & 2337 & 2363 & 1432 & 3405 & 9468 & 61 \\
\texttt{cl\_n\_true\_2.c}   & 34     & 2431 & 2446 & 5002 & 3083 & 9542 & 61 \\
\texttt{cl\_n\_true\_3.c}   & 25     & 2309 & 3457 & 1067 & 2851 & 9703 & 54 \\
\texttt{cl\_n\_true\_4.c}   & 30     & 2361 & 2580 & 1049 & 3285 & 9335 & 65 \\
\texttt{cl\_n\_true\_5.c}   & 39     & 2527 & 2371 & 9480 & 2010 & 9301 & 89 \\
\texttt{dl\_n\_true\_6.c}   & 23     & 2248 & 2196 & 824  & 2547 & 10063 & 54 \\
\texttt{dl\_n\_true\_7.c}   & 28     & 2455 & 2473 & 982  & 1857 & 9567 & 62 \\
\texttt{dl\_n\_true\_8.c}   & 28     & 2260 & 2152 & 1440 & 2547 & 9702 & 52 \\
\texttt{dl\_n\_true\_9.c}   & 23     & 2342 & 2210 & 970  & 2843 & 9225 & 59 \\
\texttt{dl\_n\_true\_10.i}  & 889    & 2477 & 3555 & 869  & 1632 & 8650 & 49 \\
\texttt{dl\_n\_true\_11.i}  & 28     & 2349 & 2298 & 898  & 1546 & 9943 & 47 \\
\texttt{o\_n\_false\_1.c}  & 26     & 2856 & 2581 & 787  & 1955 & 10821 & 55 \\
\texttt{o\_n\_false\_2.c}  & 47     & 2696 & 2217 & 1065 & 3359 & 9970 & 61 \\
\texttt{o\_n\_false\_3.c}  & 59     & 2731 & 2264 & 796  & 2742 & 11350 & 62 \\
\texttt{o\_n\_false\_4.c}  & 30     & 2450 & 2149 & 841  & 1768 & 9740 & 57 \\
\texttt{o\_n\_false\_5.c}  & 35     & 2442 & 2187 & 868  & 2627 & 9984 & 57 \\
\texttt{o\_n\_false\_6.c}  & 28     & 2470 & 2139 & 842  & 1788 & 9960 & 60 \\
\texttt{o\_n\_true\_1.c}   & 34     & 2903 & 3598 & 851  & 2723 & 9789 & 68 \\
\texttt{o\_n\_true\_2.c}   & 63     & 2424 & 2197 & 905  & 2936 & 9641 & 74 \\
\texttt{o\_n\_true\_3.c}   & 29     & $\infty$ & 3171 & $\infty$ & 2874 & 10716 & 61 \\
\texttt{o\_n\_true\_4.c}   & 28     & 2427 & 3590 & 944  & 3066 & 9132 & 60 \\
\texttt{o\_n\_true\_5.c}   & 41     & 2349 & 2174 & 840  & 2860 & 9391 & 59 \\
\texttt{o\_n\_true\_6.c}   & 40     & 2274 & 2171 & 881  & 2947 & 9371 & 61 \\
\texttt{o\_n\_true\_7.c}   & 48     & 2512 & 3876 & 1985 & 2996 & 10592 & 74 \\
\texttt{o\_n\_true\_8.c}   & 27     & 2420 & 2185 & 949  & 2904 & 10061 & 56 \\
\texttt{o\_n\_true\_9.c}   & 27     & 2428 & 2125 & 1042 & 1747 & 10613 & 79 \\
\texttt{o\_u\_false\_1.i}  & 938    & 2346 & 1669 & 752  & - & 12889 & 55 \\
\texttt{o\_u\_false\_2.i}  & 926    & 3504 & 2796 & 768  & - & 11322 & 65 \\
\texttt{o\_u\_false\_3.i}  & 924    & 2523 & 2232 & 1445 & - & 11151 & 46 \\
\texttt{o\_u\_false\_4.i}  & 925    & 2743 & 2099 & 894  & - & 13314 & 54 \\
\texttt{o\_u\_false\_5.i}  & 928    & 2393 & 1741 & 755  & - & 11400 & 53 \\
\texttt{o\_u\_true\_1.i}   & 930    & 2328 & 1758 & 888  & - & 12621 & 48 \\
\texttt{o\_u\_true\_2.i}   & 1307   & 2518 & 2400 & 1007 & - & 16589 & 60 \\
\texttt{o\_u\_true\_3.i}   & 1302   & 2752 & 2192 & 852  & - & 11251 & 47 \\
\texttt{o\_u\_true\_4.i}   & 1297   & 2191 & 1730 & 1691 & - & 13574 & 72 \\
\texttt{o\_u\_true\_5.i}   & 924    & 2350 & 1652 & 794    & - & 12891 & 50 \\
\texttt{o\_u\_true\_6.i}   & 925    & 2405 & 1832 & 818   & - & 12031 & 75 \\
\texttt{o\_u\_true\_7.i}   & 925    & 2397 & 1731 & 1075 & - & 12693 & 54 \\
\texttt{o\_u\_true\_8.i}   & 928    & 2265 & 1786 & 793    & - & 12433 & 45 \\
\end{longtable}
}

\subsection{Soundness}

To assess soundness, we use \norace and \unreach benchmarks whose expected verdict is \textbf{FALSE} and compare each tool's
result with the expected verdict. A \textbf{TRUE} verdict in these cases misses a data race and is therefore a false negative;
a single such verdict demonstrates unsound behavior. A \textbf{FALSE} or \textbf{UNKNOWN} verdict is consistent with soundness,
but an \textbf{UNKNOWN} verdict does not establish that the property is false.
\textit{\textsc{Dartagnan} and \textsc{Ultimate Gemcutter} recognize occurrences of barriers and \pthreadOnce but return
\texttt{ERROR} and \texttt{UNKNOWN}, respectively. These inconclusive results provide no evidence of unsoundness or precision.
\textsc{RacerF} does not support \unreach and is therefore not evaluated for that property. Because \textsc{Goblint} is an
overapproximating abstract interpreter, it returns \textbf{UNKNOWN}, rather than \textbf{FALSE}, for the false \unreach properties.
It could nevertheless return \textbf{TRUE}; such a result would be a false negative.}

\subsubsection{Benchmark Results}
\begin{center}

{\scriptsize
\setlength{\tabcolsep}{4pt}
\begin{longtable}{l c c c c c c}
\caption{Unsoundness experiment: \textcolor{red}{TRUE} verdicts for \norace or \unreach when FALSE is expected are considered missed data races,
and indicate unsound behaviour for this tool in this test.} \label{tab:tool_verdicts_false} \\
\toprule
\textbf{Input File}
  & %
  {\textsc{CPAchecker} }
  & %
  {\textsc{Dartagnan} }
  & %
  {\textsc{ESBMC} }
  & %
  {\textsc{RacerF} }
  & %
  {\textsc{UGemCutter} }
  & %
  {\textsc{Goblint} }
  \\
\midrule
\endfirsthead
\multicolumn{7}{c}{\small\bfseries Table \thetable\ continued from previous page} \\
\toprule
\textbf{Input File} & \textsc{CPAchecker} & \textsc{Dartagnan} & \textsc{ESBMC} & \textsc{RacerF} & \textsc{UGemCutter} & \textsc{Goblint} \\
\midrule
\endhead
\bottomrule
\multicolumn{7}{r}{\small Continued on next page} \\
\endfoot
\bottomrule
\endlastfoot
\texttt{b\_n\_false\_1.c}  &  FALSE  &  ERROR  &  FALSE  &  FALSE  &  UNKNOWN  &  UNKNOWN \\
\texttt{b\_n\_false\_2.c}  &  FALSE  &  ERROR  &  FALSE  &  FALSE  &  UNKNOWN  &  UNKNOWN \\
\texttt{b\_n\_false\_3.c}  &  FALSE  &  ERROR  &  FALSE  &  FALSE  &  UNKNOWN  &  UNKNOWN \\
\texttt{b\_n\_false\_4.c}  &  FALSE  &  ERROR  &  FALSE  &  FALSE  &  UNKNOWN  &  UNKNOWN \\
\texttt{b\_n\_false\_5.c}  &  FALSE  &  ERROR  &  FALSE  &  FALSE  &  UNKNOWN  &  UNKNOWN \\
\texttt{b\_u\_false\_1.i}  &  FALSE  &  ERROR  &  ERROR  &  -  &  UNKNOWN  &  UNKNOWN \\
\texttt{b\_u\_false\_2.i}  &  FALSE  &  ERROR  &  ERROR  &  -  &  UNKNOWN  &  UNKNOWN \\
\texttt{b\_u\_false\_3.i}  &  FALSE  &  ERROR  &  ERROR  &  -  &  UNKNOWN  &  UNKNOWN \\
\texttt{b\_u\_false\_4.i}  &  FALSE  &  ERROR  &  ERROR  &  -  &  UNKNOWN  &  UNKNOWN \\
\texttt{b\_u\_false\_5.i}  &  FALSE  &  ERROR  &  ERROR  &  -  &  UNKNOWN  &  UNKNOWN \\
\texttt{b\_u\_false\_6.i}  &  FALSE  &  ERROR  &  ERROR  &  -  &  UNKNOWN  &  UNKNOWN \\
\texttt{cl\_n\_false\_1.i}   &  \textcolor{red}{TRUE}  &  \textcolor{red}{TRUE}  &  \textcolor{red}{TRUE}  &  ERROR  &  \textcolor{red}{TRUE}  &  UNKNOWN \\
\texttt{cl\_n\_false\_2.i}   &  \textcolor{red}{TRUE}  &  \textcolor{red}{TRUE}  &  \textcolor{red}{TRUE}  &  ERROR  &  \textcolor{red}{TRUE}  &  UNKNOWN \\
\texttt{cl\_n\_false\_3.c}   &  FALSE  &  FALSE  &  FALSE  &  FALSE  &  UNKNOWN  &  UNKNOWN \\
\texttt{cl\_n\_false\_4.c}   &  FALSE  &  FALSE  &  FALSE  &  FALSE  &  UNKNOWN  &  UNKNOWN \\
\texttt{cl\_n\_false\_5.c}   &  UNKNOWN  &  FALSE  &  FALSE  &  FALSE  &  UNKNOWN  &  UNKNOWN \\
\texttt{cl\_n\_false\_6.c}   &  UNKNOWN  &  \textcolor{red}{TRUE}  &  \textcolor{red}{TRUE}  &  FALSE  &  UNKNOWN  &  UNKNOWN \\
\texttt{cl\_n\_false\_7.c}   &  FALSE  &  FALSE  &  \textcolor{red}{TRUE}  &  FALSE  &  UNKNOWN  &  UNKNOWN \\
\texttt{cl\_n\_false\_8.c}   &  FALSE  &  FALSE  &  FALSE  &  FALSE  &  UNKNOWN  &  UNKNOWN \\
\texttt{cl\_n\_false\_9.c}   &  FALSE  &  FALSE  &  FALSE  &  FALSE  &  UNKNOWN  &  UNKNOWN \\
\texttt{cl\_n\_false\_10.c}  &  FALSE  &  FALSE  &  FALSE  &  FALSE  &  UNKNOWN  &  UNKNOWN \\
\texttt{cl\_n\_false\_11.c}  &  FALSE  &  FALSE  &  \textcolor{red}{TRUE}  &  FALSE  &  UNKNOWN  &  UNKNOWN \\
\texttt{cl\_n\_false\_12.c}  &  FALSE  &  FALSE  &  FALSE  &  \textcolor{red}{TRUE}  &  UNKNOWN  &  UNKNOWN \\
\texttt{cl\_n\_false\_13.c}  &  FALSE  &  UNKNOWN  &  UNKNOWN  &  FALSE  &  UNKNOWN  &  UNKNOWN \\
\texttt{dl\_n\_false\_14.i}  &  \textcolor{red}{TRUE}  &  \textcolor{red}{TRUE}  &  \textcolor{red}{TRUE}  &  ERROR  &  \textcolor{red}{TRUE}  &  UNKNOWN \\
\texttt{dl\_n\_false\_15.i}  &  \textcolor{red}{TRUE}  &  \textcolor{red}{TRUE}  &  \textcolor{red}{TRUE}  &  ERROR  &  \textcolor{red}{TRUE}  &  UNKNOWN \\
\texttt{dl\_n\_false\_16.c}  &  FALSE  &  FALSE  &  FALSE  &  FALSE  &  UNKNOWN  &  UNKNOWN \\
\texttt{dl\_n\_false\_17.c}  &  FALSE  &  FALSE  &  FALSE  &  FALSE  &  UNKNOWN  &  UNKNOWN \\
\texttt{dl\_n\_false\_18.c}  &  FALSE  &  FALSE  &  FALSE  &  FALSE  &  UNKNOWN  &  UNKNOWN \\
\texttt{dl\_n\_false\_19.c}  &  FALSE  &  FALSE  &  FALSE  &  FALSE  &  UNKNOWN  &  UNKNOWN \\
\texttt{dl\_n\_false\_20.c}  &  UNKNOWN  &  FALSE  &  FALSE  &  \textcolor{red}{TRUE}  &  UNKNOWN  &  UNKNOWN \\
\texttt{dl\_n\_false\_21.c}  &  UNKNOWN  &  \textcolor{red}{TRUE}  &  \textcolor{red}{TRUE}  &  \textcolor{red}{TRUE}  &  UNKNOWN  &  UNKNOWN \\
\texttt{dl\_n\_false\_22.c}  &  FALSE  &  FALSE  &  FALSE  &  FALSE  &  UNKNOWN  &  UNKNOWN \\
\texttt{dl\_n\_false\_23.c}  &  FALSE  &  FALSE  &  FALSE  &  FALSE  &  UNKNOWN  &  UNKNOWN \\
\texttt{dl\_n\_false\_24.c}  &  FALSE  &  FALSE  &  FALSE  &  FALSE  &  UNKNOWN  &  UNKNOWN \\
\texttt{dl\_n\_false\_25.c}  &  UNKNOWN  &  ERROR  &  ERROR  &  FALSE  &  UNKNOWN  &  UNKNOWN \\
\texttt{o\_n\_false\_1.c}  &  \textcolor{red}{TRUE}  &  ERROR  &  \textcolor{red}{TRUE}  &  \textcolor{red}{TRUE}  &  UNKNOWN  &  UNKNOWN \\
\texttt{o\_n\_false\_2.c}  &  \textcolor{red}{TRUE}  &  ERROR  &  \textcolor{red}{TRUE}  &  FALSE  &  UNKNOWN  &  UNKNOWN \\
\texttt{o\_n\_false\_3.c}  &  \textcolor{red}{TRUE}  &  ERROR  &  \textcolor{red}{TRUE}  &  FALSE  &  UNKNOWN  &  UNKNOWN \\
\texttt{o\_n\_false\_4.c}  &  \textcolor{red}{TRUE}  &  ERROR  &  \textcolor{red}{TRUE}  &  \textcolor{red}{TRUE}  &  UNKNOWN  &  UNKNOWN \\
\texttt{o\_n\_false\_5.c}  &  \textcolor{red}{TRUE}  &  ERROR  &  \textcolor{red}{TRUE}  &  FALSE  &  UNKNOWN  &  UNKNOWN \\
\texttt{o\_n\_false\_6.c}  &  \textcolor{red}{TRUE}  &  ERROR  &  \textcolor{red}{TRUE}  &  \textcolor{red}{TRUE}  &  UNKNOWN  &  UNKNOWN \\
\texttt{o\_u\_false\_1.i}  &  \textcolor{red}{TRUE}  &  ERROR  &  ERROR  &  -  &  UNKNOWN  &  UNKNOWN \\
\texttt{o\_u\_false\_2.i}  &  FALSE  &  ERROR  &  ERROR  &  -  &  UNKNOWN  &  UNKNOWN \\
\texttt{o\_u\_false\_3.i}  &  FALSE  &  ERROR  &  ERROR  &  -  &  UNKNOWN  &  UNKNOWN \\
\texttt{o\_u\_false\_4.i}  &  \textcolor{red}{TRUE}  &  ERROR  &  ERROR  &  -  &  UNKNOWN  &  UNKNOWN \\
\texttt{o\_u\_false\_5.i}  &  \textcolor{red}{TRUE}  &  ERROR  &  ERROR  &  -  &  UNKNOWN  &  UNKNOWN \\
\midrule
\multicolumn{1}{l}{\textbf{\color{red}\# false negatives}} & \textbf{13} & \textbf{6} & \textbf{14} & \textbf{6} & \textbf{4} & \textbf{0} \\
\multicolumn{1}{l}{\textbf{barriers}}                      & \textbf{0} & \textbf{0} & \textbf{0} & \textbf{0} & \textbf{0} & \textbf{0} \\
\multicolumn{1}{l}{\textbf{once}}                          & \textbf{9} & \textbf{0} & \textbf{6} & \textbf{3} & \textbf{0} & \textbf{0} \\
\multicolumn{1}{l}{\textbf{\creationLockset}}              & \textbf{2} & \textbf{3} & \textbf{5} & \textbf{1} & \textbf{2} & \textbf{0} \\
\multicolumn{1}{l}{\textbf{\descendantLockset}}            & \textbf{2} & \textbf{3} & \textbf{3} & \textbf{2} & \textbf{2} & \textbf{0} \\
\end{longtable}
}
\end{center}

\subsubsection{Example benchmarks indicating unsound behavior}

To illustrate the unsoundness we found, we present compact benchmarks that the corresponding tools report as data-race-free
even though they contain a data race.

Assertions provide another way to expose unsound behavior. In \textsc{SV-Comp}, assertions are typically encoded using the
\unreach property, which can also reveal false negatives, as one of the examples below shows.

\begin{itemize}
\item \textsc{Racerf} reports this program as data-race-free:\par\vspace*{0.75\baselineskip}

\begin{minted}{c}
// expected verdict: no-data-race false
#include <pthread.h>
#include <stdio.h>
#include <assert.h>

int g = 0;
pthread_once_t once = PTHREAD_ONCE_INIT;

void fun() {
  g++; // RACE
}

void *f(void *p){
  fun();
}
pthread_t id1;

int main(void) {
  pthread_t id;

  pthread_create(&id1, NULL, f, NULL);
  pthread_once(&once, fun);

  return 0;
}
\end{minted}
\par\vspace*{0.75\baselineskip}

\item \textsc{CPAChecker} reports this file as data-race-free:\par\vspace*{0.75\baselineskip}
\begin{minted}{c}
// expected verdict: no-data-race false
extern int __VERIFIER_nondet_int();

#include <pthread.h>

int global = 0;
pthread_mutex_t mutex = PTHREAD_MUTEX_INITIALIZER;
pthread_t id1, id2;

void *t1(void *arg) {
  pthread_mutex_lock(&mutex);
  global++; // RACE!
  pthread_mutex_unlock(&mutex);
  return NULL;
}

// t2 is not protected by mutex locked in main thread,
//   joining may not happen before unlock
void *t2(void *arg) {
  global++; // RACE!
  return NULL;
}

int main(void) {
  pthread_create(&id1, NULL, t1, NULL);
  pthread_mutex_lock(&mutex);
  pthread_create(&id2, NULL, t2, NULL);
  int maybe = __VERIFIER_nondet_int();
  if (maybe) {
    pthread_join(id2, NULL);
  }
  pthread_mutex_unlock(&mutex);
  return 0;
}
\end{minted}
\par\vspace*{0.75\baselineskip}

\item \textsc{ESBMC} and \textsc{CPAChecker} report that the error state in this file is unreachable:\par\vspace*{0.75\baselineskip}
\begin{minted}{c}
// expected verdict: unreach-call false
#include <pthread.h>
#include <stdio.h>
#include <assert.h>

extern void abort(void);
void reach_error() { assert(0); }
// expected: unreach-call is false
#define __VERIFIER_assert(cond) { if((cond)) { reach_error(); abort(); } }

int g = 0;
pthread_once_t once = PTHREAD_ONCE_INIT;

void fun() {
  g++;
}


int main(void) {
  pthread_t id;

  pthread_once(&once, fun);
  pthread_once(&once, fun);

  __VERIFIER_assert(g==1); // g==1 is true here -> reach_error() is reached

  return 0;
}
\end{minted}

\end{itemize}

\subsection{Precision}
To assess precision, we use \norace and \unreach benchmarks whose expected verdict is \textbf{TRUE} and compare each tool's
result with the expected verdict. A \textbf{FALSE} verdict in these cases is a false positive.
\textit{\textsc{Dartagnan} and \textsc{Ultimate Gemcutter} recognize occurrences of barriers and \pthreadOnce but return
\texttt{ERROR} and \texttt{UNKNOWN}, respectively. Neither result demonstrates that the tool can verify the property.
\textsc{RacerF} does not support \unreach and is therefore not evaluated for that property.}

\subsubsection{Benchmark Results}

\begin{center}
{\scriptsize
\setlength{\tabcolsep}{4pt}
\begin{longtable}{l c c c c c c}
\caption{Precision experiment: \textcolor{red}{FALSE} verdicts for \norace or \unreach when TRUE is expected are false positives.} \label{tab:tool_verdicts_true} \\
\toprule
\textbf{Input File}
    & %
    {\textsc{CPAchecker} }
    & %
    {\textsc{Dartagnan} }
    & %
    {\textsc{ESBMC} }
    & %
    {\textsc{RacerF} }
    & %
    {\textsc{UGemCutter} }
    & %
    {\textsc{Goblint} }
    \\
\midrule
\endfirsthead
\multicolumn{7}{c}{\small\bfseries Table \thetable\ continued from previous page} \\
\toprule
\textbf{Input File} & \textsc{CPAchecker} & \textsc{Dartagnan} & \textsc{ESBMC} & \textsc{RacerF} & \textsc{UGemCutter} & \textsc{Goblint} \\
\midrule
\endhead
\bottomrule
\multicolumn{7}{r}{\small Continued on next page} \\
\endfoot
\bottomrule
\endlastfoot
\texttt{b\_n\_true\_1.c}  &  \textcolor{red}{FALSE}  &  ERROR  &  \textcolor{red}{FALSE}  &  \textcolor{red}{FALSE}  &  UNKNOWN  &  TRUE \\
\texttt{b\_n\_true\_2.c}  &  \textcolor{red}{FALSE}  &  ERROR  &  \textcolor{red}{FALSE}  &  \textcolor{red}{FALSE}  &  UNKNOWN  &  TRUE \\
\texttt{b\_n\_true\_3.c}  &  \textcolor{red}{FALSE}  &  ERROR  &  \textcolor{red}{FALSE}  &  \textcolor{red}{FALSE}  &  UNKNOWN  &  TRUE \\
\texttt{b\_n\_true\_4.c}  &  TRUE  &  ERROR  &  TRUE  &  \textcolor{red}{FALSE}  &  UNKNOWN  &  TRUE \\
\texttt{b\_n\_true\_5.c}  &  TRUE  &  ERROR  &  TRUE  &  \textcolor{red}{FALSE}  &  UNKNOWN  &  TRUE \\
\texttt{b\_n\_true\_6.c}  &  TRUE  &  ERROR  &  TRUE  &  \textcolor{red}{FALSE}  &  UNKNOWN  &  TRUE \\
\texttt{b\_n\_true\_7.c}  &  \textcolor{red}{FALSE}  &  ERROR  &  \textcolor{red}{FALSE}  &  \textcolor{red}{FALSE}  &  UNKNOWN  &  TRUE \\
\texttt{b\_u\_true\_1.i}  &  \textcolor{red}{FALSE}  &  ERROR  &  ERROR  &  -  &  UNKNOWN  &  TRUE \\
\texttt{b\_u\_true\_2.i}  &  \textcolor{red}{FALSE}  &  ERROR  &  ERROR  &  -  &  UNKNOWN  &  TRUE \\
\texttt{b\_u\_true\_3.i}  &  \textcolor{red}{FALSE}  &  ERROR  &  ERROR  &  -  &  UNKNOWN  &  TRUE \\
\texttt{b\_u\_true\_4.i}  &  \textcolor{red}{FALSE}  &  ERROR  &  ERROR  &  -  &  UNKNOWN  &  TRUE \\
\texttt{b\_u\_true\_5.i}  &  \textcolor{red}{FALSE}  &  ERROR  &  ERROR  &  -  &  UNKNOWN  &  TRUE \\
\texttt{b\_u\_true\_6.i}  &  \textcolor{red}{FALSE}  &  ERROR  &  ERROR  &  -  &  UNKNOWN  &  TRUE \\
\texttt{b\_u\_true\_7.i}  &  \textcolor{red}{FALSE}  &  ERROR  &  ERROR  &  -  &  UNKNOWN  &  TRUE \\
\texttt{b\_u\_true\_8.i}  &  \textcolor{red}{FALSE}  &  ERROR  &  ERROR  &  -  &  UNKNOWN  &  TRUE \\
\texttt{cl\_n\_true\_1.c}  &  TRUE  &  TRUE  &  TRUE  &  \textcolor{red}{FALSE}  &  UNKNOWN  &  TRUE \\
\texttt{cl\_n\_true\_2.c}  &  TRUE  &  TRUE  &  TRUE  &  TRUE  &  UNKNOWN  &  TRUE \\
\texttt{cl\_n\_true\_3.c}  &  TRUE  &  TRUE  &  TRUE  &  \textcolor{red}{FALSE}  &  UNKNOWN  &  TRUE \\
\texttt{cl\_n\_true\_4.c}  &  TRUE  &  TRUE  &  TRUE  &  \textcolor{red}{FALSE}  &  UNKNOWN  &  TRUE \\
\texttt{cl\_n\_true\_5.c}  &  \textcolor{red}{FALSE}  &  ERROR  &  TRUE  &  TRUE  &  UNKNOWN  &  TRUE \\
\texttt{dl\_n\_true\_6.c}  &  TRUE  &  TRUE  &  TRUE  &  \textcolor{red}{FALSE}  &  UNKNOWN  &  TRUE \\
\texttt{dl\_n\_true\_7.c}  &  TRUE  &  TRUE  &  TRUE  &  TRUE  &  UNKNOWN  &  TRUE \\
\texttt{dl\_n\_true\_8.c}  &  TRUE  &  TRUE  &  TRUE  &  \textcolor{red}{FALSE}  &  UNKNOWN  &  TRUE \\
\texttt{dl\_n\_true\_9.c}  &  TRUE  &  TRUE  &  TRUE  &  \textcolor{red}{FALSE}  &  UNKNOWN  &  TRUE \\
\texttt{dl\_n\_true\_10.i}  &  TRUE  &  TRUE  &  TRUE  &  ERROR  &  TRUE  &  TRUE \\
\texttt{dl\_n\_true\_11.i}  &  TRUE  &  TRUE  &  TRUE  &  ERROR  &  TRUE  &  TRUE \\
\texttt{o\_n\_true\_1.c}  &  TRUE  &  ERROR  &  TRUE  &  \textcolor{red}{FALSE}  &  UNKNOWN  &  TRUE \\
\texttt{o\_n\_true\_2.c}  &  TRUE  &  ERROR  &  TRUE  &  \textcolor{red}{FALSE}  &  UNKNOWN  &  TRUE \\
\texttt{o\_n\_true\_3.c}  &  UNKNOWN  &  ERROR  &  UNKNOWN  &  \textcolor{red}{FALSE}  &  UNKNOWN  &  TRUE \\
\texttt{o\_n\_true\_4.c}  &  TRUE  &  ERROR  &  TRUE  &  \textcolor{red}{FALSE}  &  UNKNOWN  &  TRUE \\
\texttt{o\_n\_true\_5.c}  &  TRUE  &  ERROR  &  TRUE  &  \textcolor{red}{FALSE}  &  UNKNOWN  &  TRUE \\
\texttt{o\_n\_true\_6.c}  &  TRUE  &  ERROR  &  TRUE  &  TRUE  &  UNKNOWN  &  TRUE \\
\texttt{o\_n\_true\_7.c}  &  UNKNOWN  &  ERROR  &  TRUE  &  \textcolor{red}{FALSE}  &  UNKNOWN  &  TRUE \\
\texttt{o\_n\_true\_8.c}  &  TRUE  &  ERROR  &  TRUE  &  \textcolor{red}{FALSE}  &  UNKNOWN  &  TRUE \\
\texttt{o\_n\_true\_9.c}  &  TRUE  &  ERROR  &  TRUE  &  TRUE  &  UNKNOWN  &  TRUE \\
\texttt{o\_u\_true\_1.i}  &  TRUE  &  ERROR  &  ERROR  &  -  &  UNKNOWN  &  TRUE \\
\texttt{o\_u\_true\_2.i}  &  TRUE  &  ERROR  &  ERROR  &  -  &  UNKNOWN  &  TRUE \\
\texttt{o\_u\_true\_3.i}  &  TRUE  &  ERROR  &  ERROR  &  -  &  UNKNOWN  &  TRUE \\
\texttt{o\_u\_true\_4.i}  &  TRUE  &  ERROR  &  ERROR  &  -  &  UNKNOWN  &  TRUE \\
\texttt{o\_u\_true\_5.i}  &  TRUE  &  ERROR  &  ERROR  &  -  &  UNKNOWN  &  TRUE \\
\texttt{o\_u\_true\_6.i}  &  TRUE  &  ERROR  &  ERROR  &  -  &  UNKNOWN  &  TRUE \\
\texttt{o\_u\_true\_7.i}  &  TRUE  &  ERROR  &  ERROR  &  -  &  UNKNOWN  &  TRUE \\
\texttt{o\_u\_true\_8.i}  &  TRUE  &  ERROR  &  ERROR  &  -  &  UNKNOWN  &  TRUE \\
\midrule
\multicolumn{1}{l}{\textbf{\color{red}\# false positives}} & \textbf{13} & \textbf{0} & \textbf{4} & \textbf{20} & \textbf{0} & \textbf{0} \\
\multicolumn{1}{l}{\textbf{barriers}}            & \textbf{12} & \textbf{0} & \textbf{4} & \textbf{7} & \textbf{0} & \textbf{0} \\
\multicolumn{1}{l}{\textbf{once}}                & \textbf{0}  & \textbf{0} & \textbf{0} & \textbf{7} & \textbf{0} & \textbf{0} \\
\multicolumn{1}{l}{\textbf{\creationLockset}}    & \textbf{1}  & \textbf{0} & \textbf{0} & \textbf{3} & \textbf{0} & \textbf{0} \\
\multicolumn{1}{l}{\textbf{\descendantLockset}}  & \textbf{0}  & \textbf{0} & \textbf{0} & \textbf{3} & \textbf{0} & \textbf{0} \\
\end{longtable}
}
\end{center}

\subsubsection{Example benchmarks indicating imprecise behavior}

To demonstrate the kinds of imprecise behavior we found, we present
compact benchmarks that some analyzers report as containing a data race
but that are actually data-race-free. The following is a representative barrier example:

\begin{minted}{c}
// expected verdict: no-data-race true
#include<pthread.h>
#include<stdio.h>
#include<unistd.h>

extern int __VERIFIER_nondet_int();

int g;

pthread_barrier_t barrier;

void* f1(void* ptr) {
    g = 2; // NORACE
    pthread_barrier_wait(&barrier);

    return NULL;
}

int main(int argc, char const *argv[])
{
    int top = __VERIFIER_nondet_int();
    int i = 0;

    pthread_barrier_init(&barrier, NULL, 2);

    pthread_t t1;
    pthread_create(&t1,NULL,f1,NULL);

    pthread_barrier_wait(&barrier);
    g = 3; //NORACE

    return 0;
}
\end{minted}

\fi

\end{document}